\synctex=1
\documentclass[11pt]{article}%

\usepackage{amsmath, amsthm}

\usepackage{bm}
\usepackage{natbib}

\usepackage[width=1.0\textwidth]{caption}
\usepackage{amssymb}
\usepackage{psfrag,graphicx}
\usepackage{epstopdf}

\usepackage{subcaption}

\usepackage{caption}
\usepackage{setspace}

\allowdisplaybreaks

\newcommand{\rd}{\mathrm{d}}

\newcommand{\tP}{\tilde{P}}

\usepackage[right=2.5cm,left=2.5cm,top=1.5cm,bottom=2.5cm]{geometry}
\newtheorem{theorem}{Theorem}

\newtheorem{corollary}{Corollary}

\newtheorem{lemma}{Lemma}

\newtheorem{proposition}{Proposition}
\newtheorem{remark}{Remark}

\newtheorem{assumption}{Assumption}

\begin{document}

\title{Efficient implementation of Markov chain Monte Carlo when using an unbiased
likelihood estimator}

%
%
%
\title{Efficient implementation of Markov chain Monte Carlo when using an unbiased
likelihood estimator}
\author{Arnaud Doucet\\Department of Statistics,\\University of Oxford,\\\texttt{doucet@stats.ox.ac.uk}
\and Michael Pitt\\Department of Economics, \\University of Warwick\\\texttt{m.pitt@warwick.ac.uk}
\and George Deligiannidis\\Department of Statistics,\\University of Oxford\\\texttt{deligian@stats.ox.ac.uk}
\and Robert Kohn\\Department of Economics,\\University of New South Wales\\\texttt{r.kohn@unsw.edu.au}}

\maketitle
\begin{abstract}
When an unbiased estimator of the likelihood is used within a
Metropolis--Hastings chain, it is necessary to trade off the number of Monte
Carlo samples used to construct this estimator against the asymptotic variances of averages computed under this chain.
Many Monte Carlo samples will typically result in Metropolis--Hastings averages with lower asymptotic variances
than the corresponding Metropolis--Hastings averages using fewer samples. However, the computing time required to construct the likelihood
estimator increases with the number of Monte Carlo samples. Under the assumption that the
distribution of the additive noise introduced by the log-likelihood estimator
is Gaussian with variance inversely proportional to the number of Monte Carlo
samples and independent of the parameter value at which it is evaluated, we
provide guidelines on the number of samples to select. We demonstrate our results by considering a
stochastic volatility model applied to stock index returns.
\end{abstract}
\noindent\textit{Keywords:}
Intractable likelihood, Metropolis-Hastings algorithm, Particle filter, Sequential Monte Carlo, State-space model.




\section{Introduction\label{section:introduction}}
The use of unbiased estimators within the Metropolis--Hastings algorithm was
initiated by \cite{lin:2000}, with a surge of interest in these ideas since
their introduction in Bayesian statistics by \cite{beaumont2003estimation}. In a Bayesian context, an unbiased likelihood estimator is commonly constructed using
importance sampling as in \cite{beaumont2003estimation} or particle filters as
in \cite{andrieu:doucet:holenstein:2010}. \cite{andrieu2009pseudo} call this method the pseudo-marginal algorithm, and establish some of its theoretical properties.

Apart from the choice of proposals inherent to any Metropolis--Hastings algorithm, the main practical issue with the pseudo-marginal algorithm is the choice of the number, $N$, of Monte Carlo samples or particles used to estimate the likelihood. For any fixed $N$, the transition kernel of the pseudo-marginal algorithm leaves the posterior distribution of interest invariant. Using many Monte Carlo samples usually results in pseudo-marginal averages with asymptotic variances lower than the corresponding averages using fewer samples, as established 
by \citet{AV14}
for likelihood estimators based on importance sampling. Empirical evidence suggests this result also holds when the likelihood is estimated by particle filters. However, the computing cost of constructing the likelihood estimator increases with $N$. We aim to select $N$ so as to minimize the computational resources necessary to achieve a specified asymptotic variance for a particular pseudo-marginal average. This quantity, which is referred to as the computing time, is typically proportional to $N$ times the asymptotic variance of this average, which is itself a function of $N$. Assuming that the distribution of the additive noise introduced by the log-likelihood estimator
is Gaussian, with a variance inversely proportional to $N$ and independent of
the parameter value at which it is evaluated, this minimization was carried
out in \cite{PittSilvaGiordaniKohn(12)} and in 
\citet{Sherlock2013efficiency}.
%
However, \cite{PittSilvaGiordaniKohn(12)}
assume that the Metropolis--Hastings proposal is the posterior density,
whereas \citet{Sherlock2013efficiency} relax the Gaussian noise assumption, but restrict
themselves to an isotropic normal random walk proposal and assume that the
posterior density factorizes into $d$ independent and identically distributed
components and $d\rightarrow\infty$.

Our article addresses a similar problem but considers general proposal and
target densities and relaxes the Gaussian noise assumption. In this more
general setting, we cannot minimize the computing time, and instead
minimize explicit upper bounds on it. Quantitative results are presented
under a Gaussian assumption. In this scenario, our guidelines are that $N$ should be chosen such that the standard deviation of the
log-likelihood estimator should be around $1.0$ when the Metropolis--Hastings
algorithm using the exact likelihood is efficient and around $1.7$ when it is
inefficient. In most practical scenarios, the
efficiency of the Metropolis--Hastings algorithm using the exact likelihood is unknown as
it cannot be implemented. In these cases, our results suggest selecting a standard deviation around $1.2$.

\section{ Metropolis--Hastings method using an estimated
likelihood\label{SS: sim likelihood}}

We briefly review how an unbiased likelihood estimator may be used within a
Metropolis--Hastings scheme in a Bayesian context. Let $y\in\mathsf{Y}$ be the
observations and $\theta\in\Theta\subseteq\mathbb{R}^{d}$ the parameters of
interest. The likelihood of the observations is denoted by $p(y\mid\theta)$
and the prior for $\theta$ admits a density $p(\theta)$ with respect to
Lebesgue measure so the posterior density of interest is $\pi
(\theta)\propto p(y\mid\theta)p(\theta)$. We slightly abuse notation by using the same symbols for distributions and densities.

The Metropolis--Hastings scheme to sample from $\pi$ simulates a Markov chain
according to the transition kernel%
\begin{equation}
Q_{\textsc{ex}}\left(  \theta,\mathrm{d}\vartheta\right)  =q\left(
\theta,\vartheta\right)  \alpha_{\textsc{ex}}(\theta,\vartheta
)\mathrm{d}\vartheta+\left\{  1-\varrho_{\textsc{ex}}\left(
\theta\right)  \right\}  \delta_{\theta}\left( \mathrm{d}\vartheta\right),
\label{eq:Q_EX}%
\end{equation}
where
\begin{equation}
\alpha_{\textsc{ex}}(\theta,\vartheta)=\min\{1,r_{\textsc{ex}%
}(\theta,\vartheta)\}\text{, \ \ }\varrho_{\textsc{ex}}\left(
\theta\right)  =\int q\left(  \theta,\vartheta\right)  \alpha
_{\textsc{ex}}(\theta,\vartheta)\mathrm{d}\vartheta,\label{eq: accept EX}%
\end{equation}
with $r_{\textsc{ex}}(\theta,\vartheta)=\pi(\vartheta)q\left(
\vartheta,\theta\right)  /\left\{  \pi(\theta)q\left(  \theta,\vartheta
\right)  \right\}$. This Markov chain cannot be simulated if $p(y\mid\theta)$ is intractable.

Assume $p(y\mid\theta)$ is intractable, but we have access to a non-negative
unbiased estimator \linebreak$\widehat{p}(y\mid\theta,U)$ of $p(y\mid\theta)$, where
$U\sim m\left(  \cdot\right)  $ represents all the auxiliary random variables
used to obtain this estimator. In this case, we introduce the joint density
$\overline{\pi}(\theta,u)$ on $\Theta\times\mathcal{U}$, where%
\begin{equation}
\overline{\pi}(\theta,u)=\pi(\theta)m(u)\widehat{p}(y\mid\theta,u)/p(y\mid
\theta). \label{eq:norm_jd}%
\end{equation}
This joint density admits the correct marginal density $\pi(\theta)$, because
$\widehat{p}(y\mid\theta,U)$ is unbiased. The pseudo-marginal algorithm is a
Metropolis--Hastings scheme targeting (\ref{eq:norm_jd}) with proposal density
$q\left(  \theta,\cdot\right)  m\left(  \cdot\right)  $, yielding the acceptance
probability
\begin{equation}
\min\left\{  1,\frac{\widehat{p}(y\mid\vartheta,v)p\left(  \vartheta\right)
q\left(  \vartheta,\theta\right)  }{\widehat{p}(y\mid\theta,u)p\left(
\theta\right)  q\left(  \theta,\vartheta\right)  }\right\}  =\min\left\{
1,\frac{\widehat{p}(y\mid\vartheta,v)/p(y\mid\vartheta)}{\widehat{p}%
(y\mid\theta,u)/p(y\mid\theta)}r_{\textsc{ex}}(\theta,\vartheta
)\right\},  \label{eq:jointutheta}%
\end{equation}
for a proposal $\left(  \vartheta,v\right)  $. In practice, we only record
$\left\{  \theta,\log\widehat{p}(y\mid\theta,u)\right\}  $ instead of
$\left\{  \theta,u\right\}  $. We follow \cite{andrieu2009pseudo} and
\cite{PittSilvaGiordaniKohn(12)} and analyze this scheme using additive noise,
$Z=\log\widehat{p}(y\mid\theta,U)-\log p(y\mid\theta)=\psi(\theta,U)$, in the
log-likelihood estimator, rather than $U$. In this parameterization, the
target density on $\Theta\times\mathbb{R}$ becomes%
\begin{equation}
\overline{\pi}(\theta,z)=\pi(\theta)\exp\left(  z\right)  g(z\mid\theta),
\label{eq:jointztheta}%
\end{equation}
where $g(z\mid\theta)$ is the density of $Z$ when $U\sim m(\cdot)$ and the
transformation $Z=\psi(\theta,U)$ is applied.

To sample from $\overline{\pi}(\theta,z)$, we could use the scheme previously
described to sample from $\overline{\pi}(\theta,u)$ and then set
$z=\psi(\theta,u)$. We can equivalently use the transition kernel%
\begin{align}
Q\left\{  \left(  \theta,z\right)  ,\left(  \mathrm{d}\vartheta,\mathrm{d}%
w\right)  \right\}   &  =q\left(  \theta,\vartheta\right)  g(w\mid
\vartheta)\alpha_{Q}\left\{  \left(  \theta,z\right)  ,\left(  \vartheta
,w\right)  \right\}  \mathrm{d}\vartheta\mathrm{d}%
w\label{eq:transitionkernelQ}\\
&  \text{ \ \ \ \ \ \ \ \ \ \ \ \ \ \ \ \ \ \ \ \ \ }+\left\{  1-\varrho
_{Q}\left(  \theta,z\right)  \right\}  \delta_{\left(  \theta,z\right)
}\left(  \mathrm{d}\vartheta,\mathrm{d}w\right)  ,\nonumber
\end{align}
where
\begin{equation}
\alpha_{Q}\left\{  \left(  \theta,z\right)  ,\left(  \vartheta,w\right)
\right\}  =\min\{1,\exp(w-z)\text{ }r_{\textsc{ex}}(\theta,\vartheta)\}
\label{eq:acceptanceprobaQ}%
\end{equation}
is (\ref{eq:jointutheta}) expressed in the new parameterization. Henceforth,
we make the following assumption.

\begin{assumption}
\label{assumption:noiseorthogonal}The noise density is
independent of $\theta$ and is denoted by $g\left(  z\right)  $.
\end{assumption}

Under this assumption, the target density (\ref{eq:jointztheta}) factorizes as
$\pi(\theta)\pi_{\textsc{z}}\left(  z\right)  $, where%
\begin{equation}
\pi_{\textsc{z}}\left(  z\right)  =\exp\left(  z\right)  g(z). \label{eq:targetinz}%
\end{equation}
Assumption \ref{assumption:noiseorthogonal} allows us to analyze in detail
the performance of the pseudo-marginal algorithm. This simplifying
assumption is not satisfied in practical scenarios. However, in the stationary
regime, we are concerned with the noise density at values of the parameter
which arise from the target density $\pi\left(\theta\right)$ and the marginal density
of the proposals at stationarity $\int \pi\left(\mathrm{d}\vartheta\right)  q\left(\vartheta, \theta \right) $. If the noise density does
not vary significantly in regions of high probability mass of these densities,
then we expect this assumption to be a reasonable approximation. In Section
\ref{sect:applications}, we examine experimentally how the noise density
varies against draws from  $\pi\left(\theta\right)$ and $\int \pi\left(\mathrm{d}\vartheta\right)  q\left(\vartheta, \theta \right) $.

\section{Main results}

\subsection{Outline\label{sec:Qstar}}

This section presents the main contributions of the paper. All the proofs are
in Appendix~1 and in the Supplementary Material. We minimize upper bounds
on the computing time of the pseudo-marginal algorithm, as discussed in Section \ref{section:introduction}. This requires establishing upper bounds on the asymptotic variance of an ergodic average under the kernel $Q$ given in
(\ref{eq:transitionkernelQ}). To
obtain these bounds, we introduce a new Markov kernel $Q^{\ast}$, where
\begin{align}
Q^{\ast}\left\{  \left(  \theta,z\right)  ,\left(  \mathrm{d}\vartheta
,\mathrm{d}w\right)  \right\}   &  =q\left(  \theta,\vartheta\right)
g(w)\alpha_{Q^{\ast}}\left\{  \left(  \theta,z\right)  ,\left(  \vartheta
,w\right)  \right\}  \mathrm{d}\vartheta\mathrm{d}w
\label{eq:MarkovchainQstar}\\
&  \text{ \ \ \ \ \ \ \ \ \ \ \ \ }+\left\{  1-\varrho_{\textsc{ex}%
}\left(  \theta\right)  \varrho_{\textsc{z}}\left(  z\right)  \right\}
\delta_{\left(  \theta,z\right)  }\left(  \mathrm{d}\vartheta,\mathrm{d}%
w\right)  ,\nonumber
\end{align}
and
\begin{align}
\alpha_{Q^{\ast}}\left\{  \left(  \theta,z\right)  ,\left(  \vartheta
,w\right)  \right\}   &  =\alpha_{\textsc{ex}}(\theta,\vartheta
)\alpha_{\textsc{z}}(z,w)\text{, \ \ }\alpha_{\textsc{z}}\left(
z,w\right)  =\min\{1,\exp(w-z)\},\label{eq:acceptanceprobabilityQ*}\\
\varrho_{\textsc{z}}\left(  z\right)   &  =\int g\left(  w\right)
\alpha_{\textsc{z}}\left(  z,w\right)  \mathrm{d}w.
\label{eq:acceptanceprobaz}%
\end{align}

As $Q$ and $Q^{\ast}$ are reversible with respect to $\overline{\pi}$ and the
acceptance probability (\ref{eq:acceptanceprobabilityQ*}) is always smaller
than (\ref{eq:acceptanceprobaQ}), an application of the theorem in
\cite{peskun1973optimum} ensures that the variance of an ergodic average under
$Q^{\ast}$ is greater than or equal to the variance under $Q$. We obtain an exact expression
for the variance under the bounding kernel $Q^{\ast}$ and simpler upper bounds by
exploiting a non-standard representation of this variance, the factor form of the
acceptance probability (\ref{eq:acceptanceprobabilityQ*}) and the\ spectral
properties of an auxiliary Markov kernel.

\subsection{Inefficiency of Metropolis--Hastings type chains
\label{sec:knownlik}}

This section recalls and establishes various results on the integrated
autocorrelation time of Markov chains, henceforth referred to as the
inefficiency. In particular, we present a novel representation of the
inefficiency of Metropolis--Hastings type chains, which is the basic component
of the proof of our main result.

Consider a Markov kernel $\Pi$ on the measurable space $\left(  \mathsf{X}%
,\mathcal{X}\right)  =\left\{  \mathbb{R}^{n},\mathcal{B}\left(
\mathbb{R}^{n}\right)  \right\}  $, where $\mathcal{B}\left(  \mathbb{R}%
^{n}\right)  $ is the Borel $\sigma$-algebra on $\mathbb{R}^{n}$. For any measurable real-valued
function $f$, measurable set $A$ and probability measure $\mu$, we use the
standard notation: $\mu\left(  f\right)  =\int_{\mathsf{X}}\mu\left(
\mathrm{d}x\right)  f\left(  x\right)  $, $\mu\left(  A\right)  =\mu\left\{
\mathbb{I}_{A}\left(  \cdot\right)  \right\}  ,$ $\Pi f\left(  x\right)
=\int_{\mathsf{X}}\Pi\left(  x,\mathrm{d}y\right)  f\left(  y\right)  $ and
for $n\geq2$, $\Pi^{n}\left(  x,\mathrm{d}y\right)  =\int_{\mathsf{X}}%
\Pi^{n-1}\left(  x,\mathrm{d}z\right)  \Pi\left(  z,\mathrm{d}y\right)  $,
with $\Pi^{1}=\Pi$. We introduce the Hilbert spaces
\[
L^{2}\left(  \mathsf{X},\mu\right)  =\left\{  f:\mathsf{X\rightarrow
}\mathbb{R}\text{\ : }\mu\left(  f^{2}\right)  <\infty\right\}  ,\text{ }%
L_{0}^{2}\left(  \mathsf{X},\mu\right)  =\left\{  f:\mathsf{X\rightarrow
}\mathbb{R}\text{\ : }\mu\left(  f\right)  =0,\text{ }\mu\left(  f^{2}\right)
<\infty\right\}
\]
equipped with the inner product $\left\langle f,g\right\rangle _{\mu}=\int
f\left(  x\right)  g\left(  x\right)  \mu\left(  \mathrm{d}x\right)  $. A
$\mu$-invariant and $\psi$-irreducible Markov chain is said to be
ergodic; see \cite{Tierney94} for the definition of $\psi$-irreducibility.
The next result follows directly from \cite{kipnis1986central} and Theorem 4
and Corollary 6 in \cite{rosenthalCLT2007}.

\begin{proposition}
\label{prop:kipnisvaradhan}Suppose $\Pi$ is a $\mu$-reversible and ergodic
Markov kernel. Let $(X_{i})_{i\geqslant 1}$ be a stationary Markov chain evolving according to
$\Pi$ and let $h\in L^{2}\left(  \mathsf{X},\mu \right)  $ be such that $\mu \left(  \bar{h}^{2}\right)>0$ where $\bar{h}=h-\mu \left(  h\right)$. Write
$\phi _{n}\left( h,\Pi \right) =\left \langle \bar{h},\Pi^{n}\bar{h}\right \rangle _{\mu}%
/\mu \left(  \bar{h}^{2}\right)  $ for the autocorrelation at lag $n\geq0$ of $\left \{  h\left(  X_{i}\right)  \right \}  _{i\geq1}$ and $\textsc{\protect\small IF}(h,\Pi)=1+2%
{\textstyle \sum \nolimits_{n=1}^{\infty}}
\phi _{n}\left( h,\Pi \right) $ for the associated inefficiency. Then,
\begin{enumerate}
\item[(i)] there exists a probability measure $e\left( h,\Pi \right)$ on $[-1,1)$ such
that the autocorrelation and inefficiency satisfy the spectral
representations
\begin{equation}
\phi_{n}\left( h,\Pi \right) =%
{\textstyle \int \nolimits_{-1}^{1}}
\lambda^{n}e\left( h,\Pi \right) \left( \mathrm{d}\lambda \right) ,\quad \textsc{\protect\small IF}(h,\Pi)=%
{\textstyle \int \nolimits_{-1}^{1}}
(1+\lambda)(1-\lambda)^{-1}e\left( h,\Pi \right) \left( \mathrm{d}\lambda \right); \label{eq:spectral}%
\end{equation}
\item[(ii)] if $\textsc{\protect\small IF}({h}, {\Pi})<\infty$, then as $n\rightarrow \infty$%
\begin{equation}
n^{-1/2}\sum_{i=1}^{n}\left \{  h(X_{i})-\mu \left(  h\right)  \right \}
{\longrightarrow } \mathcal{N}\left \{  0;\mu \left(  \bar{h}^{2}\right)
\textsc{\protect\small IF}(h,\Pi)\right \},  \label{eq:CLTKipnisVaradhan}%
\end{equation}
in distribution, where $\mathcal{N}\left(  a;b^2\right)  $ denotes the normal distribution with
mean $a$ and variance $b^2$.
\end{enumerate}
\end{proposition}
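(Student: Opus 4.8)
The plan is to exploit $\mu$-reversibility to invoke the spectral theorem, to read off part~(i) from the functional calculus, and to quote the Kipnis--Varadhan theorem for part~(ii). Since $\Pi$ is $\mu$-reversible, the map $f\mapsto\Pi f$ is a bounded self-adjoint operator on $L^2_0\left(\mathsf{X},\mu\right)$ of operator norm at most one, so its spectrum lies in $[-1,1]$. The spectral theorem then provides a projection-valued measure $E(\cdot)$ on $[-1,1]$ with $\Pi^n=\int_{[-1,1]}\lambda^n\,E(\mathrm{d}\lambda)$ for all $n\geq0$. For $h\in L^2\left(\mathsf{X},\mu\right)$ with $\bar h=h-\mu(h)$ and $\mu(\bar h^2)>0$, I would set $e(h,\Pi)(\mathrm{d}\lambda)=\langle\bar h,E(\mathrm{d}\lambda)\bar h\rangle_\mu/\mu(\bar h^2)$ on $[-1,1]$; since $E$ is projection-valued and $\langle\bar h,E([-1,1])\bar h\rangle_\mu=\mu(\bar h^2)$, this is a probability measure, and applying the functional calculus to $\langle\bar h,\Pi^n\bar h\rangle_\mu$ yields the first identity in \eqref{eq:spectral} with $\phi_0(h,\Pi)=1$. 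To see that $e(h,\Pi)$ puts no mass at $1$: an atom of $E$ at $1$ would produce a non-zero $\Pi$-invariant element of $L^2_0\left(\mathsf{X},\mu\right)$, but $\psi$-irreducibility together with $\mu$-invariance (ergodicity) forces $\Pi$-invariant $L^2$ functions to be $\mu$-a.s.\ constant, hence zero in $L^2_0\left(\mathsf{X},\mu\right)$; so $\langle\bar h,E(\{1\})\bar h\rangle_\mu=0$ and $e(h,\Pi)$ is a probability measure on $[-1,1)$.

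For the inefficiency identity I would sum the geometric series: for each $\lambda\in[-1,1)$, $1+2\sum_{n\geq1}\lambda^n=(1+\lambda)(1-\lambda)^{-1}$, so that $\inef(h,\Pi)=\int_{-1}^{1}(1+\lambda)(1-\lambda)^{-1}e(h,\Pi)(\mathrm{d}\lambda)$ once the interchange of summation over $n$ with integration against $e(h,\Pi)$ is justified; this interchange is the crux, and it is supplied by Theorem~4 and Corollary~6 of \cite{rosenthalCLT2007}. Its mechanism is that the even-lag autocorrelations $\phi_{2k}(h,\Pi)=\langle\bar h,\Pi^{2k}\bar h\rangle_\mu/\mu(\bar h^2)=\|\Pi^{k}\bar h\|_\mu^2/\mu(\bar h^2)$ are non-negative, so monotone convergence controls the even lags, while $|\phi_{2k+1}(h,\Pi)|\leq\phi_{2k}(h,\Pi)$ by Cauchy--Schwarz and the fact that $\Pi$ is a contraction on $L^2_0\left(\mathsf{X},\mu\right)$; together these pin down the (Abel/Cesàro) limit of the partial sums as the stated integral, in the value $+\infty$ as well as in the finite case. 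This gives \eqref{eq:spectral} in full.

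Finally, part~(ii) is the Kipnis--Varadhan central limit theorem \cite{kipnis1986central} for stationary reversible ergodic chains: if $\bar h$ lies in the range of $(I-\Pi)^{1/2}$ on $L^2_0\left(\mathsf{X},\mu\right)$---equivalently $\int_{-1}^{1}(1-\lambda)^{-1}e(h,\Pi)(\mathrm{d}\lambda)<\infty$, which by \eqref{eq:spectral} is exactly $\inef(h,\Pi)<\infty$---then $n^{-1/2}\sum_{i=1}^{n}\{h(X_i)-\mu(h)\}$ converges in distribution to a centred Gaussian whose variance is $\lim_{n\to\infty}\var\{n^{-1/2}\sum_{i=1}^{n}h(X_i)\}$. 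I would then identify that limit: $\var\{n^{-1/2}\sum_{i=1}^{n}h(X_i)\}=\mu(\bar h^2)\{1+2\sum_{k=1}^{n-1}(1-k/n)\phi_k(h,\Pi)\}$, and a Cesàro argument using the convergence of the partial sums established in the previous step shows this tends to $\mu(\bar h^2)\,\inef(h,\Pi)$, which is \eqref{eq:CLTKipnisVaradhan}. The main obstacle throughout is the second step---making rigorous the passage from the summed autocorrelations to the spectral integral when the autocorrelation series converges only conditionally and when $\inef(h,\Pi)$ may be infinite---which is precisely what the cited results of \cite{rosenthalCLT2007} are designed to handle.
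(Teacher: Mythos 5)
Your argument is correct and follows exactly the route the paper intends: the paper gives no proof of this proposition, simply asserting that it ``follows directly from'' Kipnis--Varadhan and Theorem~4 and Corollary~6 of H\"{a}ggstr\"{o}m--Rosenthal, and your write-up is a faithful unpacking of those same citations (spectral theorem for the self-adjoint contraction $\Pi$ on $L_0^2$, exclusion of an atom at $1$ via ergodicity, the H\"{a}ggstr\"{o}m--Rosenthal interchange of summation and spectral integration, and the Kipnis--Varadhan CLT with the asymptotic variance identified through a Ces\`{a}ro argument). Nothing further is needed.
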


When estimating $\mu\left(  h\right)  $, equation (\ref{eq:CLTKipnisVaradhan})
implies that we need approximately $n$ $\textsc{\protect\small IF}(h,\Pi)$ samples from the
Markov chain $\left(  X_{i}\right)  _{i\geq1}$ to obtain an estimator of the
same precision as an average of $n$ independent draws from $\mu$.

We consider henceforth a $\mu$-reversible kernel given by%
\[
P\left(  x,\mathrm{d}y\right)  =q\left(  x,\mathrm{d}y\right)  \alpha\left(
x,y\right)  +\left\{  1-\varrho\left(  x\right)  \right\}  \delta_{x}\left(
\mathrm{d}y\right)  \text{, \ \ }\varrho\left(  x\right)  =\int q\left(
x,\mathrm{d}y\right)  \alpha\left(  x,y\right)  ,
\]
where the proposal kernel is selected such that $q(x,\{  x\})=0$,  
$\alpha\left(  x,y\right)  $ is the acceptance probability and we assume there
does not exist an $x$ such that $\mu\left(  \left\{  x\right\}  \right)  =1$.
We refer to $P$ as a Metropolis--Hastings type kernel since it is structurally
similar to the Metropolis--Hastings kernel, but we do not require
$\alpha\left(  x,y\right)  $ to be the
Metropolis--Hastings acceptance probability. This generalization is required
when studying the kernel $Q^{\ast}$ as the acceptance probability
$\alpha_{Q^{\ast}}\left\{  \left(  \theta,z\right)  ,\left(  \vartheta
,w\right)  \right\}  $ in (\ref{eq:acceptanceprobabilityQ*}) is not the
Metropolis--Hastings acceptance probability.

Let $(  X_{i})  _{i\geq1}$ be a Markov chain evolving according to
$P$. We now establish a non-standard expression for $\textsc{\protect\small IF}(h,P)$ derived
from the associated jump\ chain representation $(\widetilde{X}_{i},\tau
_{i})_{i\geq1}$ of $\left(  X_{i}\right)  _{i\geq1}$. In this representation,
$(\widetilde{X}_{i})_{i\geq1}$ corresponds to the sequence of accepted
proposals and $(\tau_{i})_{i\geq1}$ the associated sojourn times, that is
$\widetilde{X}_{1}=X_{1}=\cdots=X_{\tau_{1}},$ $\widetilde{X}_{2}=X_{\tau
_{1}+1}=\cdots=X_{\tau_{1}+\tau_{2}}$ etc., with $\widetilde{X}_{i+1}%
\neq\widetilde{X}_{i}$. Some properties of this jump chain are
now stated; see Lemma 1 in \cite{douc2011vanilla}.

\begin{lemma}
\label{lemm:jumpchain}Let $P$ be $\psi$-irreducible. Then $\varrho \left(
x\right)  >0$ for any $x\in \mathsf{X}$ and $(\widetilde{X}%
_{i},\tau_{i})_{i\geq1}$ is a Markov chain with a $\overline{\mu}$-reversible
transition kernel $\overline{P}$, where%
\begin{equation}
\overline{P}\left \{  \left(  x,\tau \right)  ,\left(  \mathrm{d}%
y,\zeta \right)  \right \}  =\widetilde{P}\left(  x,\mathrm{d}y\right)  G\left \{
\zeta;\varrho \left(  y\right)  \right \},  \quad \overline{\mu
}\left(  \mathrm{d}x,\tau \right)  =\widetilde{\mu}\left(  \mathrm{d}x\right)
G\left \{  \tau;\varrho \left(  x\right)  \right \}  ,\text{ }
\label{eq:transitionjumpjoint}%
\end{equation}
with
\begin{equation}
\widetilde{P}\left(  x,\mathrm{d}y\right)  =\frac{q(x,\mathrm{d}y)\alpha(x,y)}%
{\varrho \left(  x\right)  },\quad \widetilde{\mu}\left(
\mathrm{d}x\right)  =\frac{\mu \left(  \mathrm{d}x\right)  \varrho \left(
x\right)  }{\mu \left(  \varrho \right)  }, \label{eq:transitionjump}%
\end{equation}
and $G\left(  \cdot;\upsilon \right)  $ denotes the geometric distribution with
parameter $\upsilon$.
\end{lemma}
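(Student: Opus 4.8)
The plan is to use the standard ``stay--move'' (jump chain) decomposition of a Metropolis--Hastings type kernel, which reduces the statement to the elementary fact that, started from a fixed state, the number of consecutive rejections is geometric; this is Lemma~1 of \cite{douc2011vanilla}, and what follows is a sketch of that argument in the present notation.

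First I would establish $\varrho(x)>0$ for every $x$. If $\varrho(x_0)=0$ for some $x_0$, then $P(x_0,\cdot)=\delta_{x_0}$ and hence $P^{n}(x_0,\cdot)=\delta_{x_0}$ for all $n\geq1$, so $x_0$ is an absorbing state; together with the invariance of $\mu$ and the $\psi$-irreducibility of $P$ this forces $\mu(\{x_0\})=1$, which is excluded by assumption. Consequently $0<\mu(\varrho)\leq1$, so $\widetilde\mu$ and $\overline\mu$ in (\ref{eq:transitionjumpjoint})--(\ref{eq:transitionjump}) are well-defined probability measures. Positivity of $\varrho$ also ensures that each sojourn time is an almost surely finite geometric variable, so the chain leaves every visited state in finite time, makes infinitely many jumps almost surely, and the pair process $(\widetilde X_i,\tau_i)_{i\geq1}$ is well-defined.

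Next I would identify $\overline P$ and prove the Markov property. The instants $T_i=\tau_1+\dots+\tau_i$ are stopping times for $(X_i)_{i\geq1}$, so by the Markov property of $(X_i)_{i\geq1}$ it is enough to analyse a single sojourn from a fixed state $x$: at each successive step the chain proposes from $q(x,\cdot)$ and accepts with probability $\alpha(x,\cdot)$, independently across steps, so the per-step acceptance probability is $\varrho(x)$, the sojourn length is $G\{\cdot;\varrho(x)\}$, and conditional on the move occurring the accepted state has law $q(x,\mathrm dy)\alpha(x,y)/\varrho(x)=\widetilde P(x,\mathrm dy)$, independently of that sojourn length. Since the post-$T_i$ evolution depends on the past only through $\widetilde X_{i+1}$, and the next sojourn $\tau_{i+1}$ is drawn only after $\widetilde X_{i+1}$ is known, the pair process is Markov with kernel $\overline P\{(x,\tau),(\mathrm dy,\zeta)\}=\widetilde P(x,\mathrm dy)G\{\zeta;\varrho(y)\}$, which does not depend on the current sojourn $\tau$. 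Turning this renewal/strong-Markov argument into a fully rigorous proof, in particular the measurable selection of the accepted proposal, is the one genuinely technical step, and I would invoke \cite{douc2011vanilla} for the details.

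Finally I would verify $\overline\mu$-reversibility of $\overline P$. Substituting (\ref{eq:transitionjumpjoint})--(\ref{eq:transitionjump}) and cancelling the factor $\varrho(x)$,
\[
\overline\mu(\mathrm dx,\tau)\,\overline P\{(x,\tau),(\mathrm dy,\zeta)\}=\frac{\mu(\mathrm dx)\,q(x,\mathrm dy)\,\alpha(x,y)}{\mu(\varrho)}\,G\{\tau;\varrho(x)\}\,G\{\zeta;\varrho(y)\}.
\]
The $\mu$-reversibility of $P$ is equivalent to the symmetry of the measure $\mu(\mathrm dx)\,q(x,\mathrm dy)\,\alpha(x,y)$ under interchanging $x$ and $y$, since the rejection part $\mu(\mathrm dx)\{1-\varrho(x)\}\delta_x(\mathrm dy)$ is automatically symmetric. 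Hence the displayed expression is invariant under $(x,\tau)\leftrightarrow(y,\zeta)$, which is precisely the detailed balance identity for $\overline P$ with respect to $\overline\mu$, completing the proof.
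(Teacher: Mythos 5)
Your proof is correct and follows essentially the same route as the paper: positivity of $\varrho$ is deduced from $\psi$-irreducibility (hence $\mu$-irreducibility, by $\mu$-invariance) together with the assumption that $\mu$ is not a point mass, and the jump-chain structure and reversibility are then obtained from Lemma 1 of \cite{douc2011vanilla}. The only difference is that you additionally sketch the content of that cited lemma (the strong-Markov/renewal argument and the detailed-balance computation), which the paper simply delegates to the reference.
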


The next proposition gives the relationship between $\textsc{\protect\small IF}(h,P)$ and
$\textsc{\protect\small IF}({h/\varrho},{\widetilde{P}})$.

\begin{proposition}
\label{prop:IACTequality}
Assume that $P$ and $\widetilde{P}$ are ergodic, that $h\in L_{0}^{2}\left(\mathsf{X},\mu \right)$ and that $\textsc{\protect\small IF}(h,P)<\infty$.
Then $h/\varrho \in L_{0}^2(\mathsf{X},\tilde{\mu})$,
\begin{equation}
\mu \left( h^{2}\right) \left \{  1+\textsc{\protect\small IF}(h,P)\right \}  =\mu \left(
\varrho \right)  \widetilde{\mu}\left(  h^{2}/\varrho^{2}\right)  \left \{
1+\textsc{\protect\small IF}({h/\varrho},{\widetilde{P}})\right \},  \label{eq:equalityIACT}%
\end{equation}
and $\textsc{\protect\small IF}({h/\varrho},{\widetilde{P}})\leq \textsc{\protect\small IF}(h,P)$.
\end{proposition}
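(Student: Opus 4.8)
The plan is to exploit the jump-chain representation of Lemma~\ref{lemm:jumpchain} together with the spectral representation of the inefficiency from Proposition~\ref{prop:kipnisvaradhan}. The key observation is that if $(X_i)_{i\ge 1}$ evolves according to $P$ and $(\widetilde X_i,\tau_i)_{i\ge 1}$ is its jump chain, then the ergodic sum $\sum_{i=1}^{n} h(X_i)$ can be rewritten, by grouping consecutive equal terms, as $\sum_{j} \tau_j\, h(\widetilde X_j)$. Under stationarity the expected sojourn time in state $x$ is $1/\varrho(x)$, so passing from the $P$-chain to the $\overline P$-chain replaces a time-average of $h$ by a (renormalised) time-average of $h/\varrho$ against $\widetilde\mu$; this is the source of the factor $\varrho$ in \eqref{eq:transitionjump} and of $h/\varrho$ in the statement. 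First I would record the elementary identities $\widetilde\mu(1/\varrho)=1/\mu(\varrho)$ and, for any $f$, $\mu(f)=\mu(\varrho)\,\widetilde\mu(f/\varrho)$ applied to $f=h$ and $f=h^2$; the first gives $\mu(\varrho)\widetilde\mu(h^2/\varrho^2)=\mu(h^2)\cdot\big(\widetilde\mu(h^2/\varrho^2)/\mu(h^2/\varrho)\big)$-type bookkeeping and, combined with $\mu(h)=0$, shows $\widetilde\mu(h/\varrho)=0$, i.e. $h/\varrho\in L^2_0(\mathsf X,\widetilde\mu)$ once square-integrability is checked.

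Next I would establish the variance identity \eqref{eq:equalityIACT}. Using Proposition~\ref{prop:kipnisvaradhan}(ii) for the $P$-chain, $n^{-1/2}\sum_{i=1}^n h(X_i)\Rightarrow \mathcal N\{0;\mu(h^2)\,\textsc{IF}(h,P)\}$ (note $\mu(\bar h^2)=\mu(h^2)$ since $\mu(h)=0$). On the other side, a renewal-reward / CLT-for-jump-chains argument gives a CLT for $\sum_{j=1}^{m}\tau_j h(\widetilde X_j)$ run under $\overline P$; the number of jumps $m$ in $n$ steps satisfies $m/n\to \widetilde\mu(1/\varrho)^{-1}\cdot(\text{const})=\mu(\varrho)$ almost surely, and the asymptotic variance of the jump-chain average picks up $\widetilde\mu(h^2/\varrho^2)\{1+\textsc{IF}(h/\varrho,\widetilde P)\}$ together with the contribution of the conditionally independent geometric sojourn times $G\{\cdot;\varrho(\widetilde X_j)\}$. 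Matching the two expressions for the asymptotic variance of the same sum $\sum_{i=1}^n h(X_i)=\sum_{j=1}^{m}\tau_j h(\widetilde X_j)$ and carefully tracking the time-change factor $\mu(\varrho)$ yields \eqref{eq:equalityIACT}, where the ``$+1$'' terms arise from writing $\textsc{IF}=1+2\sum_{n\ge1}\phi_n$ and absorbing the lag-zero term. The cleanest route avoids CLTs: expand $\textsc{IF}$ via the spectral measure $e(h,P)$, relate $\langle \bar h,P^n\bar h\rangle_\mu$ to $\langle h/\varrho,\widetilde P^n (h/\varrho)\rangle_{\widetilde\mu}$ through the geometric sojourn structure (a resolvent / generating-function computation: summing $\sum_n \phi_n s^n$ on both sides and identifying the pole structure), and read off \eqref{eq:equalityIACT} from the behaviour near $s=1$.

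Finally, the inequality $\textsc{IF}(h/\varrho,\widetilde P)\le \textsc{IF}(h,P)$ follows from \eqref{eq:equalityIACT} once I show the prefactor inequality $\mu(\varrho)\,\widetilde\mu(h^2/\varrho^2)\ge \mu(h^2)$. But $\mu(\varrho)\,\widetilde\mu(h^2/\varrho^2)=\mu(h^2/\varrho)$ by the change-of-measure identity, and $\mu(h^2/\varrho)\ge \mu(h^2)$ because $\varrho(x)\le 1$ for all $x$ (acceptance probabilities are bounded by one, and Lemma~\ref{lemm:jumpchain} guarantees $\varrho>0$ so the ratios are well defined). Hence $\mu(h^2)\{1+\textsc{IF}(h,P)\}\ge \mu(h^2)\{1+\textsc{IF}(h/\varrho,\widetilde P)\}$ after dividing, giving the claim; equality holds iff $\varrho\equiv 1$ $\mu$-a.e., i.e. when $P$ never rejects.

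The main obstacle I anticipate is the rigorous derivation of \eqref{eq:equalityIACT}: one must justify interchanging the time-change with the limit defining the inefficiency, handle the conditional geometric sojourn times (their variance $\operatorname{var}\{\tau\mid x\}=(1-\varrho(x))/\varrho(x)^2$ feeds into the bookkeeping), and ensure the finiteness transfer $\textsc{IF}(h,P)<\infty \Rightarrow h/\varrho\in L^2_0(\widetilde\mu)$ and $\textsc{IF}(h/\varrho,\widetilde P)<\infty$. The generating-function/resolvent approach is the safest, as it reduces everything to an algebraic identity between $\sum_n s^n\langle\bar h,P^n\bar h\rangle_\mu$ and the corresponding jump-chain series, with the geometric distribution contributing an explicit rational factor in $s$.
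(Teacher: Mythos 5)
Your overall strategy is the same as the paper's: pass to the jump chain of Lemma~\ref{lemm:jumpchain}, obtain a central limit theorem for $\sum_{j=1}^{n}\tau_j h(\widetilde X_j)$ by a random time change of the CLT for $\sum_{i=1}^n h(X_i)$, identify the resulting asymptotic variance with $\widetilde{\mu}(h^2/\varrho^2)\{1+\textsc{\protect\small IF}(h/\varrho,\widetilde P)\}$ via Proposition~\ref{prop:kipnisvaradhan}, and deduce the final inequality from $\mu(\varrho)\,\widetilde{\mu}(h^2/\varrho^2)=\mu(h^2/\varrho)\ge\mu(h^2)$ because $0<\varrho\le 1$. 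All the change-of-measure identities you record are correct and coincide with the paper's.

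However, the two steps you flag as anticipated obstacles are genuine gaps, and they are where essentially all of the work in the paper's proof lives. First, the finiteness transfer: you cannot invoke Proposition~\ref{prop:kipnisvaradhan} on the jump chain to identify its asymptotic variance until you know $h/\varrho\in L_0^2(\mathsf X,\widetilde{\mu})$ and $\textsc{\protect\small IF}(h/\varrho,\widetilde P)<\infty$, and neither follows from elementary bookkeeping because $\varrho$ has no positive lower bound, so $\widetilde{\mu}(h^2/\varrho^2)=\mu(h^2/\varrho)/\mu(\varrho)$ could a priori be infinite. The paper devotes three lemmas of functional analysis to this point: the identity $(I-P)g=\varrho\,(I-\widetilde P)g$ shows that $h\mapsto h/\varrho$ is bounded from $\mathrm{Range}(I-P)$, equipped with the norm $\|h\|_{\mu}^2+\|(I-P)^{-1/2}h\|_{\mu}^2$, into the analogous space for $\widetilde P$; one then proves density of $\mathrm{Range}(I-P)$, extends by continuity using that $(I-\widetilde P)^{-1/2}$ is a closed operator, and identifies the extension with $h/\varrho$ via almost-surely convergent subsequences. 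Your suggested resolvent route rests on the same identity $I-P=\varrho\,(I-\widetilde P)$ and yields the equality formally in one line, but making it rigorous for $h$ that lies only in the form domain of $(I-P)^{-1}$ requires exactly this extension argument. Second, the CLT matching itself: the sojourn times are not independent of the accepted states and the random index $T_n=\tau_1+\cdots+\tau_n$ must be controlled; the paper does this through the Kipnis--Varadhan decomposition $S_n=M_n+\xi_n$ together with Doob's maximal inequality applied to $M_{T_n}-M_{\lfloor n/\mu(\varrho)\rfloor}$, and it needs an additional lemma (based on Wu and Woodroofe) because the jump chain generated from $X_1\sim\mu$ starts from $\mu$ rather than from its stationary law $\widetilde{\mu}$, so the stationary CLT of Proposition~\ref{prop:kipnisvaradhan} does not apply directly. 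Neither of these steps is routine, so as written the proposal is a correct plan rather than a proof.
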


Lemma \ref{lemm:jumpchain} and Proposition \ref{prop:IACTequality} are used in Section \ref{sec:bound1} to establish a
representation of the inefficiency for the kernel $P=Q^{\ast}$.

We conclude this section by establishing some results on the positivity of the Metropolis--Hastings kernel and
its associated jump kernel. Recall that a $\mu$-invariant Markov kernel $\Pi$ is positive if $\left\langle \Pi
h,h\right\rangle _{\mu}\geq0$ for any $h\in L^{2}\left(  \mathsf{X}%
,\mu\right)  $. If $\Pi$ is reversible, then positivity is equivalent to
$e\left(  h,\Pi\right)  \left(  \left[  0,1\right)  \right)  =1$ for all $h\in
L^{2}\left(  \mathsf{X},\mu\right)  $, where $e\left(  h,\Pi\right)  $ is the
spectral measure, and it implies that $\textsc{\protect\small IF}(h,\Pi)\geq1$; see, for
example, \cite{geyer1992practical}. The positivity of the jump kernel  $\widetilde{P}$ associated
with a Metropolis-Hastings kernel $P$ is useful here as several bounds on the inefficiency established subsequently
require the spectral measure of $\widetilde{P}$ to be supported on $ \left[  0,1\right) $. We now give sufficient conditions ensuring
this property by extending Lemma 3.1 of \cite{baxendale2005}. This complements results of \citet{rudolf2013}.

\begin{proposition}
\label{prop:positivityMHjumpMH}Assume $\alpha \left(  x,y\right)  $ is
the Metropolis--Hastings acceptance probability and $\mu \left(  \mathrm{d}%
x\right)  =\mu \left(  x\right)  \mathrm{d}x$. If $P$ is $\psi$-irreducible,
then  $\widetilde{P}$ and $P$ are both positive if one of the following two conditions
is satisfied:
\begin{enumerate}
\item[(i)] $q(x, \mathrm{d}y)=q(x,y)\mathrm{d}y$
is a $\nu$-reversible kernel with $\nu(\mathrm{d}x)=\nu(x)\mathrm{d}x$, $\mu$ is absolutely continuous with respect to $\nu$, and there
exists $r:\mathsf{X\times Z\rightarrow}\mathbb{R}^{+}$ such that $\nu \left(
x\right)  q(  x,y)  =\int r(  x,z)  r(  y,z)
\chi \left(  \mathrm{d}z\right)  $, where $\chi$ is a measure on $\mathsf{Z};$
\item[(ii)] $q(  x,\mathrm{d}y)  =q(  x,y)  \mathrm{d}y$
and there exists $s:\mathsf{X\times Z\rightarrow}\mathbb{R}^{+}$ such that
$q(  x,y)  =\int s(  x,z)  s(  y,z)\chi(\mathrm{d}z)$,  where $\chi$ is a measure on $\mathsf{Z.}$
\end{enumerate}
\end{proposition}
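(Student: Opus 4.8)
The plan is to reduce both positivity claims to the representation hypothesis and Schur-product-type arguments, following the strategy of Lemma~3.1 of \cite{baxendale2005}. First I would treat the positivity of $P$ itself. Write $P = P_0 + (1-\varrho)I$, where $P_0(x,\mathrm{d}y) = q(x,y)\alpha(x,y)\,\mathrm{d}y$ is the sub-Markov ``accept'' part; since the diagonal piece $(1-\varrho(x))\delta_x$ is manifestly a nonnegative operator on $L^2(\mathsf{X},\mu)$, it suffices to show $\langle P_0 h, h\rangle_\mu \ge 0$ for all $h$. Using reversibility, $\mu(\mathrm{d}x) = \mu(x)\mathrm{d}x$ and the Metropolis--Hastings form $\mu(x)q(x,y)\alpha(x,y) = \min\{\mu(x)q(x,y),\,\mu(y)q(y,x)\}$, I would expand $\langle P_0 h,h\rangle_\mu = \iint \min\{\mu(x)q(x,y),\mu(y)q(y,x)\} h(x)h(y)\,\mathrm{d}x\,\mathrm{d}y$ and then use the elementary identity $\min\{a,b\} = \int_0^\infty \mathbb{I}\{t<a\}\mathbb{I}\{t<b\}\,\mathrm{d}t$ to linearize the minimum. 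In case~(ii), where $q(x,y) = \int s(x,z)s(y,z)\chi(\mathrm{d}z)$, one has the pointwise bound expressing the symmetrized kernel as an integral of products $\phi_t(x)\phi_t(y)$ of a single function, so the quadratic form is a superposition of squares $\left(\int \phi_t(x)h(x)\,\mathrm{d}x\right)^2 \ge 0$; in case~(i) the same works after absorbing the density ratio, using that $\mu \ll \nu$ and the factorization $\nu(x)q(x,y) = \int r(x,z)r(y,z)\chi(\mathrm{d}z)$. The key technical point is to verify that the set $\{(x,y): \mu(x)q(x,y) \le \mu(y)q(y,x)\}$ (or its complement) interacts well with the integral representation so that the linearized integrand is genuinely a rank-one kernel in $(x,y)$ for each slicing parameter.

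Next I would deduce positivity of the jump kernel $\widetilde{P}(x,\mathrm{d}y) = q(x,\mathrm{d}y)\alpha(x,y)/\varrho(x) = P_0(x,\mathrm{d}y)/\varrho(x)$, which is $\widetilde{\mu}$-reversible with $\widetilde{\mu}(\mathrm{d}x) = \mu(\mathrm{d}x)\varrho(x)/\mu(\varrho)$ by Lemma~\ref{lemm:jumpchain}. For $h \in L^2(\mathsf{X},\widetilde{\mu})$,
\[
\langle \widetilde{P}h,h\rangle_{\widetilde{\mu}} = \frac{1}{\mu(\varrho)}\iint \frac{q(x,y)\alpha(x,y)}{\varrho(x)} h(x)h(y)\,\varrho(x)\mu(x)\,\mathrm{d}x\,\mathrm{d}y = \frac{1}{\mu(\varrho)}\iint \mu(x)q(x,y)\alpha(x,y) h(x)h(y)\,\mathrm{d}x\,\mathrm{d}y,
\]
which is exactly $\mu(\varrho)^{-1}\langle P_0 h,h\rangle_\mu \ge 0$ by the computation above (now applied to $h$, which lies in $L^2(\mu)$ since $\varrho \le 1$). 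Thus positivity of $\widetilde{P}$ follows from the same quadratic-form inequality, with no extra work beyond bookkeeping of the reference measures. Finally, since $\widetilde{P}$ and $P$ are both reversible and positive, the stated consequence $e(h,\cdot)([0,1))=1$ and $\textsc{\small IF} \ge 1$ follows from the spectral representation~\eqref{eq:spectral} as recalled in the text.

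The main obstacle will be the first step: making rigorous the claim that under hypothesis~(i) or~(ii) the symmetrized accept-kernel $\min\{\mu(x)q(x,y),\mu(y)q(y,x)\}$ admits (after the $\int_0^\infty$ linearization of the minimum) a representation as $\int \phi_t(x)\phi_t(y)\,\rho(\mathrm{d}t)$ for suitable nonnegative $\phi_t$ and measure $\rho$, so that the quadratic form is visibly nonnegative. This is precisely where the Baxendale-type argument enters, and care is needed in case~(i) to handle the density $\mathrm{d}\mu/\mathrm{d}\nu$ and to check integrability so that Fubini applies when interchanging the $t$-integral with the $x,y$-integrals. Everything else is routine operator bookkeeping.
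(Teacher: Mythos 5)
Your overall architecture is close to the paper's: both arguments reduce everything to nonnegativity of the quadratic form $\iint f(x)f(y)\,\mu(x)q(x,y)\alpha(x,y)\,\mathrm{d}x\,\mathrm{d}y$, which is simultaneously $\langle P_{0}f,f\rangle_{\mu}$ and $\mu(\varrho)\langle f,\widetilde{P}f\rangle_{\widetilde{\mu}}$, and both invoke the layer-cake identity together with the integral factorization of $q$ in the style of Baxendale. (The paper runs the deduction in the opposite order --- positivity of $\widetilde{P}$ first, then $P$ via $\langle f,Pf\rangle_{\mu}=\mu(\varrho)\langle f,\widetilde{P}f\rangle_{\widetilde{\mu}}+\mu\{(1-\varrho)f^{2}\}$ --- and that is the cleaner direction, because $\varrho\leq1$ gives $L^{2}(\mathsf{X},\mu)\subseteq L^{2}(\mathsf{X},\widetilde{\mu})$ and not the inclusion you need: your parenthetical claim that an $h\in L^{2}(\mathsf{X},\widetilde{\mu})$ ``lies in $L^{2}(\mu)$ since $\varrho\leq1$'' is backwards, although the quadratic-form identity itself survives without it.)

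The genuine gap is the step you yourself flag as the main obstacle, and the way you have set it up would not close it. You propose to linearize $\min\{\mu(x)q(x,y),\mu(y)q(y,x)\}$ directly via $\min\{a,b\}=\int_{0}^{\infty}\mathbb{I}\{t<a\}\mathbb{I}\{t<b\}\,\mathrm{d}t$; but the resulting indicator $\mathbb{I}\{t<\mu(x)q(x,y)\}$ depends jointly on $(x,y)$, so the $t$-slices are not rank-one kernels and no sum-of-squares structure appears. The missing idea is to factor out the symmetric, factorizable part of the kernel \emph{before} linearizing. Under (i), $\nu$-reversibility gives $\mu(y)q(y,x)=\{\mu(y)/\nu(y)\}\,\nu(x)q(x,y)$, hence
\[
\min\{\mu(x)q(x,y),\mu(y)q(y,x)\}=\nu(x)q(x,y)\,\min\left\{\frac{\mu(x)}{\nu(x)},\frac{\mu(y)}{\nu(y)}\right\},
\]
and only now does the layer-cake identity, applied to the minimum of the two \emph{univariate} ratios, produce for each $(t,z)$ the product $r(x,z)\mathbb{I}\{t<\mu(x)/\nu(x)\}\cdot r(y,z)\mathbb{I}\{t<\mu(y)/\nu(y)\}$, so that after Fubini the form becomes $\int_{0}^{\infty}\int\left[\int f(x)r(x,z)\mathbb{I}\{t<\mu(x)/\nu(x)\}\mathrm{d}x\right]^{2}\chi(\mathrm{d}z)\,\mathrm{d}t\geq0$. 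Case (ii) is the special instance with $\nu$ equal to Lebesgue measure, where symmetry of $q$ reduces the kernel to $q(x,y)\min\{\mu(x),\mu(y)\}$ and Lemma 3.1 of Baxendale (2005) applies directly, as the paper notes. Without this reversibility-based factorization your ``key technical point'' remains unproved, so the argument does not go through as written.
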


\begin{remark}
Condition (i) is satisfied for an independent proposal  $q\left(  x,y\right)  =\nu \left(  y\right)$ by 
taking $\mathsf{Z=}\left \{  1\right \}  $, $\chi \left(  \mathrm{d}z\right)
=\delta_{1}\left(  \mathrm{d}z\right)  $ and $r\left(  x,1\right)  =\nu(
x).$  It is also satisfied for autoregressive positively correlated proposals
with normal or Student-t innovations. Condition (ii) holds if $q\left(
x,y\right)  $ is a symmetric random walk proposal whose increments are
multivariate normal or Student-t.
\end{remark}

\subsection{Inefficiency of the bounding chain\label{sec:bound1}}

This section applies the results of Section \ref{sec:knownlik} to establish an
exact expression for $\textsc{\protect\small IF}(h,Q^{\ast})$. The next lemma shows that
$\textsc{\protect\small IF}(h,Q^{\ast})$ is an upper bound on $\textsc{\protect\small IF}(h,Q)$.

\begin{lemma}
\label{Lemma:Peskun}The kernel $Q^{\ast}$ is $\overline{\pi}$-reversible and $\textsc{\protect\small IF}(h,Q)\leq \textsc{\protect\small IF}(h,Q^{\ast})$
for any $h\in L^{2}\left(  \Theta \times \mathbb{R},\overline{\pi}\right)  $.
\end{lemma}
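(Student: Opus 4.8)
The plan is to verify the two assertions of Lemma~\ref{Lemma:Peskun} in turn: first that $Q^{\ast}$ is $\overline{\pi}$-reversible, and then that its inefficiency dominates that of $Q$ via a Peskun-ordering argument.

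\textbf{Reversibility of $Q^{\ast}$.} Under Assumption~\ref{assumption:noiseorthogonal}, $\overline{\pi}(\theta,z)=\pi(\theta)\pi_{\textsc{z}}(z)$ with $\pi_{\textsc{z}}(z)=\exp(z)g(z)$, so it suffices to check the detailed balance relation for the off-diagonal part of the kernel, namely that $\pi(\theta)\pi_{\textsc{z}}(z)\,q(\theta,\vartheta)g(w)\,\alpha_{Q^{\ast}}\{(\theta,z),(\vartheta,w)\}$ is symmetric under the swap $(\theta,z)\leftrightarrow(\vartheta,w)$. Because $\alpha_{Q^{\ast}}$ factorizes as $\alpha_{\textsc{ex}}(\theta,\vartheta)\,\alpha_{\textsc{z}}(z,w)$, this splits into two independent checks. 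The $\theta$-component, $\pi(\theta)q(\theta,\vartheta)\alpha_{\textsc{ex}}(\theta,\vartheta)$, is symmetric because $\alpha_{\textsc{ex}}$ is the Metropolis--Hastings acceptance probability with ratio $r_{\textsc{ex}}$: the standard identity $\pi(\theta)q(\theta,\vartheta)\min\{1,r_{\textsc{ex}}(\theta,\vartheta)\}=\pi(\vartheta)q(\vartheta,\theta)\min\{1,r_{\textsc{ex}}(\vartheta,\theta)\}$ applies. The $z$-component, $\pi_{\textsc{z}}(z)g(w)\alpha_{\textsc{z}}(z,w)=\exp(z)g(z)g(w)\min\{1,\exp(w-z)\}=g(z)g(w)\min\{\exp(z),\exp(w)\}$, is manifestly symmetric in $(z,w)$. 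Multiplying the two symmetric pieces gives detailed balance for the continuous part; the point-mass part is automatically in balance, and the rejection probability $1-\varrho_{\textsc{ex}}(\theta)\varrho_{\textsc{z}}(z)$ is precisely the total mass removed, so $Q^{\ast}$ is a genuine Markov kernel that is $\overline{\pi}$-reversible.

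\textbf{Peskun domination.} Having established reversibility of both $Q$ (stated in the excerpt) and $Q^{\ast}$, I would invoke Peskun's theorem (as quoted in Section~\ref{sec:Qstar}, via \cite{peskun1973optimum} and its extension to general state spaces): if two $\overline{\pi}$-reversible kernels satisfy $Q^{\ast}\{(\theta,z),A\}\le Q\{(\theta,z),A\}$ for every measurable $A$ not containing $(\theta,z)$, then $\textsc{\protect\small IF}(h,Q)\le\textsc{\protect\small IF}(h,Q^{\ast})$ for all $h\in L^{2}(\overline{\pi})$. The off-diagonal densities are $q(\theta,\vartheta)g(w)\alpha_{Q}\{(\theta,z),(\vartheta,w)\}$ and $q(\theta,\vartheta)g(w)\alpha_{Q^{\ast}}\{(\theta,z),(\vartheta,w)\}$, so it reduces to the pointwise inequality $\alpha_{Q^{\ast}}\le\alpha_{Q}$, i.e.
\begin{equation}
\alpha_{\textsc{ex}}(\theta,\vartheta)\,\min\{1,\exp(w-z)\}\le\min\{1,\exp(w-z)\,r_{\textsc{ex}}(\theta,\vartheta)\}.\nonumber
\end{equation}
This follows by elementary casework on whether $r_{\textsc{ex}}(\theta,\vartheta)$ and $\exp(w-z)$ exceed $1$: writing $a=\exp(w-z)>0$ and $r=r_{\textsc{ex}}(\theta,\vartheta)>0$, the left side is $\min\{1,r\}\min\{1,a\}=\min\{1,a,r,ar\}$ while the right side is $\min\{1,ar\}$, and $\min\{1,a,r,ar\}\le\min\{1,ar\}$ trivially. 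Hence $\alpha_{Q^{\ast}}\le\alpha_{Q}$ everywhere, so the Peskun condition holds and the inequality $\textsc{\protect\small IF}(h,Q)\le\textsc{\protect\small IF}(h,Q^{\ast})$ follows.

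\textbf{Main obstacle.} The only genuinely delicate point is confirming that the general-state-space version of Peskun's ordering applies in the form needed here, since \cite{peskun1973optimum} treats the finite case; I would cite the standard extension (e.g. Tierney's ordering result) and note that $Q$ and $Q^{\ast}$ share the same proposal mechanism $q(\theta,\vartheta)g(w)$ and differ only through the acceptance probability, which is exactly the setting in which the ordering is cleanest. Everything else is a short computation, so I expect the proof to be brief.
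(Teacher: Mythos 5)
Your proof is correct and follows essentially the same route as the paper's: the paper likewise notes $\min(1,a)\min(1,b)\leq\min(1,ab)$ to get $\alpha_{Q^{\ast}}\leq\alpha_{Q}$ pointwise and then invokes Theorem 4 of Tierney (1998) as the general-state-space extension of Peskun's theorem, treating the reversibility of $Q^{\ast}$ as straightforward (which you verify explicitly via the product form of $\overline{\pi}$ and $\alpha_{Q^{\ast}}$). The one "delicate point" you flag is resolved exactly as you anticipate.
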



In practice, we are only interested in functions $h\in L^{2}\left(  \Theta
,\pi\right)  $. To simplify notation, we write $\textsc{\protect\small IF}(h,Q)$ in this case,
instead of introducing the function $\widetilde{h}\in L^{2}\left(
\Theta\times\mathbb{R},\overline{\pi}\right)  $ satisfying $\widetilde{h}%
\left(  \theta,z\right)  =h\left(  \theta\right)  $ for all $z\in\mathbb{R}$
and writing $\textsc{\protect\small IF}(\widetilde{h},Q)$. Proposition
\ref{prop:IACTequality} shows that it is possible to express $\textsc{\protect\small IF}%
(h,Q^{\ast})$ as a function of the inefficiency of its jump kernel
$\widetilde{Q}^{\ast}$, which is particularly useful as $\widetilde{Q}^{\ast}$
admits a simple structure.

\begin{lemma}
\label{Lemma:jumpchainQ*}Assume $Q^{\ast}$ is $\overline{\pi}$-irreducible.
The jump kernel $\widetilde{Q}^{\ast}$ associated with $Q^{\ast}$ is%
\begin{equation}
\widetilde{Q}^{\ast}\left \{  \left(  \theta,z\right)  ,\left(  \mathrm{d}%
\vartheta,\mathrm{d}w\right)  \right \}  =\widetilde{Q}_{\textsc{ex}}\left(
\theta,\mathrm{d}\vartheta \right)  \widetilde{Q}_{\textsc{z}}\left(
z,\mathrm{d}w\right), \label{eq:transitionkerneljumpchainQ*1}%
\end{equation}
where
\begin{equation}
\widetilde{Q}_{\textsc{ex}}\left(  \theta,\mathrm{d}\vartheta \right)
=\frac{q\left(  \theta,\vartheta \right)  \alpha_{\textsc{ex}%
}(\theta,\vartheta)\mathrm{d}\vartheta}{\varrho_{\textsc{ex}}\left(  \theta \right)  },\quad \widetilde{Q}_{\textsc{z}}\left(  z,\mathrm{d}w\right)  =\frac{g\left(
w\right)  \alpha_{\textsc{z}}\left(  z,w\right)  \mathrm{d}w}%
{\varrho_{\textsc{z}}\left(  z\right)  }.
\label{eq:transitionkernelsofjumpchainQ*}%
\end{equation}
The kernel $\widetilde{Q}_{\textsc{ex}}\left(  \theta,\mathrm{d}\vartheta \right)  $
is reversible with respect to $\widetilde{\pi}\left(  \mathrm{d}\theta \right)
$ and the kernel $\widetilde{Q}_{\textsc{z}}\left(  z,\mathrm{d}w\right)$ is positive and reversible with respect to
$\widetilde{\pi}_{\textsc{z}}\left(  \mathrm{d}z\right)$, where
\[
\widetilde{\pi}\left(  \mathrm{d}\theta \right)  =\frac{\pi \left(
\mathrm{d}\theta \right)  \varrho_{\textsc{ex}}\left(  \theta \right)}{\pi(\varrho_{\textsc{ex}})},\quad \widetilde{\pi}_{\textsc{z}}\left(\mathrm{d}z\right)  =\frac{\pi_{\textsc{z}}\left(\mathrm{d} z\right)  \varrho_{\textsc{z}%
}\left(  z\right)}{\pi_{\textsc{z}}\left(  \varrho_{\textsc{z}}\right)  }.
\]
If $Q^{\ast}$ is ergodic, $h\in L_{0}^{2}\left(  \Theta,\pi \right)  $,
$\textsc{\protect\small IF}(h,Q^{\ast})<\infty$ and $\widetilde{Q}^{\ast}$ is ergodic, then
$h/\varrho_{\textsc{ex}}\in L_{0}^{2}\left(  \Theta,\widetilde{\pi}\right)$,
$\pi_{\textsc{z}}\left(  1/\varrho_{\textsc{z}}\right)  <\infty$, $\textsc{\protect\small IF}\{  h/\left(  \varrho_{\textsc{ex}}%
\varrho_{\textsc{z}}\right)  ,\widetilde{Q}^{\ast}\}  <\infty$ and%
\begin{equation}
\pi \left(  h^{2}\right)  \left \{  1+\textsc{\protect\small IF}({h}, {Q^{\ast}})\right \}  =\pi \left(
\varrho_{\textsc{ex}}\right)  \pi_{\textsc{z}}\left(  1/\varrho_\textsc{z}
\right)  \widetilde{\pi}\left(  h^{2}/\varrho_{\textsc{ex}}%
^{2}\right)  \left[  1+\textsc{\protect\small IF}\left\{  h/\left(  \varrho_{\textsc{ex}}%
\varrho_{\textsc{z}}\right)  ,\widetilde{Q}^{\ast}\right\} \right].
\label{eq:equalityIACTQ*}%
\end{equation}
Additionally, $\pi_{\textsc{z}}\left(  1/\varrho_{\textsc{z}}\right)  <\infty$ ensures that
$\widetilde{Q}_{\textsc{z}}$ is geometrically ergodic and $\textsc{\protect\small IF}(1/\varrho_{\textsc{z}},\widetilde{Q}_{\textsc{z}}) < \infty$.
\end{lemma}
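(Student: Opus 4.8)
The plan is to deduce everything from Lemma~\ref{lemm:jumpchain}, Proposition~\ref{prop:IACTequality} and Proposition~\ref{prop:positivityMHjumpMH}, exploiting the product form of the acceptance probability of $Q^{\ast}$ and the factorisation $\overline{\pi}=\pi\otimes\pi_{\textsc{z}}$ granted by Assumption~\ref{assumption:noiseorthogonal}. Applying Lemma~\ref{lemm:jumpchain} to $Q^{\ast}$ on $\Theta\times\mathbb{R}$, with state $(\theta,z)$, proposal density $q(\theta,\vartheta)g(w)$ and acceptance probability $\alpha_{Q^{\ast}}$, Fubini's theorem gives $\varrho_{Q^{\ast}}(\theta,z)=\varrho_{\textsc{ex}}(\theta)\varrho_{\textsc{z}}(z)$, which is strictly positive everywhere because $Q^{\ast}$ is $\overline{\pi}$-irreducible; hence $\varrho_{\textsc{ex}}>0$, $\varrho_{\textsc{z}}>0$ pointwise, $\widetilde{Q}_{\textsc{ex}}$ and $\widetilde{Q}_{\textsc{z}}$ are well defined, and inserting the factored acceptance probability (\ref{eq:acceptanceprobabilityQ*}) into the jump-kernel formula of Lemma~\ref{lemm:jumpchain} yields (\ref{eq:transitionkerneljumpchainQ*1}). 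Reversibility of $\widetilde{Q}_{\textsc{ex}}$ with respect to $\widetilde{\pi}$ is a one-line detailed-balance check, since $\widetilde{\pi}(\mathrm{d}\theta)\widetilde{Q}_{\textsc{ex}}(\theta,\mathrm{d}\vartheta)\propto\min\{\pi(\theta)q(\theta,\vartheta),\pi(\vartheta)q(\vartheta,\theta)\}\,\mathrm{d}\theta\,\mathrm{d}\vartheta$ is symmetric; similarly $\widetilde{\pi}_{\textsc{z}}(\mathrm{d}z)\widetilde{Q}_{\textsc{z}}(z,\mathrm{d}w)\propto g(z)g(w)\min\{e^{z},e^{w}\}\,\mathrm{d}z\,\mathrm{d}w$ is symmetric, so $\widetilde{Q}_{\textsc{z}}$ is reversible with respect to $\widetilde{\pi}_{\textsc{z}}$. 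Moreover $\alpha_{\textsc{z}}(z,w)=\min\{1,e^{w-z}\}$ is exactly the Metropolis--Hastings ratio $\min\{1,\pi_{\textsc{z}}(w)g(z)/(\pi_{\textsc{z}}(z)g(w))\}$ of the independence sampler $Q_{\textsc{z}}$ with invariant density $\pi_{\textsc{z}}$ and proposal $g$, which is $\pi_{\textsc{z}}$-irreducible because $g>0$ on the support of $\pi_{\textsc{z}}$; hence $\widetilde{Q}_{\textsc{z}}$ is the jump kernel of $Q_{\textsc{z}}$ and is positive by Proposition~\ref{prop:positivityMHjumpMH}(i), taking $\mathsf{Z}=\{1\}$, $\chi=\delta_{1}$ and $r(z,1)=g(z)$ as in the Remark after that proposition.

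For the inefficiency identity I would apply Proposition~\ref{prop:IACTequality} with $P=Q^{\ast}$, $\mu=\overline{\pi}$ and $\varrho=\varrho_{\textsc{ex}}\varrho_{\textsc{z}}$, regarding $h\in L_{0}^{2}(\Theta,\pi)$ as a function on $\Theta\times\mathbb{R}$; writing $\widetilde{\mu}$ for the associated jump-chain law, this gives $h/(\varrho_{\textsc{ex}}\varrho_{\textsc{z}})\in L_{0}^{2}(\widetilde{\mu})$, $\textsc{\protect\small IF}\{h/(\varrho_{\textsc{ex}}\varrho_{\textsc{z}}),\widetilde{Q}^{\ast}\}\le\textsc{\protect\small IF}(h,Q^{\ast})<\infty$ and
\[ \overline{\pi}(h^{2})\{1+\textsc{\protect\small IF}(h,Q^{\ast})\}=\overline{\pi}(\varrho_{\textsc{ex}}\varrho_{\textsc{z}})\,\widetilde{\mu}\{h^{2}/(\varrho_{\textsc{ex}}\varrho_{\textsc{z}})^{2}\}\bigl[1+\textsc{\protect\small IF}\{h/(\varrho_{\textsc{ex}}\varrho_{\textsc{z}}),\widetilde{Q}^{\ast}\}\bigr]. \]
Because $\overline{\pi}=\pi\otimes\pi_{\textsc{z}}$, one has $\widetilde{\mu}=\widetilde{\pi}\otimes\widetilde{\pi}_{\textsc{z}}$, so $\overline{\pi}(h^{2})=\pi(h^{2})$, $\overline{\pi}(\varrho_{\textsc{ex}}\varrho_{\textsc{z}})=\pi(\varrho_{\textsc{ex}})\pi_{\textsc{z}}(\varrho_{\textsc{z}})$ and $\widetilde{\mu}\{h^{2}/(\varrho_{\textsc{ex}}\varrho_{\textsc{z}})^{2}\}=\widetilde{\pi}(h^{2}/\varrho_{\textsc{ex}}^{2})\,\widetilde{\pi}_{\textsc{z}}(\varrho_{\textsc{z}}^{-2})$, with $\widetilde{\pi}_{\textsc{z}}(\varrho_{\textsc{z}}^{-2})=\pi_{\textsc{z}}(1/\varrho_{\textsc{z}})/\pi_{\textsc{z}}(\varrho_{\textsc{z}})$; substituting these, the factor $\pi_{\textsc{z}}(\varrho_{\textsc{z}})$ cancels and (\ref{eq:equalityIACTQ*}) follows. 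For the remaining claims, $\widetilde{\pi}(h/\varrho_{\textsc{ex}})=\pi(h)/\pi(\varrho_{\textsc{ex}})=0$; since $\varrho_{\textsc{ex}}\le 1$ and $\varrho_{\textsc{z}}\le 1$ (averages of acceptance probabilities), $\widetilde{\pi}(h^{2}/\varrho_{\textsc{ex}}^{2})\ge\widetilde{\pi}(h^{2})>0$, strictly because $h$ is non-constant and $\widetilde{\pi}\sim\pi$, and $\widetilde{\pi}_{\textsc{z}}(\varrho_{\textsc{z}}^{-2})\ge 1$; as $h/(\varrho_{\textsc{ex}}\varrho_{\textsc{z}})\in L^{2}(\widetilde{\mu})$ forces the product $\widetilde{\pi}(h^{2}/\varrho_{\textsc{ex}}^{2})\,\widetilde{\pi}_{\textsc{z}}(\varrho_{\textsc{z}}^{-2})$ to be finite, each factor is finite, which gives $h/\varrho_{\textsc{ex}}\in L_{0}^{2}(\Theta,\widetilde{\pi})$ and $\pi_{\textsc{z}}(1/\varrho_{\textsc{z}})<\infty$.

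The hard part is the last assertion: that $\pi_{\textsc{z}}(1/\varrho_{\textsc{z}})<\infty$ implies $\widetilde{Q}_{\textsc{z}}$ is geometrically ergodic and $\textsc{\protect\small IF}(1/\varrho_{\textsc{z}},\widetilde{Q}_{\textsc{z}})<\infty$. I would use a drift--minorisation argument with $V=1/\varrho_{\textsc{z}}\ge 1$. Using the identity $g(w)\min\{1,e^{w-z}\}=\pi_{\textsc{z}}(w)\min\{e^{-w},e^{-z}\}$, and hence $\varrho_{\textsc{z}}(z)=\int\pi_{\textsc{z}}(w)\min\{e^{-w},e^{-z}\}\,\mathrm{d}w$, one obtains
\[ \varrho_{\textsc{z}}(z)\,\widetilde{Q}_{\textsc{z}}V(z)=\int\pi_{\textsc{z}}(w)\min\{e^{-w},e^{-z}\}\,\varrho_{\textsc{z}}(w)^{-1}\,\mathrm{d}w\le\min\bigl\{e^{-z}\,\pi_{\textsc{z}}(1/\varrho_{\textsc{z}}),\,C_{0}\bigr\},\qquad C_{0}:=\int g(w)\,\varrho_{\textsc{z}}(w)^{-1}\,\mathrm{d}w, \]
and $C_{0}<\infty$: on $\{w\ge 0\}$, $g(w)/\varrho_{\textsc{z}}(w)\le\pi_{\textsc{z}}(w)/\varrho_{\textsc{z}}(w)$ and $\pi_{\textsc{z}}(1/\varrho_{\textsc{z}})<\infty$, while on $\{w<0\}$, $\varrho_{\textsc{z}}(w)\ge\varrho_{\textsc{z}}(0)>0$ since $\varrho_{\textsc{z}}$ is non-increasing. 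With $c:=\log\bigl(2\pi_{\textsc{z}}(1/\varrho_{\textsc{z}})\bigr)$, the first bound gives $\widetilde{Q}_{\textsc{z}}V(z)\le\tfrac12 V(z)$ for $z>c$, and for $z\le c$ the second bound, together with $V(z)\le V(c)$ (again since $\varrho_{\textsc{z}}$ is non-increasing), gives $\widetilde{Q}_{\textsc{z}}V(z)\le C_{0}/\varrho_{\textsc{z}}(c)=:b$, so $\widetilde{Q}_{\textsc{z}}V\le\tfrac12 V+b\,\mathbb{I}_{(-\infty,c]}$. As $z\mapsto\min\{1,e^{w-z}\}$ is non-increasing, for $z\le c$ one has $\widetilde{Q}_{\textsc{z}}(z,\mathrm{d}w)\ge g(w)\min\{1,e^{w-c}\}\,\mathrm{d}w=\varrho_{\textsc{z}}(c)\,\nu_{c}(\mathrm{d}w)$ with $\nu_{c}$ a probability measure, so $(-\infty,c]$ is a small set, while $\widetilde{Q}_{\textsc{z}}$ is $\widetilde{\pi}_{\textsc{z}}$-irreducible and aperiodic (e.g.\ by the positivity above); the standard drift--minorisation criterion then gives geometric ergodicity of $\widetilde{Q}_{\textsc{z}}$. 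Since $\widetilde{Q}_{\textsc{z}}$ is also $\widetilde{\pi}_{\textsc{z}}$-reversible, geometric ergodicity is equivalent to a positive $L^{2}(\widetilde{\pi}_{\textsc{z}})$ spectral gap, so by the spectral representation of Proposition~\ref{prop:kipnisvaradhan}(i) the inefficiency is finite on all of $L^{2}(\widetilde{\pi}_{\textsc{z}})$; and $1/\varrho_{\textsc{z}}\in L^{2}(\widetilde{\pi}_{\textsc{z}})$ because $\widetilde{\pi}_{\textsc{z}}(\varrho_{\textsc{z}}^{-2})=\pi_{\textsc{z}}(1/\varrho_{\textsc{z}})/\pi_{\textsc{z}}(\varrho_{\textsc{z}})<\infty$, whence $\textsc{\protect\small IF}(1/\varrho_{\textsc{z}},\widetilde{Q}_{\textsc{z}})<\infty$. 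The obstacle is precisely here, because the unadjusted chain $Q_{\textsc{z}}$ is an independence sampler with unbounded importance weight $e^{z}$ and is in general not geometrically ergodic: the geometric decay has to be recovered for its jump chain, which is what the choice $V=1/\varrho_{\textsc{z}}$ and the elementary identity $g(w)\min\{1,e^{w-z}\}=\pi_{\textsc{z}}(w)\min\{e^{-w},e^{-z}\}$ achieve.
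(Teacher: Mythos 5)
Your proposal is correct and follows essentially the same route as the paper: the product form of the jump kernel via Lemma~\ref{lemm:jumpchain}, positivity of $\widetilde{Q}_{\textsc{z}}$ via Proposition~\ref{prop:positivityMHjumpMH} and the independence-sampler identification, the identity via Proposition~\ref{prop:IACTequality} applied to $Q^{\ast}$ with the tensor factorisation of $\widetilde{\mu}$, and geometric ergodicity via a drift--minorisation argument with the same Lyapunov function $V=1/\varrho_{\textsc{z}}$ and a left half-line small set. The only differences are cosmetic: you obtain monotonicity of $\varrho_{\textsc{z}}$ directly from monotonicity of $z\mapsto\min\{1,e^{w-z}\}$ rather than by differentiating, and your minorising measure is proportional to $g(w)\min\{1,e^{w-c}\}$ rather than $g$ restricted to $[0,z_0]$.
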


The following theorem provides an expression for $\textsc{\protect\small IF}(h,Q^{\ast})$
which decouples the contributions of the parameter and the noise components. The proof exploits the relationships between $\textsc{\protect\small IF}%
(h,Q_{\textsc{ex}})$ and $\textsc{\protect\small IF}(h/\varrho_{\textsc{ex}%
},\widetilde{Q}_{\textsc{ex}})$, $\textsc{\protect\small IF}(h,Q^{\ast})$ and
$\textsc{\protect\small IF}\{h/\left(  \varrho_{\textsc{ex}}\varrho_{\textsc{z}%
}\right)  ,\widetilde{Q}^{\ast}\}$ and the spectral representation (\ref{eq:spectral}) of $\textsc{\protect\small IF}\{h/\left(  \varrho_{\textsc{ex}}
\varrho_{\textsc{z}}\right)  ,\widetilde{Q}^{\ast}\}$. This spectral
representation admits a simple structure due to the product form
(\ref{eq:transitionkerneljumpchainQ*1}) of $\widetilde{Q}^{\ast}$.

\begin{theorem}
\label{Th:ineff boundedness theorem} Let $h\in L^{2}\left( \Theta ,\pi
\right) $.\ Assume that $Q_{\textsc{ex}}$, $Q^{\ast },\widetilde{Q}_{\textsc{ex}}$, $\widetilde{Q}^{\ast }$ are
ergodic with $\textsc{\protect\small IF}(h,Q^{\ast })<\infty $.
Then,
$\textsc{\protect\small IF}({h},Q)\leq \textsc{\protect\small IF}(h,Q^{\ast })$ and%
\begin{multline}
\textsc{\protect\small IF}\left( h,Q^{\ast }\right)
=\frac{1+\textsc{\protect\small IF}(h,Q_{\textsc{ex}})}{\pi_{\textsc{z}}\left( \varrho _{%
\textsc{z}}\right) }-1\\
+\frac{2\left\{ 1+\textsc{\protect\small IF}(h,Q_{\textsc{ex}})\right\} }{1+\textsc{\protect\small IF}(h/\varrho _{\textsc{ex}},%
\widetilde{Q}_{\textsc{ex}})}\left\{\pi_{\textsc{z}}\left( 1/\varrho _{%
\textsc{z}}\right) -\frac{1}{\pi_{\textsc{z}}\left( \varrho _{\textsc{z}%
}\right) }\right\}
\sum_{n=0}^{\infty }\phi _{n}(h/\varrho _{\textsc{%
{ex}}},\widetilde{Q}_{\textsc{ex}}) \phi _{n}(
1/\varrho _{\textsc{z}},\widetilde{Q}_{\textsc{z}}).\label{eq:mainequality}
\end{multline}
\end{theorem}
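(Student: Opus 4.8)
The plan is to combine the two decoupling identities already available—equation~(\ref{eq:equalityIACT}) applied to the pair $(Q_{\textsc{ex}},\widetilde{Q}_{\textsc{ex}})$ and equation~(\ref{eq:equalityIACTQ*}) from Lemma~\ref{Lemma:jumpchainQ*} for $(Q^{\ast},\widetilde{Q}^{\ast})$—with a spectral computation of the inefficiency of the product jump kernel $\widetilde{Q}^{\ast}=\widetilde{Q}_{\textsc{ex}}\otimes\widetilde{Q}_{\textsc{z}}$. The inequality $\textsc{\protect\small IF}(h,Q)\leq\textsc{\protect\small IF}(h,Q^{\ast})$ is just a restatement of Lemma~\ref{Lemma:Peskun}, so the content is the equality~(\ref{eq:mainequality}). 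First I would apply Proposition~\ref{prop:IACTequality} with $P=Q_{\textsc{ex}}$, $\widetilde P=\widetilde{Q}_{\textsc{ex}}$, $\varrho=\varrho_{\textsc{ex}}$, $\mu=\pi$, $\widetilde\mu=\widetilde\pi$ to get
\begin{equation*}
\pi(h^{2})\{1+\textsc{\protect\small IF}(h,Q_{\textsc{ex}})\}=\pi(\varrho_{\textsc{ex}})\,\widetilde\pi(h^{2}/\varrho_{\textsc{ex}}^{2})\{1+\textsc{\protect\small IF}(h/\varrho_{\textsc{ex}},\widetilde{Q}_{\textsc{ex}})\},
\end{equation*}
and similarly quote~(\ref{eq:equalityIACTQ*}). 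Dividing these two identities eliminates the common factor $\pi(\varrho_{\textsc{ex}})\,\widetilde\pi(h^{2}/\varrho_{\textsc{ex}}^{2})$ and gives
\begin{equation*}
1+\textsc{\protect\small IF}(h,Q^{\ast})=\frac{\pi_{\textsc{z}}(1/\varrho_{\textsc{z}})\{1+\textsc{\protect\small IF}(h,Q_{\textsc{ex}})\}}{1+\textsc{\protect\small IF}(h/\varrho_{\textsc{ex}},\widetilde{Q}_{\textsc{ex}})}\,\bigl[1+\textsc{\protect\small IF}\{h/(\varrho_{\textsc{ex}}\varrho_{\textsc{z}}),\widetilde{Q}^{\ast}\}\bigr].
\end{equation*}

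The core step is then to expand $\textsc{\protect\small IF}\{h/(\varrho_{\textsc{ex}}\varrho_{\textsc{z}}),\widetilde{Q}^{\ast}\}$ using the autocorrelation series. Because $\widetilde{Q}^{\ast}$ is the product kernel~(\ref{eq:transitionkerneljumpchainQ*1}) acting on the product stationary law $\widetilde\pi\otimes\widetilde\pi_{\textsc{z}}$, and because $h/(\varrho_{\textsc{ex}}\varrho_{\textsc{z}})$ factorizes as $\{h/\varrho_{\textsc{ex}}\}\cdot\{1/\varrho_{\textsc{z}}\}$, the lag-$n$ autocorrelation factorizes: writing $a=h/\varrho_{\textsc{ex}}$ and $b=1/\varrho_{\textsc{z}}$, a direct computation with $\langle (\widetilde{Q}^{\ast})^{n}(a\otimes b),a\otimes b\rangle = \langle\widetilde{Q}_{\textsc{ex}}^{n}a,a\rangle_{\widetilde\pi}\,\langle\widetilde{Q}_{\textsc{z}}^{n}b,b\rangle_{\widetilde\pi_{\textsc{z}}}$ shows
\begin{equation*}
\phi_{n}(a\otimes b,\widetilde{Q}^{\ast})=\frac{\widetilde\pi(a^{2})\,\phi_{n}(a,\widetilde{Q}_{\textsc{ex}})+\widetilde\pi(a)^{2}}{\widetilde\pi(a^{2})-\widetilde\pi(a)^{2}}\cdot\frac{\widetilde\pi_{\textsc{z}}(b^{2})\,\phi_{n}(b,\widetilde{Q}_{\textsc{z}})+\widetilde\pi_{\textsc{z}}(b)^{2}}{\cdots}-\frac{\widetilde\pi(a)^{2}\widetilde\pi_{\textsc{z}}(b)^{2}}{\cdots};
\end{equation*}
in fact the clean way is to note $\widetilde\pi(a)=\pi(h)/\pi(\varrho_{\textsc{ex}})=0$ since $h\in L^2_0(\Theta,\pi)$, so the cross term vanishes and $\phi_{n}(a\otimes b,\widetilde{Q}^{\ast})$ collapses to a constant-free product; but $b=1/\varrho_{\textsc{z}}$ is \emph{not} centred, so one must keep its mean. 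Carrying the mean-zero/mean-nonzero bookkeeping carefully, one obtains
\begin{equation*}
1+\textsc{\protect\small IF}\{a\otimes b,\widetilde{Q}^{\ast}\}=1+2\sum_{n=1}^{\infty}\phi_{n}(a,\widetilde{Q}_{\textsc{ex}})\,\frac{\widetilde\pi_{\textsc{z}}(b^{2})\phi_{n}(b,\widetilde{Q}_{\textsc{z}})+\widetilde\pi_{\textsc{z}}(b)^{2}}{\widetilde\pi_{\textsc{z}}(b^{2})-\widetilde\pi_{\textsc{z}}(b)^{2}},
\end{equation*}
after which one rewrites $\widetilde\pi_{\textsc{z}}(b)$, $\widetilde\pi_{\textsc{z}}(b^{2})$ in terms of $\pi_{\textsc{z}}(\varrho_{\textsc{z}})$ and $\pi_{\textsc{z}}(1/\varrho_{\textsc{z}})$ using $\widetilde\pi_{\textsc{z}}(\mathrm dz)=\pi_{\textsc{z}}(\mathrm dz)\varrho_{\textsc{z}}(z)/\pi_{\textsc{z}}(\varrho_{\textsc{z}})$, namely $\widetilde\pi_{\textsc{z}}(b)=1/\pi_{\textsc{z}}(\varrho_{\textsc{z}})$ and $\widetilde\pi_{\textsc{z}}(b^{2})=\pi_{\textsc{z}}(1/\varrho_{\textsc{z}})/\pi_{\textsc{z}}(\varrho_{\textsc{z}})$.

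Substituting this back into the displayed ratio for $1+\textsc{\protect\small IF}(h,Q^{\ast})$ and simplifying the algebra—splitting the "$1$" term from the "$2\sum$" term, and using $\sum_{n\geq 1}=\sum_{n\geq 0}-(n{=}0\text{ term})$ where $\phi_{0}=1$ to absorb the constant $\widetilde\pi_{\textsc{z}}(b)^{2}$ pieces—should produce exactly~(\ref{eq:mainequality}): the first term $\{1+\textsc{\protect\small IF}(h,Q_{\textsc{ex}})\}/\pi_{\textsc{z}}(\varrho_{\textsc{z}})-1$ comes from the "$1$" and the $\phi_n(b,\widetilde Q_{\textsc z})$-free parts, and the second term, with its prefactor $2\{1+\textsc{\protect\small IF}(h,Q_{\textsc{ex}})\}/\{1+\textsc{\protect\small IF}(h/\varrho_{\textsc{ex}},\widetilde{Q}_{\textsc{ex}})\}$ and the factor $\pi_{\textsc{z}}(1/\varrho_{\textsc{z}})-1/\pi_{\textsc{z}}(\varrho_{\textsc{z}})$, comes from the genuine product sum $\sum_{n\geq 0}\phi_n(h/\varrho_{\textsc{ex}},\widetilde{Q}_{\textsc{ex}})\phi_n(1/\varrho_{\textsc{z}},\widetilde{Q}_{\textsc{z}})$. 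I would double-check that all the series converge absolutely—this is where one invokes the finiteness assertions from Lemma~\ref{Lemma:jumpchainQ*} ($\pi_{\textsc{z}}(1/\varrho_{\textsc{z}})<\infty$, $\textsc{\protect\small IF}(1/\varrho_{\textsc{z}},\widetilde{Q}_{\textsc{z}})<\infty$, $\textsc{\protect\small IF}\{h/(\varrho_{\textsc{ex}}\varrho_{\textsc{z}}),\widetilde{Q}^{\ast}\}<\infty$) together with $|\phi_n|\leq 1$ and $\textsc{\protect\small IF}(h/\varrho_{\textsc{ex}},\widetilde{Q}_{\textsc{ex}})<\infty$, so that the interchange of summation implicit in splitting off the constant term is justified; positivity of $\widetilde{Q}_{\textsc{z}}$ (Lemma~\ref{Lemma:jumpchainQ*}), giving $\phi_n(1/\varrho_{\textsc{z}},\widetilde{Q}_{\textsc{z}})\geq 0$, makes the rearrangement clean. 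The main obstacle I anticipate is purely bookkeeping rather than conceptual: keeping the normalization constants $\pi(\varrho_{\textsc{ex}})$, $\pi_{\textsc{z}}(\varrho_{\textsc{z}})$, $\pi_{\textsc{z}}(1/\varrho_{\textsc{z}})$ straight through the centred-versus-uncentred autocorrelation identity for the product chain, and making sure the "$-1$" and the "$+1$"s land correctly when converting between $1+\textsc{\protect\small IF}$ forms and the $\textsc{\protect\small IF}$ form on the left-hand side.
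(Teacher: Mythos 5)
Your strategy coincides with the paper's: divide the jump-chain identity for $Q_{\textsc{ex}}$ (Proposition \ref{prop:IACTequality}) by the one for $Q^{\ast}$ from Lemma \ref{Lemma:jumpchainQ*} to reduce everything to $\textsc{\protect\small IF}\{h/(\varrho_{\textsc{ex}}\varrho_{\textsc{z}}),\widetilde{Q}^{\ast}\}$, then exploit the product form of $\widetilde{Q}^{\ast}$ to factorize the lag-$n$ autocovariance, keeping track of the fact that $a=h/\varrho_{\textsc{ex}}$ is centred under $\widetilde{\pi}$ while $b=1/\varrho_{\textsc{z}}$ is not. Two things need repair. First, applying Proposition \ref{prop:IACTequality} to $Q_{\textsc{ex}}$ requires its hypothesis $\textsc{\protect\small IF}(h,Q_{\textsc{ex}})<\infty$, which is not assumed; the paper obtains it from $\textsc{\protect\small IF}(h,Q_{\textsc{ex}})\leq\textsc{\protect\small IF}(h,Q)\leq\textsc{\protect\small IF}(h,Q^{\ast})<\infty$, the first inequality being Theorem 6 of Andrieu and Vihola (2012) and the second Lemma \ref{Lemma:Peskun}. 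You skip this check.

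Second, and more seriously, your displayed factorization of the product autocorrelation is mis-normalized, and taken literally it would not produce (\ref{eq:mainequality}). Since $\widetilde{\pi}(a)=0$, the correct computation is
\[
\phi_{n}(a\otimes b,\widetilde{Q}^{\ast})
=\frac{\langle a,\widetilde{Q}_{\textsc{ex}}^{\,n}a\rangle_{\widetilde{\pi}}\,\langle b,\widetilde{Q}_{\textsc{z}}^{\,n}b\rangle_{\widetilde{\pi}_{\textsc{z}}}}{\widetilde{\pi}(a^{2})\,\widetilde{\pi}_{\textsc{z}}(b^{2})}
=\phi_{n}(a,\widetilde{Q}_{\textsc{ex}})\,\frac{\mathbb{V}_{\widetilde{\pi}_{\textsc{z}}}(b)\,\phi_{n}(b,\widetilde{Q}_{\textsc{z}})+\widetilde{\pi}_{\textsc{z}}(b)^{2}}{\widetilde{\pi}_{\textsc{z}}(b^{2})}
=\phi_{n}(a,\widetilde{Q}_{\textsc{ex}})\left\{\gamma\,\phi_{n}(b,\widetilde{Q}_{\textsc{z}})+1-\gamma\right\},
\]
with $\gamma=\mathbb{V}_{\widetilde{\pi}_{\textsc{z}}}(b)/\widetilde{\pi}_{\textsc{z}}(b^{2})$: the denominator is the second moment $\widetilde{\pi}_{\textsc{z}}(b^{2})$, not the variance, and the coefficient of $\phi_{n}(b,\widetilde{Q}_{\textsc{z}})$ in the numerator is $\mathbb{V}_{\widetilde{\pi}_{\textsc{z}}}(b)$, not $\widetilde{\pi}_{\textsc{z}}(b^{2})$. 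Your version effectively replaces $\gamma$ by $1/\gamma$ in both correction terms, and a sanity check (take $\phi_{n}(b,\widetilde{Q}_{\textsc{z}})=0$ for $n\geq1$) shows the two disagree. With the corrected weights, summing over $n$ gives $1+\textsc{\protect\small IF}(a\otimes b,\widetilde{Q}^{\ast})=(1-\gamma)\{1+\textsc{\protect\small IF}(a,\widetilde{Q}_{\textsc{ex}})\}+\gamma\beta$ with $\beta=2\sum_{n\geq0}\phi_{n}(a,\widetilde{Q}_{\textsc{ex}})\phi_{n}(b,\widetilde{Q}_{\textsc{z}})$, after which your remaining steps, substituting $\widetilde{\pi}_{\textsc{z}}(b)=1/\pi_{\textsc{z}}(\varrho_{\textsc{z}})$ and $\widetilde{\pi}_{\textsc{z}}(b^{2})=\pi_{\textsc{z}}(1/\varrho_{\textsc{z}})/\pi_{\textsc{z}}(\varrho_{\textsc{z}})$ and back-substituting into the ratio identity, go through exactly as in the paper. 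Note also that the paper justifies the rearrangement of the double sum via the spectral measures of $\widetilde{Q}_{\textsc{ex}}$ and $\widetilde{Q}_{\textsc{z}}$ (positivity of $\widetilde{Q}_{\textsc{z}}$ confines $\widetilde{e}_{\textsc{z}}$ to $[0,1)$ and $\int(1-\omega)^{-1}\widetilde{e}_{\textsc{ex}}(\mathrm{d}\omega)<\infty$ dominates everything); the bound $|\phi_{n}|\leq1$ alone does not give absolute convergence.
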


\begin{remark}
\label{remark:exactproposal}
If $q\left( \theta ,\vartheta \right) =\pi \left( \vartheta \right)$, then
$\textsc{\protect\small IF}(h,Q_{\textsc{ex}})=\textsc{\protect\small IF}(h/\varrho _{\textsc{{ex%
}}},\widetilde{Q}_{\textsc{ex}})=1$ and $\phi _{n}( h/\varrho _{%
\textsc{ex}},\widetilde{Q}_{\textsc{ex}}) =0$ for $n\geq 1$. It follows from Theorem \ref{Th:ineff boundedness theorem} that $\textsc{\protect\small IF}(h,Q^{\ast })=2\pi_{\textsc{z}}\left( 1/\varrho _{\textsc{z}}\right) -1$. This result was established in Lemma 4 of \cite{PittSilvaGiordaniKohn(12)}.
\end{remark}

Theorem \ref{Th:ineff boundedness theorem} requires $Q_{\textsc{ex}},Q^{\ast},\widetilde{Q}_{\textsc{ex}}$ and $\widetilde{Q}^{\ast}$ to
be ergodic. The following proposition, generalizing Theorem 2.2 of \cite{robertstweedie1996}, provides sufficient conditions ensuring this.

\begin{proposition}
\label{Proposition:Q*andQ*JHarrisergodic}Suppose $\pi \left(  \theta \right)  $
is bounded away from $0$ and $\infty$ on compact sets, and there exist
$\delta>0$ and $\varepsilon>0$ such that, for every $\theta$,
\begin{equation}
\left \vert \theta-\vartheta \right \vert \leq \delta \Rightarrow q\left(
\theta,\vartheta \right)  \geq \varepsilon. \label{eq:assumptionproposaltheta}%
\end{equation}
Then $Q_{\textsc{ex}},Q^{\ast},\widetilde{Q}_{\textsc{ex}}$ and
$\widetilde{Q}^{\ast}$ are ergodic.
\end{proposition}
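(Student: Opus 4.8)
The plan is to establish ergodicity (i.e.\ $\overline{\pi}$- or $\widetilde{\pi}$-invariance together with $\psi$-irreducibility) for each of the four kernels in turn, reducing everything to a single irreducibility statement about the $\theta$-component and then importing the noise component for free. First I would observe that condition \eqref{eq:assumptionproposaltheta}, together with $\pi$ being bounded away from $0$ and $\infty$ on compact sets, is precisely the hypothesis of Theorem 2.2 of \cite{robertstweedie1996}, which gives that the ordinary Metropolis--Hastings kernel $Q_{\textsc{ex}}$ is $\mu$-irreducible (indeed Harris recurrent); the point is that \eqref{eq:assumptionproposaltheta} forces $q(\theta,\cdot)$ to dominate Lebesgue measure on a ball around every $\theta$, and boundedness of $\pi$ on compacts keeps the acceptance probability $\alpha_{\textsc{ex}}$ bounded below on compacts, so the chain can reach any positive-Lebesgue set in finitely many steps. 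Since $Q_{\textsc{ex}}$ is also $\pi$-reversible, it is ergodic in the sense defined before Proposition~\ref{prop:kipnisvaradhan}.

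Next I would handle $Q^{\ast}$. Its acceptance probability factorizes as $\alpha_{\textsc{ex}}(\theta,\vartheta)\,\alpha_{\textsc{z}}(z,w)$, and the noise proposal $g(w)\mathrm{d}w$ has full support on $\mathbb{R}$ with $\alpha_{\textsc{z}}(z,w)=\min\{1,\exp(w-z)\}>0$ everywhere, so $\varrho_{\textsc{z}}(z)>0$ for all $z$; hence a one-step move of $Q^{\ast}$ from $(\theta,z)$ can reach any rectangle $A\times B$ with $\mathrm{Leb}(B)>0$ and $A$ reachable by $Q_{\textsc{ex}}$. Irreducibility of $Q^{\ast}$ on $\Theta\times\mathbb{R}$ (with respect to the product of the $Q_{\textsc{ex}}$-irreducibility measure and Lebesgue measure on $\mathbb{R}$) then follows by combining the $n$-step reachability of $\theta$-sets with the one-step reachability of $z$-sets; together with the $\overline{\pi}$-reversibility of $Q^{\ast}$ (Lemma~\ref{Lemma:Peskun}) this gives ergodicity. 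For the two jump kernels I would invoke Lemma~\ref{lemm:jumpchain}: if $P$ is $\psi$-irreducible then so is its jump kernel $\widetilde{P}$, because $\widetilde{P}(x,\mathrm{d}y)=q(x,\mathrm{d}y)\alpha(x,y)/\varrho(x)$ charges the same sets that $P(x,\cdot)$ charges through its proposal part, and $\varrho(x)>0$ is guaranteed. Applying this to $P=Q_{\textsc{ex}}$ and $P=Q^{\ast}$ yields $\psi$-irreducibility of $\widetilde{Q}_{\textsc{ex}}$ and $\widetilde{Q}^{\ast}$; reversibility with respect to $\widetilde{\pi}$ and $\overline{\widetilde{\pi}}$ respectively is already recorded in Lemma~\ref{lemm:jumpchain} and Lemma~\ref{Lemma:jumpchainQ*}, so these kernels are ergodic as well.

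The main obstacle is the careful verification that \eqref{eq:assumptionproposaltheta} plus boundedness of $\pi$ on compacts genuinely delivers $\psi$-irreducibility of $Q_{\textsc{ex}}$ in the generality needed here: \cite{robertstweedie1996} state their result for symmetric random-walk proposals, whereas we only assume the lower bound \eqref{eq:assumptionproposaltheta} on $q$, so I would need to reprove the irreducibility step rather than cite it verbatim. The argument is that for any bounded set $C$ on which $\pi$ is bounded above and below, there is $\eta_C>0$ with $q(\theta,\vartheta)\alpha_{\textsc{ex}}(\theta,\vartheta)\geq \eta_C$ whenever $\theta,\vartheta\in C$ and $|\theta-\vartheta|\leq\delta$; chaining such one-step moves along a $\delta/2$-net shows that from any $\theta$ one can reach, in a number of steps depending only on the diameter of $C$, any subset of $C$ of positive Lebesgue measure, and since $\mathbb{R}^d$ is covered by countably many such $C$ and $\pi$ has full support, Lebesgue measure restricted to $\{\pi>0\}$ is an irreducibility measure. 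Once this is in place the extension to $Q^{\ast}$, $\widetilde{Q}_{\textsc{ex}}$ and $\widetilde{Q}^{\ast}$ is routine via the factorization and Lemma~\ref{lemm:jumpchain}, as sketched above.
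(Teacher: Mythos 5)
Your proof is correct and follows essentially the same route as the paper's: Theorem 2.2 of \cite{robertstweedie1996} for $Q_{\textsc{ex}}$ (whose hypotheses are in fact exactly \eqref{eq:assumptionproposaltheta} together with local boundedness of $\pi$, and are not restricted to symmetric random walks, so your re-derivation via chaining, while valid, is unnecessary), then a one-step minorization exploiting the factorized acceptance probability and the positivity of $\alpha_{\textsc{z}}$ for $Q^{\ast}$, and finally transfer to the jump kernels using $\varrho_{\textsc{ex}}\varrho_{\textsc{z}}\leq 1$. The only point to tidy is that $g$ need not have full support, so the irreducibility measure in the noise coordinate should be $g(w)\,\mathrm{d}w$ (equivalently $\pi_{\textsc{z}}$) restricted to $\{w:g(w)>0\}$ rather than Lebesgue measure on $\mathbb{R}$, which is exactly the paper's choice of $\mathrm{d}\vartheta\times\pi_{\textsc{z}}(\mathrm{d}z)$ as the irreducibility measure.
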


\subsection{Bounds on the relative inefficiency of the pseudo-marginal
chain\label{sec:pseudomarginalbounds}}

For any kernel $\Pi$, we define the relative inefficiency $\mathrm{\textsc{\protect\small RIF}}%
(h,\Pi)=\textsc{\protect\small IF}(h,\Pi)/\textsc{\protect\small IF}(h,Q_{\textsc{ex}})$, which measures the
inefficiency of $\Pi$ compared to that of $Q_{\textsc{ex}}$.  This section provides tractable upper bounds for $\mathrm{\textsc{\protect\small RIF}}(h,Q)$.
 From Lemma \ref{Lemma:Peskun}, $\mathrm{\textsc{\protect\small RIF}}(h,Q)\leq \mathrm{\textsc{\protect\small RIF}}%
(h,Q^{\ast})$, but the expression of $\mathrm{\textsc{\protect\small RIF}}(h,Q^{\ast})$ that follows
from Theorem \ref{Th:ineff boundedness theorem} is intricate and depends on the autocorrelation sequence $\{  \phi_{n}(
h/\varrho_{\textsc{ex}},\widetilde{Q}_{\textsc{ex}})
\}  _{n\geq1}$, as well as other terms.
The next corollary provides upper bounds on $\mathrm{\textsc{\protect\small RIF}}(h,Q)$ that depend
only on $\textsc{\protect\small IF}({h},Q_{\textsc{ex}})$. To simplify the notation, we write $\phi_{\textsc{z}}=\phi_{1}(1/\varrho
_{\textsc{z}},\widetilde{Q}_{\textsc{z}})$.

\begin{corollary}
\label{corollary:boundsQex}Under the assumptions of Theorem
\ref{Th:ineff boundedness theorem},

\begin{enumerate}
\item $\mathrm{\textsc{\protect\small RIF}}(h,Q)\leq{\textsc{\protect\small uRIF}}_{1}(h)$, where%
\begin{align}
\textsc{\protect\small uRIF}_{1}\left(  h\right)   &  =\{1+1/\textsc{\protect\small IF}({h},Q_{\textsc{ex}}%
)\}[\pi_{\textsc{z}}\left(  1/\varrho_{\textsc{z}}\right)  +(1-\phi
_{\textsc{z}})\{ \pi_{\textsc{z}}\left(  1/\varrho_{\textsc{z}}\right)
-1/\pi_{\textsc{z}}\left(  \varrho_{\textsc{z}}\right)
\}]\label{eq:maininequalityloser}\\
&  \quad-1/\textsc{\protect\small IF}({h},Q_{\textsc{ex}});\nonumber
\end{align}

\item if, in addition, $\textsc{\protect\small IF}(h/\varrho_{\textsc{ex}},\widetilde
{Q}_{\textsc{ex}})\geq1$, then $\textsc{\protect\small RIF}(h,Q)\leq{\mathrm{\textsc{\protect\small uRIF}}}%
_{2}(h)\leq{\textsc{\protect\small uRIF}}_{1}(h)$, where
\begin{equation}
\textsc{\protect\small uRIF}_{2}\left(  h\right)  =\left \{  1+1/\textsc{\protect\small IF} \left(  {h}%
,Q_{\textsc{ex}}\right)  \right \}  \pi_{\textsc{z}}\left(  1/\varrho
_{\textsc{z}}\right)  -1/\textsc{\protect\small IF}({h},Q_{\textsc{ex}}).
\label{eq:pos_jumpchain}%
\end{equation}

\end{enumerate}
\end{corollary}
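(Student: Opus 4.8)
The plan is to start from the exact expression \eqref{eq:mainequality} for $\textsc{\protect\small IF}(h,Q^{\ast})$ in Theorem \ref{Th:ineff boundedness theorem}, divide through by $\textsc{\protect\small IF}(h,Q_{\textsc{ex}})$ to obtain $\mathrm{\textsc{\protect\small RIF}}(h,Q^{\ast})$, and then bound the infinite sum $\sum_{n=0}^{\infty}\phi_{n}(h/\varrho_{\textsc{ex}},\widetilde{Q}_{\textsc{ex}})\,\phi_{n}(1/\varrho_{\textsc{z}},\widetilde{Q}_{\textsc{z}})$ in two successively cruder ways. By Lemma \ref{Lemma:Peskun} we have $\mathrm{\textsc{\protect\small RIF}}(h,Q)\le\mathrm{\textsc{\protect\small RIF}}(h,Q^{\ast})$, so it suffices to bound the latter. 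The key structural input is that $\widetilde{Q}_{\textsc{z}}$ is positive and reversible (Lemma \ref{Lemma:jumpchainQ*}), so by the spectral representation \eqref{eq:spectral} the measure $e(1/\varrho_{\textsc{z}},\widetilde{Q}_{\textsc{z}})$ is supported on $[0,1)$ and hence $\phi_{n}(1/\varrho_{\textsc{z}},\widetilde{Q}_{\textsc{z}})=\int_{0}^{1}\lambda^{n}\,e(1/\varrho_{\textsc{z}},\widetilde{Q}_{\textsc{z}})(\mathrm{d}\lambda)$ is a nonnegative, nonincreasing sequence in $n$, with $\phi_{0}=1$ and $\phi_{1}=\phi_{\textsc{z}}\in[0,1)$.

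For part 1, I would bound $|\phi_{n}(h/\varrho_{\textsc{ex}},\widetilde{Q}_{\textsc{ex}})|\le 1$ for every $n\ge0$ (autocorrelations are bounded by $1$ in absolute value), so that the sum is at most $\sum_{n=0}^{\infty}\phi_{n}(1/\varrho_{\textsc{z}},\widetilde{Q}_{\textsc{z}})$. Using the spectral representation for $\widetilde{Q}_{\textsc{z}}$ and summing the geometric series $\sum_{n\ge0}\lambda^{n}=(1-\lambda)^{-1}$ under the integral, together with the identity $\textsc{\protect\small IF}(1/\varrho_{\textsc{z}},\widetilde{Q}_{\textsc{z}})=\int_{0}^{1}(1+\lambda)(1-\lambda)^{-1}e(\mathrm{d}\lambda)$ from \eqref{eq:spectral}, one relates $\sum_{n\ge0}\phi_{n}$ to $\tfrac12\{1+\textsc{\protect\small IF}(1/\varrho_{\textsc{z}},\widetilde{Q}_{\textsc{z}})\}\cdot\widetilde{\pi}_{\textsc{z}}$-type quantities; but it is cleaner to note that $\sum_{n\ge0}\phi_{n}(1/\varrho_{\textsc{z}},\widetilde{Q}_{\textsc{z}})$ appears precisely in the Peskun-type identity relating $\textsc{\protect\small IF}(1/\varrho_{\textsc{z}},\widetilde{Q}_{\textsc{z}})$ back to a quantity on the $z$-chain. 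Tracking the normalizing constants $\pi_{\textsc{z}}(\varrho_{\textsc{z}})$ and $\pi_{\textsc{z}}(1/\varrho_{\textsc{z}})$ via Lemma \ref{Lemma:jumpchainQ*} and Proposition \ref{prop:IACTequality} applied with $P=Q_{\textsc{z}}$, $h=1/\varrho_{\textsc{z}}$, the coefficient $2\{1+\textsc{\protect\small IF}(h,Q_{\textsc{ex}})\}/\{1+\textsc{\protect\small IF}(h/\varrho_{\textsc{ex}},\widetilde{Q}_{\textsc{ex}})\}$ in \eqref{eq:mainequality} combines with the sum to leave exactly the bracketed factor $(1-\phi_{\textsc{z}})\{\pi_{\textsc{z}}(1/\varrho_{\textsc{z}})-1/\pi_{\textsc{z}}(\varrho_{\textsc{z}})\}$ after dividing by $\textsc{\protect\small IF}(h,Q_{\textsc{ex}})$; rearranging gives \eqref{eq:maininequalityloser}.

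For part 2, the extra hypothesis $\textsc{\protect\small IF}(h/\varrho_{\textsc{ex}},\widetilde{Q}_{\textsc{ex}})\ge1$ lets me replace the coefficient $2\{1+\textsc{\protect\small IF}(h,Q_{\textsc{ex}})\}/\{1+\textsc{\protect\small IF}(h/\varrho_{\textsc{ex}},\widetilde{Q}_{\textsc{ex}})\}$ by $1+\textsc{\protect\small IF}(h,Q_{\textsc{ex}})$, and then drop the entire nonnegative sum term together with the residual $\pi_{\textsc{z}}(1/\varrho_{\textsc{z}})-1/\pi_{\textsc{z}}(\varrho_{\textsc{z}})\ge0$ (nonnegativity of the latter follows from Jensen applied to $\varrho_{\textsc{z}}$ under $\pi_{\textsc{z}}$, or is already implicit in Lemma \ref{Lemma:jumpchainQ*}). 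This leaves only $\{1+\textsc{\protect\small IF}(h,Q_{\textsc{ex}})\}/\pi_{\textsc{z}}(\varrho_{\textsc{z}})-1$ from the first line of \eqref{eq:mainequality}; but one must be slightly careful, since $\pi_{\textsc{z}}(\varrho_{\textsc{z}})\le1$ makes $1/\pi_{\textsc{z}}(\varrho_{\textsc{z}})\ge1$, and I would rewrite $1/\pi_{\textsc{z}}(\varrho_{\textsc{z}})\le\pi_{\textsc{z}}(1/\varrho_{\textsc{z}})$ (again Jensen) to upgrade to $\pi_{\textsc{z}}(1/\varrho_{\textsc{z}})$, yielding $\textsc{\protect\small IF}(h,Q^{\ast})\le\{1+\textsc{\protect\small IF}(h,Q_{\textsc{ex}})\}\pi_{\textsc{z}}(1/\varrho_{\textsc{z}})-1$ and hence \eqref{eq:pos_jumpchain} after dividing by $\textsc{\protect\small IF}(h,Q_{\textsc{ex}})$. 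The inequality $\textsc{\protect\small uRIF}_{2}(h)\le\textsc{\protect\small uRIF}_{1}(h)$ is then a direct algebraic comparison: the difference is $\{1+1/\textsc{\protect\small IF}(h,Q_{\textsc{ex}})\}(1-\phi_{\textsc{z}})\{\pi_{\textsc{z}}(1/\varrho_{\textsc{z}})-1/\pi_{\textsc{z}}(\varrho_{\textsc{z}})\}\ge0$ since $\phi_{\textsc{z}}<1$ and each factor is nonnegative.

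The main obstacle I anticipate is the bookkeeping of normalizing constants when converting the raw sum $\sum_{n\ge0}\phi_{n}(h/\varrho_{\textsc{ex}},\widetilde{Q}_{\textsc{ex}})\phi_{n}(1/\varrho_{\textsc{z}},\widetilde{Q}_{\textsc{z}})$ into the closed-form bracket in \eqref{eq:maininequalityloser}: one has to invoke Proposition \ref{prop:IACTequality} for the $z$-jump chain to identify $\sum_{n\ge0}\phi_{n}(1/\varrho_{\textsc{z}},\widetilde{Q}_{\textsc{z}})$ with the correct combination of $\pi_{\textsc{z}}(\varrho_{\textsc{z}})$, $\pi_{\textsc{z}}(1/\varrho_{\textsc{z}})$ and $\textsc{\protect\small IF}(1/\varrho_{\textsc{z}},Q_{\textsc{z}})$, and to check that the finiteness conditions ($\pi_{\textsc{z}}(1/\varrho_{\textsc{z}})<\infty$, geometric ergodicity of $\widetilde{Q}_{\textsc{z}}$) needed for that proposition are supplied by Lemma \ref{Lemma:jumpchainQ*}. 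Everything else is bounding autocorrelations by $1$ and applying Jensen's inequality to $x\mapsto1/x$.
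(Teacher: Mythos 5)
There is a genuine gap in both parts. Your frame --- reduce to $Q^{\ast}$ via Lemma \ref{Lemma:Peskun}, start from \eqref{eq:mainequality}, and use the positivity of $\widetilde{Q}_{\textsc{z}}$ so that its spectral measure $\widetilde{e}_{\textsc{z}}$ sits on $[0,1)$ --- is the paper's, but the step that actually produces the closed-form brackets is missing. For Part 1, bounding $|\phi_{n}(h/\varrho_{\textsc{ex}},\widetilde{Q}_{\textsc{ex}})|\leq1$ replaces the cross sum by $\sum_{n\geq0}\phi_{n}(1/\varrho_{\textsc{z}},\widetilde{Q}_{\textsc{z}})=\int_{0}^{1}(1-\lambda)^{-1}\widetilde{e}_{\textsc{z}}(\mathrm{d}\lambda)=\{1+\textsc{\protect\small IF}(1/\varrho_{\textsc{z}},\widetilde{Q}_{\textsc{z}})\}/2$, a quantity that depends on the whole spectral measure of $\widetilde{Q}_{\textsc{z}}$; by Jensen it is at least $1/(1-\phi_{\textsc{z}})$, so it cannot be rewritten as, or dominated by, anything proportional to $1-\phi_{\textsc{z}}$. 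This route leads to the $\textsc{\protect\small uRIF}_{4}$-type bounds of Corollary \ref{corollary:boundsQexjump}, not to \eqref{eq:maininequalityloser}; the assertion that the coefficient ``combines with the sum to leave exactly'' the bracket $(1-\phi_{\textsc{z}})\{\pi_{\textsc{z}}(1/\varrho_{\textsc{z}})-1/\pi_{\textsc{z}}(\varrho_{\textsc{z}})\}$ is the entire content of Part 1 and is not derived. The paper instead keeps the double integral $\iint(1-\lambda\omega)^{-1}\widetilde{e}_{\textsc{z}}(\mathrm{d}\lambda)\widetilde{e}_{\textsc{ex}}(\mathrm{d}\omega)$ and applies the pointwise bound $(1-\lambda\omega)^{-1}\leq(1-\omega)^{-1}-\omega(1-\lambda)(1-\omega)^{-1}$, valid because $\lambda\in[0,1)$; integrating in $\lambda$ then produces the factor $1-\phi_{\textsc{z}}$, while the $\omega$-integral of $2\omega(1-\omega)^{-1}$ equals $\textsc{\protect\small IF}(h/\varrho_{\textsc{ex}},\widetilde{Q}_{\textsc{ex}})-1$, which is exactly what cancels the denominator $1+\textsc{\protect\small IF}(h/\varrho_{\textsc{ex}},\widetilde{Q}_{\textsc{ex}})$ in \eqref{eq:mainequality}. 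Some such device exploiting both spectral measures jointly is indispensable.

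For Part 2 the direction of the key step is reversed: the sum term in \eqref{eq:mainequality} is nonnegative, so dropping it yields a \emph{lower} bound on $\textsc{\protect\small IF}(h,Q^{\ast})$ (this is precisely how $\textsc{\protect\small lRIF}_{2}$ of Corollary \ref{corollary:boundsQexjump} is obtained), and the subsequent upgrade $1/\pi_{\textsc{z}}(\varrho_{\textsc{z}})\leq\pi_{\textsc{z}}(1/\varrho_{\textsc{z}})$ does not restore what was discarded. What must be proved is that the discarded term is at most $\{1+\textsc{\protect\small IF}(h,Q_{\textsc{ex}})\}\{\pi_{\textsc{z}}(1/\varrho_{\textsc{z}})-1/\pi_{\textsc{z}}(\varrho_{\textsc{z}})\}$, i.e., essentially that $2\sum_{n\geq0}\phi_{n}(h/\varrho_{\textsc{ex}},\widetilde{Q}_{\textsc{ex}})\,\phi_{n}(1/\varrho_{\textsc{z}},\widetilde{Q}_{\textsc{z}})\leq1+\textsc{\protect\small IF}(h/\varrho_{\textsc{ex}},\widetilde{Q}_{\textsc{ex}})$; this is not a pointwise spectral inequality (it fails on the negative part of the spectrum of $\widetilde{Q}_{\textsc{ex}}$), and in the paper it follows from the same intermediate inequality as Part 1 combined with $\textsc{\protect\small IF}(h/\varrho_{\textsc{ex}},\widetilde{Q}_{\textsc{ex}})\geq1$. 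Your closing algebra, $\textsc{\protect\small uRIF}_{1}(h)-\textsc{\protect\small uRIF}_{2}(h)=\{1+1/\textsc{\protect\small IF}(h,Q_{\textsc{ex}})\}(1-\phi_{\textsc{z}})\{\pi_{\textsc{z}}(1/\varrho_{\textsc{z}})-1/\pi_{\textsc{z}}(\varrho_{\textsc{z}})\}\geq0$, is correct, but both bounds it compares are left unestablished.
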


Proposition \ref{prop:positivityMHjumpMH} gives sufficient conditions for the
condition $\textsc{\protect\small IF}(h/\varrho_{\textsc{ex}},\widetilde{Q}_{\textsc{ex}})\geq1$ of Part 2 of Corollary \ref{corollary:boundsQex} to hold.

\begin{remark}
The bounds above are tight in two cases. First, if $\pi_{\textsc{z}}\left(  1/\varrho_{\textsc{z}}\right)  \to 1$, then
$\textsc{\protect\small RIF}({h},Q)$, ${\textsc{\protect\small uRIF}}_{1}(h)$, ${\textsc{\protect\small uRIF}}_{2}(h)\rightarrow1$. Second, if $q\left(  \theta,\vartheta \right)  =\pi \left(  \vartheta
\right)  $, then $\textsc{\protect\small RIF}({h},Q)=\textsc{\protect\small uRIF}_{2}(h)$.
\label{remark:IFZto1}
\end{remark}

We now provide upper bounds on $\mathrm{\textsc{\protect\small RIF}}(h,Q)$ and
lower bounds on $\mathrm{\textsc{\protect\small RIF}}(h,Q^{\ast})$  in terms of
$\textsc{\protect\small IF}(h/\varrho_{\textsc{ex}},\widetilde{Q}_{\textsc{ex}})$.

\begin{corollary}
\label{corollary:boundsQexjump}Under the assumptions of Theorem
\ref{Th:ineff boundedness theorem},
\begin{enumerate}
\item $\textsc{\protect\small RIF}(h,Q)\leq{\mathrm{\textsc{\protect\small uRIF}}}_{3}(h)$, where
\begin{align}
\textsc{\protect\small uRIF}_{3}\left(  h\right)
&  =\left \{  1+ \frac{1}{\textsc{\protect\small IF}(h/\varrho
_{\textsc{ex}},\widetilde{Q}_{\textsc{ex}})}\right \}
\left[
\frac{1}{\pi_{\textsc{z}}(  \varrho_{\textsc{z}})}  +\phi_{\textsc{z}}\left \{  \pi_{\textsc{z}}(  1/\varrho_{\textsc{z}})  -\frac{1}{\pi_{\textsc{z}}(  \varrho_{\textsc{z}})}  \right \}  \right]
\label{eq:RIFh3}\\
&  \quad+2\left \{  \pi_{\textsc{z}}\left(  1/\varrho_{\textsc{z}}\right)
-1/\pi_{\textsc{z}}\left(  \varrho_{\textsc{z}}\right)  \right \}  (1-\phi
_{\textsc{z}})/\textsc{\protect\small IF}(h/\varrho_{\textsc{{ex}}},\widetilde
{Q}_{\textsc{{ex}}})-1/\textsc{\protect\small IF}(h/\varrho_{\textsc{{ex}}},\widetilde{Q}_{\textsc{ex}});\nonumber
\end{align}
\item $\textsc{\protect\small RIF}(h,Q)\leq{\mathrm{\textsc{\protect\small uRIF}}}_{4}(h)$, where%
\begin{align}
\textsc{\protect\small uRIF}_{4}\left(  h\right)   &  =\frac{\left \{  1+1/\textsc{\protect\small IF}(h/\varrho_{\textsc{ex}}%
,\widetilde{Q}_{\textsc{ex}})\right \}  }{  1+\textsc{\protect\small IF}(h/\varrho
_{\textsc{ex}},\widetilde{Q}_{\textsc{ex}})  }\left \{
\pi_{\textsc{z}}\left(  1/\varrho_{\textsc{z}}\right)  -1/\pi_{\textsc{z}}\left(
\varrho_{\textsc{z}}\right)  \right \}  \{1+\textsc{\protect\small IF}(1/\varrho
_{\textsc{z}},\widetilde{Q}_{\textsc{z}})\} \label{eq:RIFh4}\\
& + 1/\pi_{\textsc{z}}\left(  \varrho_{\textsc{z}%
}\right)  +\frac{1}{\textsc{\protect\small IF}(h/\varrho_{\textsc{{ex}}},\widetilde{Q}%
_{\textsc{ex}})}\left \{  \frac{1}{\pi_{\textsc{z}}\left(  \varrho
_{\textsc{z}}\right)  }-1\right \};\quad \nonumber
\end{align}
\item if $\widetilde{Q}_{\textsc{ex}}$ is positive, then $\mathrm{\textsc{\protect\small RIF}}(h,Q^{\ast})\geq{\textsc{\protect\small lRIF}_{1}(h)}$, where
\begin{equation}
\textsc{\protect\small lRIF}_{1}(h)
=\frac{1}{\pi_{\textsc{z}}\left(  \varrho_{\textsc{z}}\right)  }
+\frac{2}{1+\textsc{{\small IF}}
(h/\varrho_{\textsc{ex}},\widetilde{Q}_{\textsc{ex}})}
\big\{  \pi_{\textsc{z}}\left(1/  \varrho_{\textsc{z}}\right)
-1/\pi_{\textsc{z}}\left(  \varrho_{\textsc{z}}\right) \big\};
\label{eq:newlowerboundonRIFQ*}%
\end{equation}
\item$\mathrm{\textsc{\protect\small RIF}}(h,Q^{\ast})\geq{\textsc{\protect\small lRIF}_{2}}$, where
\begin{equation}
\textsc{\protect\small lRIF}_{2}=1/\pi_{\textsc{z}}\left(  \varrho_{\textsc{z}}\right)
,\label{eq:lowerboundonRIFQ*}%
\end{equation}
and $\mathrm{\textsc{\protect\small RIF}}(h,Q^{\ast}), {\textsc{\protect\small uRIF}}_4(h)\rightarrow{\textsc{\protect\small lRIF}_{2}}$ as $\textsc{\protect\small IF}(h/\varrho
_{\textsc{ex}},\widetilde{Q}_{\textsc{ex}})\rightarrow \infty$.
\end{enumerate}
\end{corollary}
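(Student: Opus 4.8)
The plan is to derive all four bounds from the exact identity (\ref{eq:mainequality}) for $\inef(h,Q^{\ast})$ in Theorem \ref{Th:ineff boundedness theorem}, combined with the Peskun bound $\inef(h,Q)\le\inef(h,Q^{\ast})$ of Lemma \ref{Lemma:Peskun} and the definition $\rinef(h,\Pi)=\inef(h,\Pi)/I_{\textsc{ex}}$ with $I_{\textsc{ex}}:=\inef(h,Q_{\textsc{ex}})$, after controlling the cross--correlation series
\[
S:=\sum_{n=0}^{\infty}\phi_{n}(h/\varrho_{\textsc{ex}},\widetilde{Q}_{\textsc{ex}})\,\phi_{n}(1/\varrho_{\textsc{z}},\widetilde{Q}_{\textsc{z}}).
\]
Write $\widetilde{I}_{\textsc{ex}}:=\inef(h/\varrho_{\textsc{ex}},\widetilde{Q}_{\textsc{ex}})$; under the hypotheses of Theorem \ref{Th:ineff boundedness theorem}, $I_{\textsc{ex}}$ and $\widetilde{I}_{\textsc{ex}}$ are finite and positive, and $\widetilde{I}_{\textsc{ex}}\le I_{\textsc{ex}}$ by Proposition \ref{prop:IACTequality}. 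I shall use repeatedly that $\varrho_{\textsc{z}}\le 1$, hence $\pi_{\textsc{z}}(\varrho_{\textsc{z}})\le 1$ and $1/\pi_{\textsc{z}}(\varrho_{\textsc{z}})-1\ge 0$; that $\pi_{\textsc{z}}(\varrho_{\textsc{z}})\,\pi_{\textsc{z}}(1/\varrho_{\textsc{z}})\ge 1$ by Cauchy--Schwarz, so the recurring factor $\pi_{\textsc{z}}(1/\varrho_{\textsc{z}})-1/\pi_{\textsc{z}}(\varrho_{\textsc{z}})$ is nonnegative; that $\widetilde{Q}_{\textsc{z}}$ is positive by Lemma \ref{Lemma:jumpchainQ*}, so the spectral representation (\ref{eq:spectral}) gives $0\le\phi_{n}(1/\varrho_{\textsc{z}},\widetilde{Q}_{\textsc{z}})\le 1$ for all $n$ and $\phi_{n}(1/\varrho_{\textsc{z}},\widetilde{Q}_{\textsc{z}})\le\phi_{\textsc{z}}$ for $n\ge 1$ (as $t^{n}\le t$ for $t\in[0,1]$); and that $|\phi_{n}(h/\varrho_{\textsc{ex}},\widetilde{Q}_{\textsc{ex}})|\le 1$.

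Let $e_{1}$ and $e_{2}$ denote the spectral measures of $h/\varrho_{\textsc{ex}}$ under $\widetilde{Q}_{\textsc{ex}}$ and of $1/\varrho_{\textsc{z}}$ under $\widetilde{Q}_{\textsc{z}}$, carried by $[-1,1)$ and $[0,1)$ respectively. Since $\inef(1/\varrho_{\textsc{z}},\widetilde{Q}_{\textsc{z}})<\infty$ (Lemma \ref{Lemma:jumpchainQ*}) the series $S$ converges absolutely, so Fubini gives $S=\int\!\!\int(1-\lambda\mu)^{-1}\,e_{1}(\mathrm{d}\lambda)\,e_{2}(\mathrm{d}\mu)$; the integrand being strictly positive there, $S\ge 0$, and if in addition $\widetilde{Q}_{\textsc{ex}}$ is positive, so that $e_{1}$ is carried by $[0,1)$, the integrand is $\ge 1$ and $S\ge 1$. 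Parts 3 and 4 follow from these facts: substituting $S\ge 0$, respectively $S\ge 1$, into (\ref{eq:mainequality}), dividing by $I_{\textsc{ex}}$, and using $1+1/I_{\textsc{ex}}\ge 1$ together with $1/\pi_{\textsc{z}}(\varrho_{\textsc{z}})-1\ge 0$ yields $\rinef(h,Q^{\ast})\ge 1/\pi_{\textsc{z}}(\varrho_{\textsc{z}})=\lrinef_{2}$ and, when $\widetilde{Q}_{\textsc{ex}}$ is positive, $\rinef(h,Q^{\ast})\ge\lrinef_{1}(h)$; the latter estimate also uses $2(1+1/I_{\textsc{ex}})/(1+\widetilde{I}_{\textsc{ex}})\ge 2/(1+\widetilde{I}_{\textsc{ex}})$. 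For the convergence statement in part 4, (\ref{eq:RIFh4}) shows $\urinef_{4}(h)\to 1/\pi_{\textsc{z}}(\varrho_{\textsc{z}})$ as $\widetilde{I}_{\textsc{ex}}\to\infty$, using again $\inef(1/\varrho_{\textsc{z}},\widetilde{Q}_{\textsc{z}})<\infty$; once part 2 below yields $\rinef(h,Q^{\ast})\le\urinef_{4}(h)$, the inequalities $1/\pi_{\textsc{z}}(\varrho_{\textsc{z}})\le\rinef(h,Q^{\ast})\le\urinef_{4}(h)$ pinch and give $\rinef(h,Q^{\ast})\to\lrinef_{2}$.

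For the two upper bounds I estimate $S$ from above. Using $|\phi_{n}(h/\varrho_{\textsc{ex}},\widetilde{Q}_{\textsc{ex}})|\le 1$ and $\phi_{n}(1/\varrho_{\textsc{z}},\widetilde{Q}_{\textsc{z}})\ge 0$ gives $S\le\sum_{n\ge 0}\phi_{n}(1/\varrho_{\textsc{z}},\widetilde{Q}_{\textsc{z}})=\{1+\inef(1/\varrho_{\textsc{z}},\widetilde{Q}_{\textsc{z}})\}/2$, which leads to (\ref{eq:RIFh4}). For (\ref{eq:RIFh3}) the key estimate is that, for each fixed $\mu\in[0,1)$,
\[
\sum_{n\ge 1}\phi_{n}(h/\varrho_{\textsc{ex}},\widetilde{Q}_{\textsc{ex}})\,\mu^{n}=\int_{-1}^{1}\frac{\lambda\mu}{1-\lambda\mu}\,e_{1}(\mathrm{d}\lambda)\;\le\;\mu\int_{-1}^{1}\frac{\lambda}{1-\lambda}\,e_{1}(\mathrm{d}\lambda)=\mu\,\frac{\widetilde{I}_{\textsc{ex}}-1}{2},
\]
because $\lambda\mu/(1-\lambda\mu)-\mu\lambda/(1-\lambda)=\mu\lambda^{2}(\mu-1)/\{(1-\lambda\mu)(1-\lambda)\}\le 0$ on $[-1,1)\times[0,1)$; integrating against $e_{2}$ gives $S\le 1+\{(\widetilde{I}_{\textsc{ex}}-1)/2\}\phi_{\textsc{z}}$. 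In each case I substitute the bound on $S$ into (\ref{eq:mainequality}), divide by $I_{\textsc{ex}}$, and then pass to an expression in $\widetilde{I}_{\textsc{ex}}$ by replacing $1+1/I_{\textsc{ex}}$ with $1+1/\widetilde{I}_{\textsc{ex}}$: this only weakens the bound since $I_{\textsc{ex}}\ge\widetilde{I}_{\textsc{ex}}>0$ and every coefficient multiplying that factor is nonnegative (in particular $1/\pi_{\textsc{z}}(\varrho_{\textsc{z}})-1\ge 0$ and $1+\{(\widetilde{I}_{\textsc{ex}}-1)/2\}\phi_{\textsc{z}}>0$). The simplification $2(1+1/\widetilde{I}_{\textsc{ex}})/(1+\widetilde{I}_{\textsc{ex}})=2/\widetilde{I}_{\textsc{ex}}$ then collapses the algebra to exactly (\ref{eq:RIFh3}) and (\ref{eq:RIFh4}). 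Hence $\rinef(h,Q^{\ast})\le\urinef_{3}(h)$ and $\rinef(h,Q^{\ast})\le\urinef_{4}(h)$, and parts 1 and 2 follow because $\rinef(h,Q)\le\rinef(h,Q^{\ast})$ by Lemma \ref{Lemma:Peskun}.

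The main obstacle is the displayed spectral estimate: establishing it requires the representation (\ref{eq:spectral}), the sign analysis of $\mu\lambda^{2}(\mu-1)/\{(1-\lambda\mu)(1-\lambda)\}$, and justification of the term-by-term interchanges. The rest is bookkeeping, but delicate: one must verify that each replacement of $I_{\textsc{ex}}$ by $\widetilde{I}_{\textsc{ex}}$ changes the bound in the admissible direction---several of the quantities involved can a priori have either sign---and that the ensuing algebra really reproduces the stated expressions (\ref{eq:RIFh3})--(\ref{eq:lowerboundonRIFQ*}).
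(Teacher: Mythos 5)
Your proposal is correct and follows essentially the same route as the paper: all four bounds are extracted from the exact identity of Theorem \ref{Th:ineff boundedness theorem} by bounding the cross-correlation series (your $S$ is the paper's $\beta/2$) via the spectral measures and the positivity of $\widetilde{Q}_{\textsc{z}}$ (and of $\widetilde{Q}_{\textsc{ex}}$ for Part 3), then replacing $\textsc{\protect\small IF}(h,Q_{\textsc{ex}})$ by $\textsc{\protect\small IF}(h/\varrho_{\textsc{ex}},\widetilde{Q}_{\textsc{ex}})$ where the coefficients are nonnegative. Your pointwise estimate $\lambda\mu/(1-\lambda\mu)\le\mu\lambda/(1-\lambda)$ is algebraically identical to the comparison the paper performs on the quantity $A$ in the proof of Corollary \ref{corollary:boundsQex}, and the resulting bound $S\le 1+\phi_{\textsc{z}}\{\textsc{\protect\small IF}(h/\varrho_{\textsc{ex}},\widetilde{Q}_{\textsc{ex}})-1\}/2$ reproduces the paper's inequality exactly.
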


Proposition \ref{prop:positivityMHjumpMH} gives sufficient conditions for $\widetilde{Q}_{\textsc{ex}}$ to be positive.
Section \ref{sec:optim} discusses these bounds in more detail.

\bigskip

\subsection{Optimizing the computing time under a Gaussian
assumption\label{sec:optim}}
This section provides quantitative guidelines on how to select the standard deviation $\sigma$ of the noise density, under the following assumption.

\begin{assumption}
\label{assumption:Gaussiannoise}The noise density is $g^{\sigma
}\left(  z\right)  ={\varphi}\left(  z;-\sigma^{2}/2,\sigma^{2}\right)  $, where
${\varphi}(z;a,b^{2})$ is a univariate normal density with mean $a$ and
variance $b^{2}$.
\end{assumption}

Assumption \ref{assumption:Gaussiannoise} ensures that $\int\exp\left(
z\right)  g^{\sigma}\left(  z\right)  dz=1$ as required by the unbiasedness of
the likelihood estimator. Consider a time series $y_{1:T}=\left(  y_{1}%
,\ldots,y_{T}\right)  $, where the likelihood estimator $\widehat{p}%
(y_{1:T}\mid\theta)$ of $p(y_{1:T}\mid\theta)$ is computed through a particle
filter with $N$ particles. Theorem 1 of an unpublished technical report
(arXiv:1307.0181) by B\'{e}rard et al. shows that, under regularity
assumptions, the log-likelihood error is distributed according to a normal
density with mean $-\delta\gamma^{2}/2$ and variance $\delta\gamma^{2}$ as
$T\rightarrow\infty$, for $N=\delta^{-1}T$. Hence, in this important scenario,
the noise distribution satisfies approximately the form specified in
Assumption \ref{assumption:Gaussiannoise} for large $T$ and the variance is
asymptotically inversely proportional to the number of samples. This
assumption is also made in \cite{PittSilvaGiordaniKohn(12)}, where it is
justified experimentally. Section \ref{sect:applications} below provides
additional experimental results.

The next result is Lemma 4 in \cite{PittSilvaGiordaniKohn(12)} and follows
from Assumption \ref{assumption:Gaussiannoise}, equation (\ref{eq:targetinz})
and Remark \ref{remark:exactproposal}. We now make the dependence on $\sigma$
explicit in our notation.

\begin{corollary}
\label{corr:IF_Z_gauss}Under Assumption \ref{assumption:Gaussiannoise}, $\pi_{\textsc{z}}^{\sigma }(z)={\varphi }\left( z;\sigma ^{2}/2,\sigma ^{2}\right) $,
\begin{equation*}
\varrho _\textsc{z}^{\sigma }\left( z\right) =1-\Phi (z/\sigma
+\sigma /2)+\exp (-z)\Phi (z/\sigma -\sigma /2),\text{ \ }\pi_{\textsc{z}}^{\sigma
}\left( 1/\varrho_\textsc{z}^{\sigma }\right) =\int \frac{\varphi ( w;0,1) }{1-\overline{\varrho }_\textsc{z}^{\sigma }\left( w\right) }%
\mathrm{d}w,
\end{equation*}%
where $\overline{\varrho }_\textsc{z}^{\sigma }\left( w\right) =\Phi (w+\sigma )-\exp
(-w\sigma -\sigma ^{2}/2)\Phi (w)$ and $\Phi (\cdot )$ is the standard
Gaussian cumulative distribution function. Additionally, $\pi_{\textsc{z}}^{\sigma
}\left( \varrho _\textsc{{z}}^{\sigma }\right) =2\Phi (-\sigma
/\surd 2)$.
\end{corollary}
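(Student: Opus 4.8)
The plan is to verify each of the four claimed identities under Assumption~\ref{assumption:Gaussiannoise} by direct computation, starting from the definitions in Section~\ref{sec:Qstar}. First, recall that by \eqref{eq:targetinz} we have $\pi_{\textsc{z}}^{\sigma}(z) = \exp(z) g^{\sigma}(z)$ with $g^{\sigma}(z) = \varphi(z;-\sigma^2/2,\sigma^2)$. Completing the square in the exponent of $\exp(z)\varphi(z;-\sigma^2/2,\sigma^2)$, the term $z - (z+\sigma^2/2)^2/(2\sigma^2)$ rearranges to $-(z-\sigma^2/2)^2/(2\sigma^2)$ plus a constant, and that constant is $0$ precisely because $\int \exp(z) g^{\sigma}(z)\,\mathrm{d}z = 1$ (the unbiasedness constraint noted just after Assumption~\ref{assumption:Gaussiannoise}). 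This gives $\pi_{\textsc{z}}^{\sigma}(z) = \varphi(z;\sigma^2/2,\sigma^2)$.

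Next I would compute $\varrho_{\textsc{z}}^{\sigma}(z) = \int g^{\sigma}(w)\alpha_{\textsc{z}}(z,w)\,\mathrm{d}w$ with $\alpha_{\textsc{z}}(z,w) = \min\{1,\exp(w-z)\}$ from \eqref{eq:acceptanceprobaz} and \eqref{eq:acceptanceprobabilityQ*}. Split the integral at $w = z$: for $w \geq z$ the integrand is $g^{\sigma}(w)$, contributing $\mathrm{P}(W \geq z)$ with $W \sim \mathcal{N}(-\sigma^2/2,\sigma^2)$, i.e.\ $1 - \Phi(z/\sigma + \sigma/2)$; for $w < z$ the integrand is $\exp(w-z)g^{\sigma}(w) = \exp(-z)\exp(w)g^{\sigma}(w)$, and since $\exp(w)g^{\sigma}(w) = \pi_{\textsc{z}}^{\sigma}(w) = \varphi(w;\sigma^2/2,\sigma^2)$ by the first step, this contributes $\exp(-z)\,\mathrm{P}(V < z)$ with $V \sim \mathcal{N}(\sigma^2/2,\sigma^2)$, i.e.\ $\exp(-z)\Phi(z/\sigma - \sigma/2)$. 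Summing yields the stated formula for $\varrho_{\textsc{z}}^{\sigma}(z)$.

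For the expression $\pi_{\textsc{z}}^{\sigma}(1/\varrho_{\textsc{z}}^{\sigma}) = \int \pi_{\textsc{z}}^{\sigma}(z)/\varrho_{\textsc{z}}^{\sigma}(z)\,\mathrm{d}z$, I would perform the change of variables $z = \sigma w$ (so $w$ is the standardized version of a $\mathcal{N}(\sigma^2/2,\sigma^2)$ variable shifted appropriately); the numerator $\pi_{\textsc{z}}^{\sigma}(z)\,\mathrm{d}z$ should transform into $\varphi(w;0,1)\,\mathrm{d}w$ after checking the mean/variance bookkeeping, and substituting the formula for $\varrho_{\textsc{z}}^{\sigma}(\sigma w)$ into the denominator and simplifying should produce $1 - \overline{\varrho}_{\textsc{z}}^{\sigma}(w)$ with $\overline{\varrho}_{\textsc{z}}^{\sigma}(w) = \Phi(w+\sigma) - \exp(-w\sigma - \sigma^2/2)\Phi(w)$; this requires carefully matching $1 - \varrho_{\textsc{z}}^{\sigma}(\sigma w)$ against $\overline{\varrho}_{\textsc{z}}^{\sigma}(w)$, using $1 - (1-\Phi(w+\sigma)) = \Phi(w+\sigma)$ and rewriting $\exp(-\sigma w)\Phi(w - \sigma^2/\ldots)$ — the exponent bookkeeping here is the one fiddly spot. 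Finally, $\pi_{\textsc{z}}^{\sigma}(\varrho_{\textsc{z}}^{\sigma}) = \int \pi_{\textsc{z}}^{\sigma}(z)\varrho_{\textsc{z}}^{\sigma}(z)\,\mathrm{d}z = \int\!\!\int \pi_{\textsc{z}}^{\sigma}(z) g^{\sigma}(w)\min\{1,\exp(w-z)\}\,\mathrm{d}w\,\mathrm{d}z$; writing $\pi_{\textsc{z}}^{\sigma}(z) = \exp(z)g^{\sigma}(z)$ and noting $\exp(z)\min\{1,\exp(w-z)\} = \min\{\exp(z),\exp(w)\}$, the double integral becomes $\mathrm{E}[\min\{\exp(Z'),\exp(W')\}]$ where $Z',W'$ are i.i.d.\ $\mathcal{N}(-\sigma^2/2,\sigma^2)$; by symmetry this equals $2\,\mathrm{E}[\exp(Z')\mathbb{I}\{Z' \le W'\}]$, and since $\exp(Z')$ has expectation $1$ and the event $\{Z' \le W'\}$ under the $\exp(Z')$-tilted measure corresponds to a Gaussian tail, this evaluates to $2\Phi(-\sigma/\sqrt{2})$ after a standard conditional-Gaussian computation.

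The main obstacle is purely bookkeeping: keeping the shifts $\pm\sigma^2/2$ and the tilting factor $\exp(z)$ straight through the several changes of variables, especially in identifying $1 - \varrho_{\textsc{z}}^{\sigma}(\sigma w)$ with $1 - \overline{\varrho}_{\textsc{z}}^{\sigma}(w)$ — there is no conceptual difficulty, but a sign or a factor of $\sigma$ is easy to misplace. Since this corollary is attributed to Lemma~4 of \cite{PittSilvaGiordaniKohn(12)} and follows, as stated, from Assumption~\ref{assumption:Gaussiannoise}, equation~\eqref{eq:targetinz}, and Remark~\ref{remark:exactproposal}, one may alternatively simply cite that lemma and record the identities, but the self-contained computation above is short enough to include.
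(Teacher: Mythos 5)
Your computations are correct, and they check out line by line: the tilting identity $\exp(z)\varphi(z;-\sigma^2/2,\sigma^2)=\varphi(z;\sigma^2/2,\sigma^2)$, the split of $\varrho_{\textsc{z}}^{\sigma}$ at $w=z$, and the exchangeability/tilting argument giving $\pi_{\textsc{z}}^{\sigma}(\varrho_{\textsc{z}}^{\sigma})=2\,\mathrm{P}(\tilde Z\le W')=2\Phi(-\sigma/\surd 2)$ with $\tilde Z\sim\mathcal{N}(\sigma^2/2,\sigma^2)$ independent of $W'\sim\mathcal{N}(-\sigma^2/2,\sigma^2)$ are all sound. The paper itself offers no proof of this corollary --- it is simply attributed to Lemma~4 of Pitt et al.\ (2012) --- so your self-contained verification is a genuine (if routine) addition rather than a restatement of the paper's argument; you correctly anticipate this in your closing remark. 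One small imprecision: the substitution you write for the third identity, $z=\sigma w$, is not the one that works; you need the full standardization $z=\sigma w+\sigma^2/2$, under which $z/\sigma+\sigma/2=w+\sigma$, $z/\sigma-\sigma/2=w$ and $\exp(-z)=\exp(-\sigma w-\sigma^2/2)$, so that $\varrho_{\textsc{z}}^{\sigma}(\sigma w+\sigma^2/2)=1-\overline{\varrho}_{\textsc{z}}^{\sigma}(w)$ exactly and $\varphi(z;\sigma^2/2,\sigma^2)\,\mathrm{d}z=\varphi(w;0,1)\,\mathrm{d}w$. Your parenthetical ``standardized version \ldots shifted appropriately'' shows you intend exactly this, so I read it as a typo rather than a gap, but in a final write-up the shift by $\sigma^2/2$ must appear explicitly or the matching with $\overline{\varrho}_{\textsc{z}}^{\sigma}$ fails.
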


The terms $\pi_{\textsc{z}}^{\sigma}\left(1/ \varrho _\textsc{{z}}^{\sigma }\right) $, $\phi_{\textsc{z}}^{\sigma}$ and $\textsc{\protect\small IF}(1/\varrho _\textsc{{z}}^{\sigma },{\widetilde{Q}_{\textsc{{z}}}})$, appearing in the bounds of Corollaries \ref{corollary:boundsQex} and
\ref{corollary:boundsQexjump}, do not admit analytic expressions, but can be computed numerically.
We note that $\pi_{\textsc{z}}^{\sigma}\left(1/ \varrho _\textsc{{z}}^{\sigma }\right)$ is finite, and thus by Lemma~\ref{Lemma:jumpchainQ*} $\textsc{\protect\small IF}(1/\varrho _\textsc{{z}}^{\sigma },{\widetilde{Q}_{\textsc{{z}}}})$ is also finite.
Consequently, for specific
values of $\sigma,$ $\textsc{\protect\small IF}(h,Q_{\textsc{ex}})$ and $\textsc{\protect\small IF}(h/\varrho_{\textsc{ex}},\widetilde{Q}_{\textsc{ex}})$, these bounds can be calculated.


We now use these bounds to guide the choice of $\sigma$.
The quantity we aim to minimize is the relative
computing time for $Q$ defined as $\textsc{\protect\small RCT}({h},Q;\sigma)=\textsc{\protect\small RIF}%
\left(  h,Q;\sigma\right)  /\sigma^{2}$ because $1/\sigma^{2}$ is usually
approximately proportional to the number of samples $N$ used to estimate the
likelihood and the computational cost at each iteration is typically
proportional to $N$, at least in the particle filter scenario described previously. We define $\textsc{\protect\small RCT}({h},Q^\ast;\sigma)$ similarly. As $\textsc{\protect\small RIF}\left(  h,Q;\sigma\right)$ is intractable, we instead minimize the upper bounds ${\textsc{\protect\small uRCT}}_{i}(h;\sigma)=\textsc{\protect\small uRIF}%
_{i}\left(  h;\sigma\right)  /\sigma^{2}$, for $i=1,\dots,4$. We similarly define the quantities ${\textsc{\protect\small lRCT}_{1}}(h; \sigma)=\textsc{\protect\small lRIF}_{1}(h;\sigma)/\sigma^{2}$ and ${\textsc{\protect\small lRCT}_{2}}(\sigma)=\textsc{\protect\small lRIF}_{2}(\sigma)/\sigma^{2}$, which bound $\textsc{\protect\small RCT}({h},Q^{\ast
};\sigma)$ from below. Figure \ref{fig:theory} plots these bounds against $\sigma$ for
different values of $\textsc{\protect\small IF}(h,Q_{\textsc{ex}})$ and $\textsc{\protect\small IF}%
(h/\varrho_{\textsc{ex}},\widetilde{Q}_{\textsc{ex}})$.

Prior to discussing how these results guide the selection of $\sigma$, we outline some properties of the bounds. First, as the corresponding inefficiency increases, the upper bounds ${\textsc{\protect\small uRCT}}_{i}(h;\sigma)$  displayed in Fig.~\ref{fig:theory} become flatter as functions of $\sigma$, and the corresponding minimizing argument $\sigma_\text{opt}$ increases. This flattening effect suggests less sensitivity to the choice of $\sigma$ for the pseudo-marginal algorithm.
Second, for given $\sigma$, all the upper bounds are decreasing functions of the corresponding inefficiency, which suggests that the penalty from using the pseudo-marginal algorithm drops as the exact algorithm becomes more inefficient.
Third, in the case discussed in Remark 2, where $q(\theta, \vartheta)= \pi( \vartheta)$, so that $\textsc{\protect\small IF}(h,Q_{\textsc{ex}})=\textsc{\protect\small IF}
(h/\varrho_{\textsc{ex}},\widetilde{Q}_{\textsc{ex}})=1$, we obtain ${\textsc{\protect\small uRCT}}_2(h;\sigma)={\textsc{\protect\small uRCT}}_3(h;\sigma)=\textsc{\protect\small RCT}({h},Q^\ast;\sigma)=\textsc{\protect\small RCT}({h},Q;\sigma)$.
Fourth, ${\textsc{\protect\small uRCT}}_4(h;\sigma)$ agrees with the lower bound ${\textsc{\protect\small lRCT}_2}(\sigma)$ as $\textsc{\protect\small IF}%
(h/\varrho_{\textsc{ex}},\widetilde{Q}_{\textsc{ex}})\to \infty$ as indicated by Part 2 of Corollary~\ref{corollary:boundsQexjump}. In this case, these two bounds, as well as $\textsc{\protect \small uRCT}_1(h;\sigma)$, are sharp for $\textsc{\protect\small RCT}({h},Q^\ast;\sigma)$.
Fifth, ${\textsc{\protect\small uRCT}}_2(h;\sigma)$ is sharper than ${\textsc{\protect\small uRCT}}_1(h;\sigma)$ for $\textsc{\protect\small RCT}({h},Q^\ast;\sigma)$, but requires a mild additional assumption.

\begin{figure}[!h]
\captionsetup{font=small}
\begin{center}
{\small Relative computing time against $\sigma$ for different inefficiencies of the exact chain.}
\begin{subfigure}{.4\textwidth}
\caption*{$\scriptstyle\textsc{uRCT}_1$}
\vskip-10pt
\includegraphics[width=1\linewidth]{./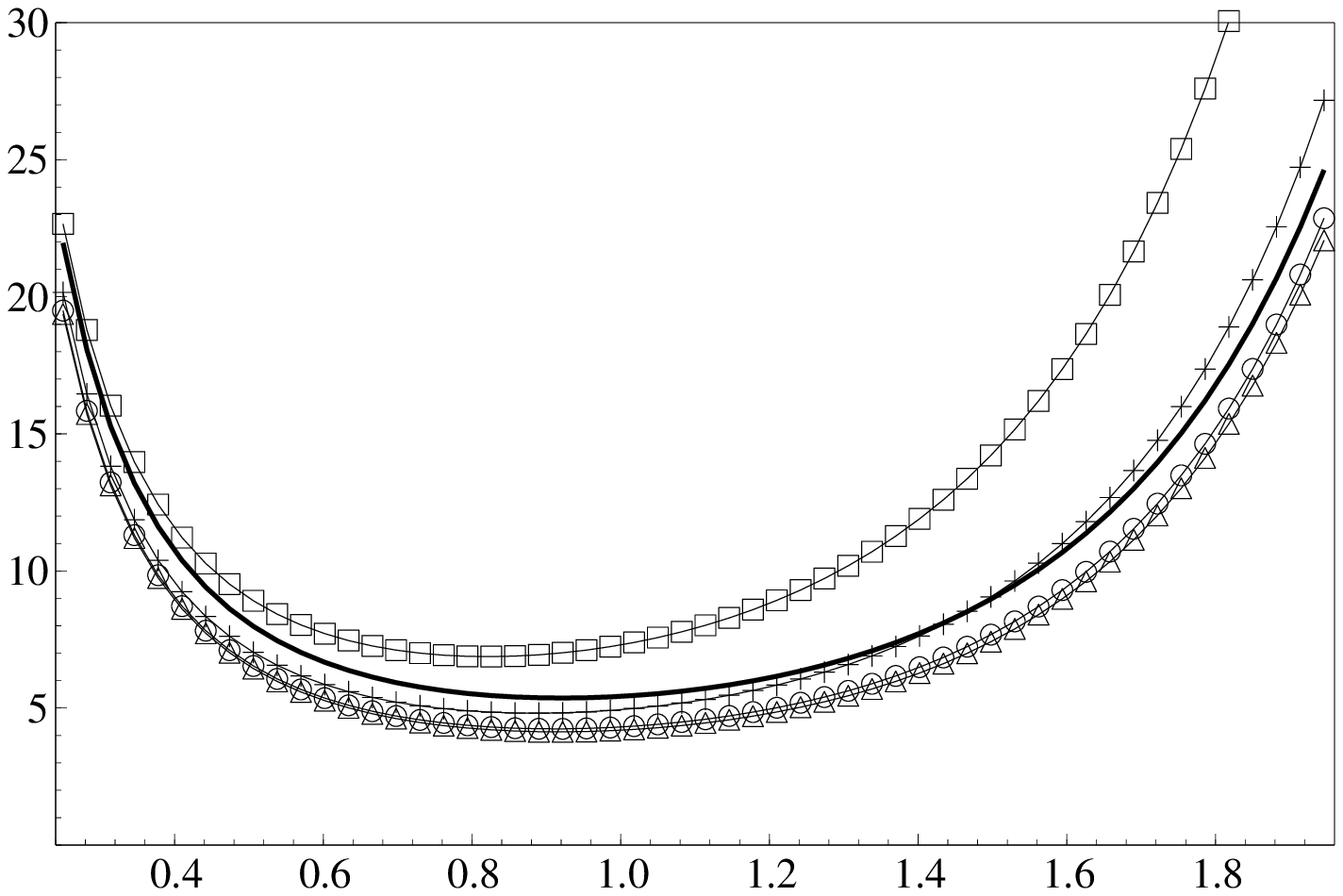}
\end{subfigure}
\begin{subfigure}{.4\textwidth}
\caption*{$\scriptstyle\textsc{uRCT}_2$}
\vskip-10pt
\includegraphics[width=1\linewidth]{./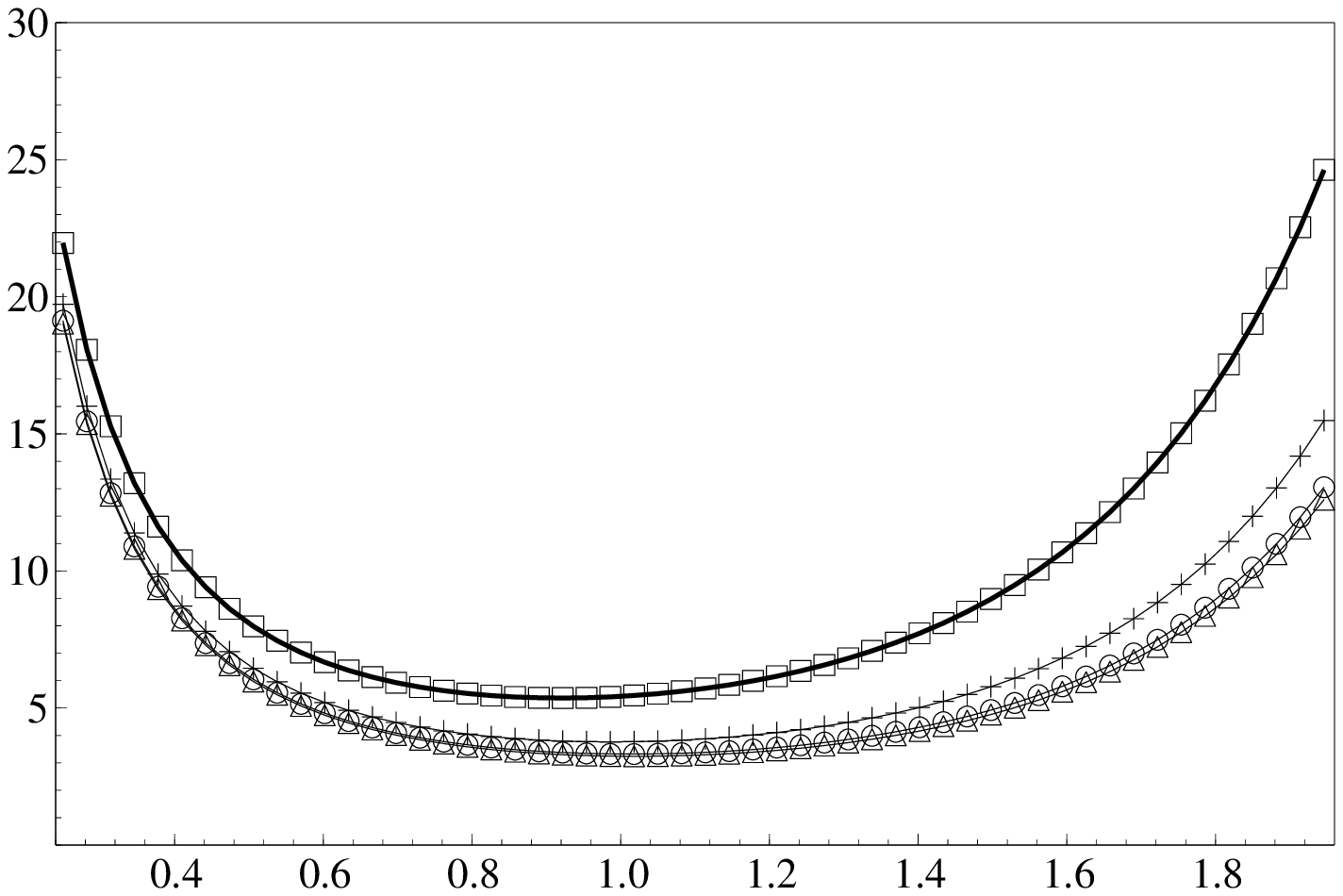}
\end{subfigure}
\end{center}
\vskip 5pt
\begin{center}
{\small Relative computing time against $\sigma$ for different inefficiencies of the exact jump chain.}
\begin{subfigure}{.4\textwidth}
\caption*{$\scriptstyle\textsc{uRCT}_3$}
\vskip-10pt
\includegraphics[width=1\linewidth]{./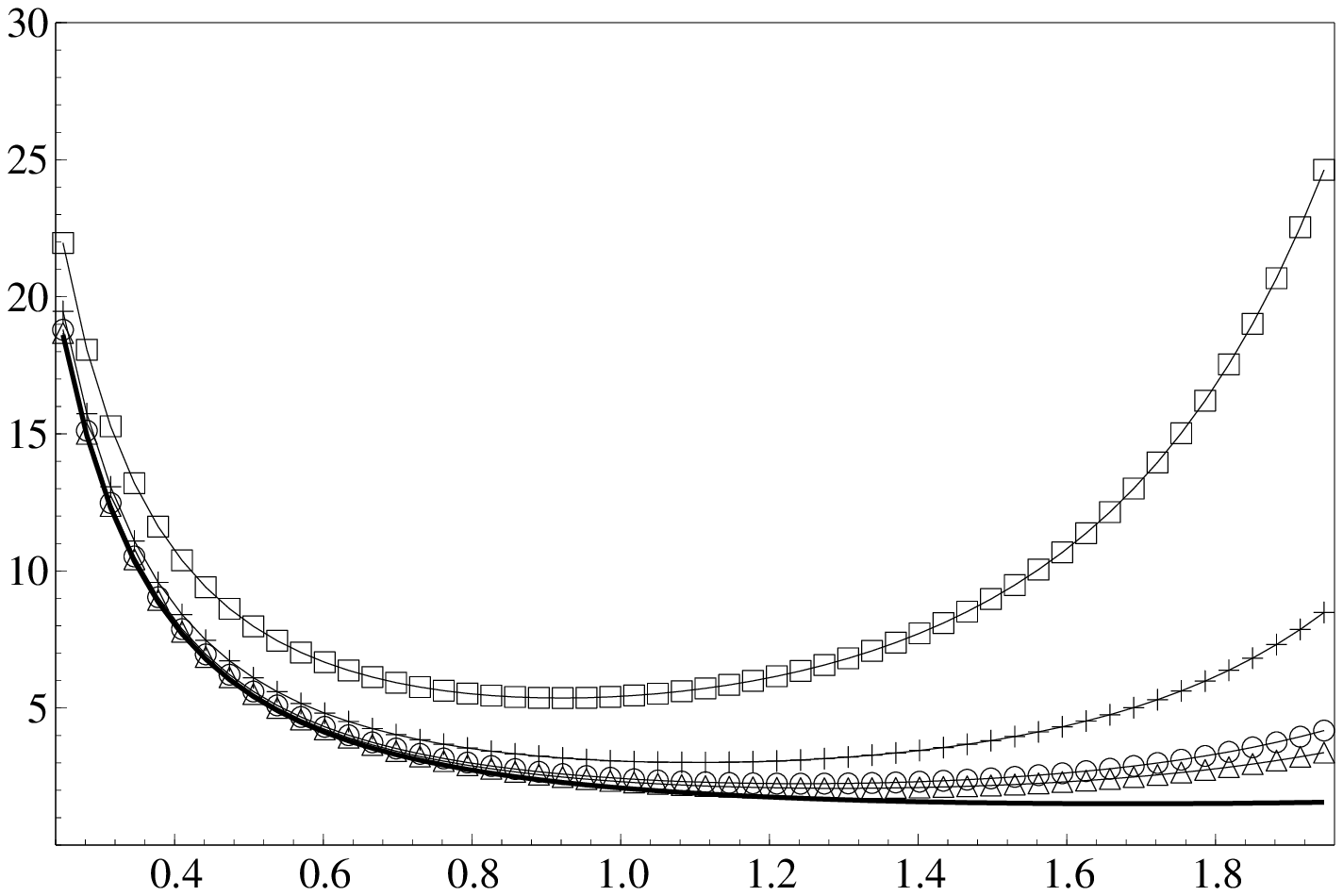}
\end{subfigure}
\begin{subfigure}{.4\textwidth}
\caption*{$\scriptstyle\textsc{uRCT}_4$}
\vskip-10pt
\includegraphics[width=1\linewidth]{./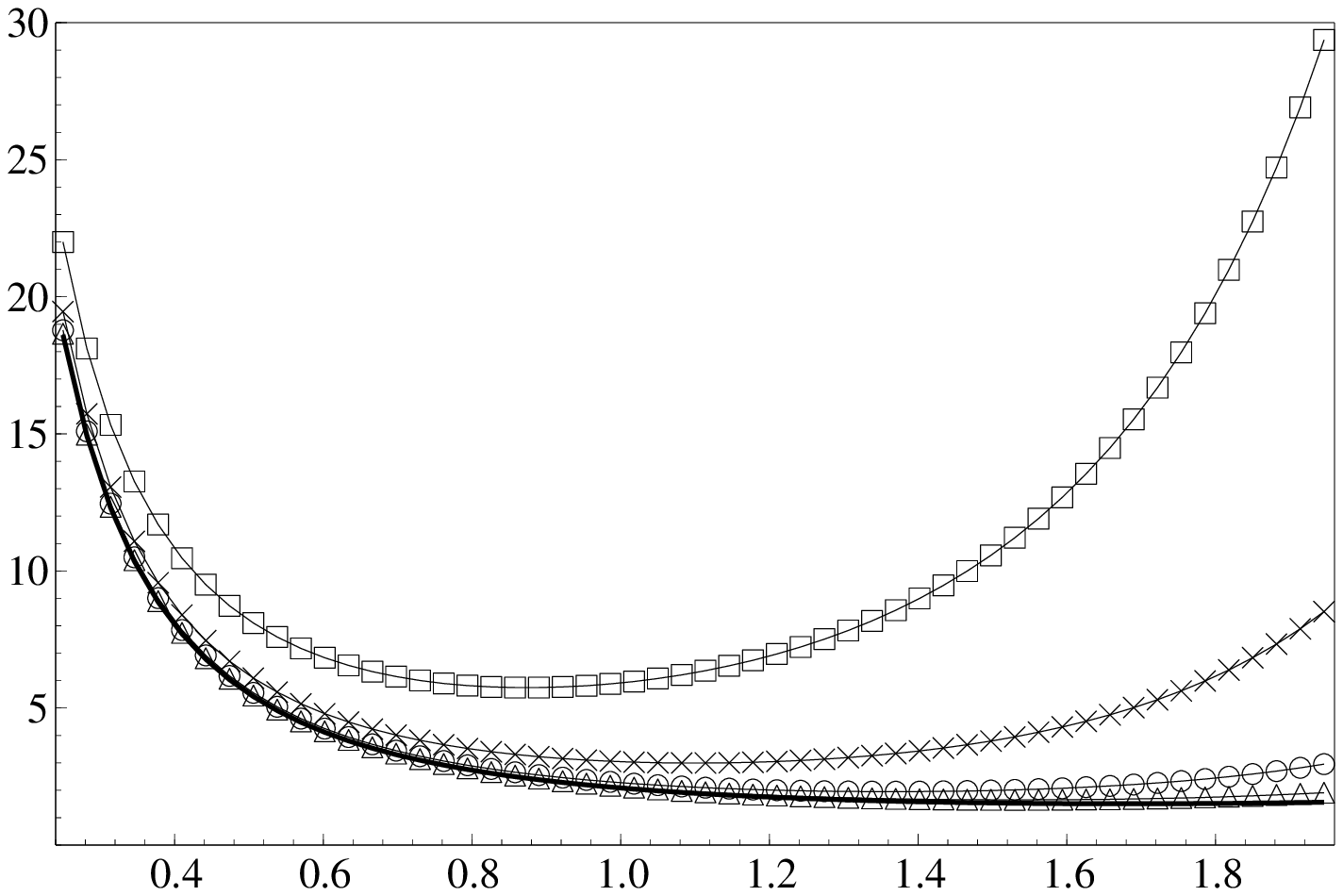}
\end{subfigure}
\caption{{Theoretical results for relative computing time against $\sigma$.
Top: The bounds ${\textsc{{\protect \small uRCT}}}_{1}(h;\sigma)$
(left) and ${\textsc{{\protect \small uRCT}}}_{2}(h;\sigma)$ (right)
are displayed. Different
values of $\textsc{{\protect \small IF}}\left(  h,Q_{\textsc{ex}}\right)  $ are
taken as $1$ (squares), $4$ (crosses), $20$ (circles) and $80$ (triangles).
The solid line corresponds to the perfect proposal, as discussed in Remark 2.
Bottom: The lower bound ${\textsc{{\protect \small lRCT}}}_2(\sigma)$ (solid
line) is shown together with ${\textsc{{\protect \small uRCT}}}_{3}(h;\sigma)$
(left) and ${\textsc{{\protect \small uRCT}}}_{4}(h;\sigma)$ (right). Different
values of $\textsc{{\protect \small IF}}(h/\varrho_{\textsc{ex}},\widetilde
{Q}_{\textsc{ex}})$ are taken as $1$ (squares), $4$ (crosses), $20$ (circles)
and $80$ (triangles). }}%
\label{fig:theory}
\end{center}
\vskip-20pt
\end{figure}
As the likelihood is intractable, it is necessary to make a judgment on how to
choose $\sigma$, because $\textsc{\protect\small IF}\left(
h,Q_{\textsc{ex}}\right)  $ and $\textsc{\protect\small IF}(h/\varrho_{\textsc{{ex}%
}},\widetilde{Q}_{\textsc{ex}})$ are unknown and cannot be easily
estimated. Consider two extreme scenarios. The first is the perfect
proposal $q\left(  \theta,\vartheta\right)  =\pi\left(  \vartheta\right)  $, so
that by Corollary~\ref{corr:IF_Z_gauss} and Remark~\ref{remark:exactproposal},
$\textsc{\protect\small RCT}\left(  h,Q;\sigma\right)  =\{2\pi_\textsc{z}^\sigma(1/\varrho_\textsc{z}^\sigma)-1\}/\sigma^2$,  which we denote by $\textsc{\protect\small RCT}\left(  h,Q_\pi;\sigma\right)$, is minimized at $\sigma_\text{opt}=0.92$. The second scenario
considers a very inefficient proposal corresponding to Part 4 of Corollary
\ref{corollary:boundsQexjump} so that $\textsc{\protect\small RCT}%
\left(  h,Q^{\ast};\sigma\right) ={\textsc{\protect\small lRCT}}_2(\sigma)$, which is minimized at $\sigma_\text{opt}=1.68$.
If we choose $\sigma_\text{opt}=1.68$ over $\sigma_\text{opt}=0.92$ in scenario $1$, then
$\textsc{\protect\small RCT}\left(  h,Q_\pi;\sigma\right)$ rises from $5.36$ to $12.73$. Conversely, if we choose
$\sigma_\text{opt}=$ $0.92$ over $\sigma_\text{opt}=1.68$ in scenario $2$, the relative
computing time $\textsc{\protect\small RCT}\left(  h,Q^{\ast};\sigma\right)  $ rises from
$1.51$ to $2.29$. This suggests that the penalty in choosing the wrong value is much more
severe if we incorrectly assume we are in scenario $2$ than if we incorrectly
assume we are in scenario $1$. This is because as $\textsc{\protect\small IF}(  h/\varrho
_{\textsc{ex}},\widetilde{Q}_{\textsc{ex}})$ increases,  ${\textsc{\protect\small lRCT}}_2(\sigma)$ is very flat relative to $\textsc{\protect\small RCT}\left(  h,Q_\pi;\sigma\right)$, as a function of $\sigma$.
In practice, choosing $\sigma_\text{opt}$ slightly greater than $1.0$ appears
sensible. For example, a value of $\sigma=1.2$ leads to an increase in
${\textsc{\protect\small RCT}}({h},Q_{\pi};\sigma)$ from the minimum value of $5.36$ to
$6.10$ and an increase in ${\textsc{\protect\small lRCT}}_2(\sigma)$ from the minimum value of
$1.51$ to $1.75$. In Appendix~2, we compute lower and upper bounds for the minimizing argument of ${\textsc{\protect\small RCT}}(h, Q^\ast; \sigma)$
for various values of $\textsc{\protect\small IF}(h/\varrho_{\textsc{{ex}%
}},\widetilde{Q}_{\textsc{ex}})$.

Some caution should be exercised in interpreting these results as the lower bounds apply to ${\textsc{\protect\small RCT}}(h, Q^{\ast};\sigma)$, but not in general to ${\textsc{\protect\small RCT}}(h, Q;\sigma)$. Similarly, whilst $\textsc{{\protect \small uRCT}}_4(h;\sigma)$ and the lower bounds become exact for ${\textsc{\protect\small RCT}}(h, Q^{\ast};\sigma)$ as $\textsc{\protect\small IF}(h/\varrho_{\textsc{{ex}%
}},\widetilde{Q}_{\textsc{ex}})\to\infty$, they only provide upper bounds for ${\textsc{\protect\small RCT}}(h, Q;\sigma)$.

However, in an important class of problems  $\textsc{\protect\small IF}(h/\varrho_{\textsc{{ex}%
}},\widetilde{Q}_{\textsc{ex}})$ is large, for instance when $q(\theta, \vartheta)$ is a random walk proposal with small step size. In this case, we expect that as the step size gets smaller the acceptance probability $\alpha_{\textsc{ex}}$ of $Q_{\textsc{ex}}$ will tend towards unity and hence asymptotically $\alpha_{Q^\ast}= \alpha_Q$. This suggests that, for small enough step size, ${\textsc{\protect\small RCT}}(h, Q^{\ast};\sigma)\approx{\textsc{\protect\small RCT}}(h, Q;\sigma)$.
The numerical results in this section are based on Assumption 2. However, the bounds on the relative inefficiences of $Q$  and $Q^{\ast}$ presented in Corollaries \ref{corollary:boundsQex} and \ref{corollary:boundsQexjump} can be calculated for any
other noise distribution $g\left(  z\right)  $, subject to $\int\exp\left(z\right)  g\left(  z\right)  dz=1$. These bounds can in turn be used
to construct corresponding bounds on the relative computing times of $Q$  and $Q^{\ast}$, provided that an appropriate penalization term is employed to account
for the computational effort of obtaining the likelihood estimator.

\subsection{Discussion}

We now compare informally the bound $\textsc{\protect\small lRIF}_{2}(\sigma)=1/\{2\Phi
(-\sigma/\surd2)\}$ of Part 4 of Corollary~\ref{corollary:boundsQexjump} to the
results in \citet{Sherlock2013efficiency}. These authors make Assumption \ref{assumption:noiseorthogonal}, assume
that the target factorises into $d$ independent and identically distributed
components and that the proposal is an isotropic Gaussian random walk of jump
size $d^{-1/2}l$. In the Gaussian noise case, for $h\left(  \theta\right)
=\theta_{1}$ where $\theta=\left(  \theta_{1},...,\theta_{d}\right)  $, their
results and a standard calculation with their diffusion limit, suggest that as $d\rightarrow\infty$ the
relative inefficiency satisfies
\begin{equation}
\frac{\textsc{\protect\small IF}\left(  h,Q;\sigma,l\right)  }{\textsc{\protect\small IF}\left(
h,Q_{\textsc{ex}};l\right)  }=\textsc{\protect\small RIF}(h,Q;\sigma,l)\rightarrow
\textsc{\protect\small aRIF}(\sigma,l)=\frac{J_{\sigma^{2}=0}(l)}{J_{\sigma^{2}}(l)}%
=\frac{\Phi(-l/2)}{\Phi\left\{  -\left(  2\sigma^{2}+l^{2}\right)
^{1/2}/2\right\}  }, \label{eq:RIF_gr}%
\end{equation}
where the expression for $J_{\sigma^{2}}(l)$ is given by equations (3.3) and
(3.4) of \citet{Sherlock2013efficiency}. We observe that $\textsc{\protect\small aRIF}(\sigma,l)$ converges to $\textsc{\protect\small lRIF}_{2}(\sigma)$ as
$l\rightarrow0$. This is unsurprising. As $d\rightarrow\infty$, we conjecture
that in this scenario the conditions of
Part 4 of Corollary~\ref{corollary:boundsQexjump} apply, in particular that $\textsc{\protect\small IF}(
h/\varrho_{\textsc{ex}},\widetilde{Q}_{\textsc{ex}})
\rightarrow\infty$ for any $l>0$. Therefore, in this case, $\textsc{\protect\small RIF}%
\left(  h,Q^{\ast};\sigma,l\right)  \rightarrow\textsc{\protect\small lRIF}_{2}(\sigma)$. As
$l\rightarrow0$, we have informally that $\varrho_{\textsc{ex}}\left(
\theta\right)  \rightarrow1$, so that it is reasonable to conjecture that
$\textsc{\protect\small RIF}\left(  h,Q;\sigma,l\right)  /\textsc{\protect\small RIF}\left(  h,Q^{\ast
};\sigma,l\right)  \rightarrow1$. If one of these limits holds uniformly, then
$\textsc{\protect\small aRIF}(\sigma,l)\rightarrow\textsc{\protect\small lRIF}_{2}(\sigma)$.


\section{Application\label{sect:applications}}
\subsection{Stochastic volatility model and pseudo-marginal algorithm}

This section examines a multivariate partially observed diffusion model, which
was introduced by \cite{ChernovGallantGhyselsTauchen(03)}, and discussed
in \cite{HuangTauchen(05)}. The regularly observed log price $P(t)$ evolves according to,%
\begin{align*}
&\mathrm{d\log}P(t)    =\mu_{y}\mathrm{d}t+\text{s-}\exp\left[  \left\{
v_{1}(t)+\beta_{2}v_{2}(t)\right\}  /2\right]  \mathrm{d}B(t),\\
&\mathrm{d}v_{1}(t)    =-k_{1}\left\{  v_{1}(t)-\mu_{1}\right\}
\mathrm{d}t+\sigma_{1}\mathrm{d}W_{1}(t), \text{  }\mathrm{d}v_{2}(t)=-k_{2}%
v_{2}(t)\mathrm{d}t+\left\{  1+\beta_{12}v_{2}(t)\right\}  \mathrm{d}W_{2}(t),
\end{align*}
%
and the leverage parameters corresponding to the correlations between the
driving Brownian motions are $\phi_{1}=$corr$\left\{  B(t),W_{1}%
(t)\right\}  $ and $\phi_{2}=$corr$\left\{  B(t),W_{2}(t)\right\}  $. The
function s-$\exp\left(  \cdot\right)  $ is a spliced exponential function to
ensure non-explosive growth, see \cite{HuangTauchen(05)}. The two components
for volatility allow for quite sudden changes in log price whilst retaining
long memory in volatility. We note that the Brownian motion of the price
process may be expressed as $\mathrm{d}B(t)=a_{1}\mathrm{d}W_{1}(t)%
+a_{2}\mathrm{d}W_{2}(t)+\surd b\mathrm{d}\overline{B}(t)$, where $a_{1}=\phi
_{1}(1-\phi_{2}^{2})/(1-\phi_{1}^{2}\phi_{2}^{2})$, $a_{2}=\phi_{2}(1-\phi
_{1}^{2})/(1-\phi_{1}^{2}\phi_{2}^{2})$ and $b=(1-\phi_{1}^{2})(1-\phi_{2}%
^{2})/(1-\phi_{1}^{2}\phi_{2}^{2})$. Here $\overline{B}(t)$ is an independent
Brownian motion. Suppose the log prices are observed at equally spaced times
$\tau_{1}<\tau_{2}<$ $\tau_{2}<\ldots<\tau_{T}<\tau_{T+1}$ and $\Delta
=\tau_{s+1}-\tau_{s}$ for any $s$  which gives returns $Y_{s}=\log P(\tau_{s+1})-\log P(\tau_{s})$, for
$s=1,\ldots,T$. The distribution of these returns conditional upon the
volatility paths and the driving processes $W_{1}(t)$ and $W_{2}(t)$ is available in closed form as
$Y_{s}\sim\mathcal{N}\left(  \mu_{y}\Delta+a_{1}Z_{1,s}+a_{2}Z_{2,s}%
;b\sigma_{s}^{2\ast}\right),$ where%
\begin{equation}
Z_{1,s}=\int_{\tau_{s}}^{\tau_{s+1}}\sigma(u)\mathrm{d}W_{1}(u)\text{,
}Z_{2,s}=\int_{\tau_{s}}^{\tau_{s+1}}\sigma(u)\mathrm{d}W_{2}(u),\text{
}\sigma_{s}^{2\ast}=\int_{\tau_{s}}^{\tau_{s+1}}\sigma^{2}(u)du,
\label{exact_intvol}%
\end{equation}
and $\sigma(t)=$s-$\exp\left[  \left\{  v_{1}(t)+\beta_{2}v_{2}(t)\right\}
/2\right]  $. An Euler scheme is used to approximate the evolution of the
volatilities $v_{1}(t)$ and $v_{2}(t)$ by placing a number, $M-1$, of latent points between $\tau_{s}$ and $\tau_{s+1}$. The volatility components are
denoted by $v_{1,1}^{s},...,v_{1,M-1}^{s}$ and $v_{2,1}^{s},...,v_{2,M-1}^{s}%
$. For notational convenience, the start and end points are set to $v_{1,0}%
^{s}=v_{1}(\tau_{s})$ and $v_{1,M}^{s}=v_{1}(\tau_{s+1}),$ and similarly for
$v_{2}(t)$. These latent points are evenly spaced in time by $\delta=\Delta
/M$. The equation for the Euler evolution, starting at $v_{1,0}^{s}%
=v_{1,M}^{s-1}$ and $v_{2,0}^{s}=v_{2,M}^{s-1}$, is
\begin{align*}
v_{1,m+1}^{s}  &  =v_{1,m}^{s}-k_{1}(v_{1,m}^{s}-\mu_{1})\delta+\sigma
_{1}\surd\delta u_{1,m},\\
v_{2,m+1}^{s}  &  =v_{2,m}^{s}-k_{2}v_{2,m}^{s}\delta+\left(  1+\beta
_{12}v_{2,m}^{s}\right)  \surd{\delta}u_{2,m},\text{ }m=0,\ldots,M-1,
\end{align*}
where $u_{1,m}\sim\mathcal{N}\left(  0,1\right)  $ and $u_{2,m}\sim
\mathcal{N}\left(  0,1\right)  $. Conditional upon these trajectories and the
innovations, the distribution of the returns has a closed form so that
$Y_{s}\sim\mathcal{N}\left(  \mu_{y}\Delta+a_{1}\widehat{Z}_{1,s}%
+a_{2}\widehat{Z}_{2,s};b\widehat{\sigma}_{s}^{2\ast}\right)  ,$ where
$\widehat{Z}_{1,s}$, $\widehat{Z}_{2,s}$ and $\widehat{\sigma}_{s}^{2\ast}$
are the Euler approximations to the corresponding expression in
(\ref{exact_intvol}).

We consider $T$ daily returns, $y=\left(  y_{1},...,y_{T}\right)  $, from the
S\&P 500 index. Bayesian inference is performed on the $9$-dimensional
parameter vector $\theta=(k_{1},\mu_{1},\sigma_{1},k_{2},\beta_{12},\beta
_{2},\mu_{y},\phi_{1},\phi_{2})$ to which we assign a vague prior. We simulate from the posterior density $\pi(\theta)$ using
 the pseudo-marginal algorithm where the likelihood
is estimated using the bootstrap particle filter with $N$ particles. A
multivariate Student-t random walk proposal on the parameter components transformed to
the real line is used.

\subsection{Empirical results for the error of the log-likelihood estimator\label{sec:estimator_perf}}

This section investigates empirically Assumptions 1 and 2 by examining
the behaviour of $Z=\log\widehat{p}_{N}(y\mid\theta)-\log p(y\mid\theta)$ for $T=40$, 300 and 2700. Corresponding
values of $N$ are selected in each case to ensure
that the variance of $Z$ evaluated at the posterior mean $\overline{\theta}$
is approximately unity. We
use $\delta=0.5$ in the Euler scheme.

The three plots on the left of Fig.~\ref{fig:2fact_hist} display the
histograms corresponding to the density of $Z$
for $\theta=\overline{\theta}$ denoted $g_{N}(z\mid\overline{\theta})$, which is obtained by running $S=6000$ particle
filters at this value. As $p(y\mid\overline{\theta})$ is
unknown, it is estimated by averaging these estimates. The Metropolis--Hastings algorithm is then used to
obtain the histograms corresponding to $\pi_{N}(z\mid\overline{\theta}%
)=\exp\left(  z\right)  g_{N}(z\mid\overline{\theta})$. We overlay on each
histogram a kernel density estimate together with the corresponding assumed density, $g_{\textsc{z}}^{\sigma
}\left(  z\right)  $ or $\pi_{\textsc{z}}^{\sigma}\left(  z\right)  $, where $\sigma^{2}$ is the sample variance of $Z$ over the $S$ particle
filters. For $T=40$, there is a discrepancy between the assumed Gaussian densities and the true
histograms representing $g_{N}(z\mid\overline{\theta})$ and $\pi_{N}%
(z\mid\overline{\theta})$. In particular, whilst $g_{N}(z\mid\overline{\theta
})$ is well approximated over most of its support, it is slightly lighter
tailed than the assumed Gaussian in the right tail and much heavier tailed in
the left tail. This translates into a smaller discrepancy between $g_{N}%
(z\mid\theta)$ and $\pi_{N}(z\mid\theta)$ and a higher acceptance rate for the
pseudo-marginal algorithm than the Gaussian assumption suggests. For $T=300$ and $T=2700$, the assumed Gaussian densities are
very accurate.

We also examine $Z$ when $\theta$ is distributed according to $\pi(\theta)$. We record $200$ samples from $\pi
(\theta)$, for $T=40$, 300 and 2700. For each of these samples, we run the particle filter $300$ times in order to estimate the
true likelihood at these values. The resulting histograms, corresponding to the densities ${\textstyle\int}\pi\left(  \mathrm{d}\theta\right) g_{N}(z\mid\theta)$ and ${\textstyle\int}\pi\left(  \mathrm{d}\theta\right) \pi_{N}(z\mid\theta)$, are displayed in the middle column of
Fig.~\ref{fig:2fact_hist}. We similarly examine the density of $Z$
when $\theta$ is distributed according to the marginal proposal density in the
stationary regime ${\textstyle\int}\pi\left(  \mathrm{d}\vartheta\right)  q\left(  \vartheta,\theta\right)  $.
Here $q\left(  \vartheta,\theta\right)  $ is a multivariate
Student-t random walk proposal, with step size proportional to $T^{-1/2}$. The right hand column of Fig.~\ref{fig:2fact_hist} shows the
resulting histograms. In both scenarios, Assumptions 1 and 2 are problematic
for $T=40$ as $g_{N}(z\mid\overline{\theta})$ is
not close to being Gaussian as $T\ $is too small for the central limit theorem
to provide a good approximation. Moreover, since $T$ is small, $\pi(\theta)$ and ${\textstyle\int}\pi\left(  \mathrm{d}\vartheta\right)  q\left(  \vartheta,\theta\right)  $ are
relatively diffuse. Consequently, $g_{N}(z\mid\overline{\theta})$ is not close
to $g_{N}(z\mid\theta)$ marginalized over $\pi(\theta)$ or ${\textstyle\int}\pi\left(  \mathrm{d}\vartheta\right)  q\left(  \vartheta,\theta\right)  $.
For $T=300$ and $T=2700$, the assumed densities
$g_{\textsc{z}}^{\sigma}\left(  z\right)  $ and $\pi_{\textsc{z}}^{\sigma
}\left(  z\right)  $ are close to the corresponding histograms and
Assumptions 1 and 2 appear to capture reasonably well the salient features of the densities associated with $Z$. In particular,
the approximation suggested by the central limit theorem becomes very good.
Additionally, $\pi(\theta)$ and ${\textstyle\int}\pi\left(  \mathrm{d}\vartheta\right)  q\left(  \vartheta,\theta\right)  $ are
sufficiently concentrated to ensure that the variance of $Z$ as a function of $\theta$ exhibits little variability.

\begin{figure}[!h]
\captionsetup{font=small}
\includegraphics[height=2.4in, width=\textwidth]{./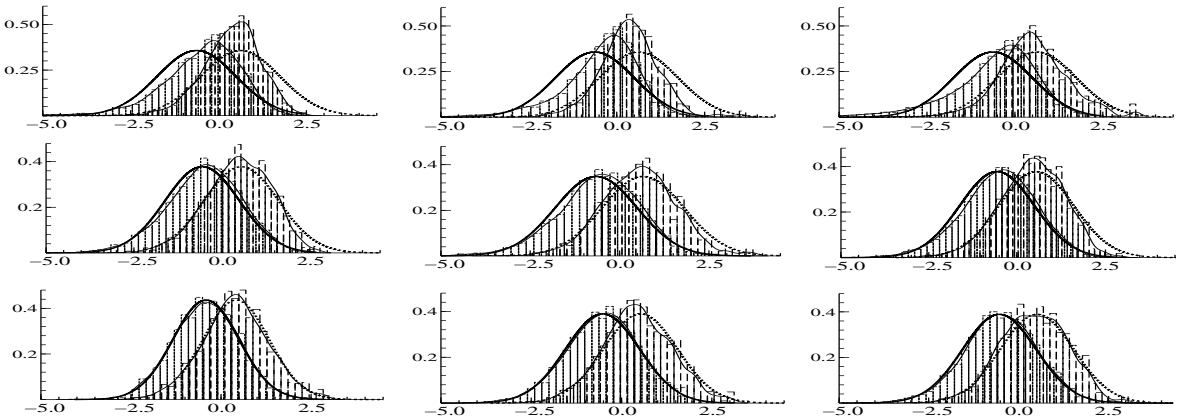}\caption{{Huang and
Tauchen two factor model for S\&P 500 data. Top to bottom: }${T=40,}$ $N=4$
(top), $T=300,$ $N=80$ (middle), $T=2700,$ $N=700$ (bottom). Left to right:
histograms and theoretical densities associated with $g_{N}(z\mid \theta)$ and
$\pi_{N}(z\mid \theta)$ evaluated at the posterior mean $\overline{\theta}$
(left), over values from the posterior $\pi(\theta)$ (middle) and over values
from $\int\pi(\mathrm{d}\vartheta) q(\vartheta,\theta)$ (right). The densities $g_\textsc{z}^\sigma(z)$ and $\pi_\textsc{z}^\sigma(z)$ are overlaid (solid lines).}
\label{fig:2fact_hist}%
\vskip-10pt
\end{figure}

\subsection{Empirical results for the pseudo-marginal algorithm}

We apply the pseudo-marginal algorithm with $\delta=0.05$, $T=300$ and various values of $N$.  The standard deviation $\sigma\left(  \overline{\theta};N\right)$ of $\log\widehat{p}_{N}(y\mid\overline{\theta})$ is evaluated by Monte Carlo simulations, where $\overline{\theta}$ is the posterior mean. For each value of $N$, we compute the inefficiencies, denoted by $\mathrm{\textsc{\protect\small IF}}$, and the corresponding approximate relative computing times, denoted by  $\mathrm{\textsc{\protect\small RCT}}$, of all parameter components. The quantity $\mathrm{\textsc{\protect\small RCT}}$ is computed as $\mathrm{\textsc{\protect\small IF}}/{\sigma^{2}\left(  \overline{\theta};N\right)}$ divided by the inefficiency of $Q$ when $N=2000$, the latter being an approximation of the inefficiency of $Q_{\textsc{ex}}$. The results are very similar for all parameter components and so, for ease of presentation, Fig.~\ref{fig:theory2} shows the average quantities over the $9$ components. For most parameters, the optimal value for $\sigma\left(  \overline{\theta};N\right)$ is between $1.2$ and $1.5$, corresponding to $N=40$ and $60$. The results agree with the bound  ${\textsc{\protect\small uRCT}}_4(h;\sigma)$ in Section 3.5. This can be partly explained because the inefficiencies associated with $\widetilde{Q}$ for $N=2000$ are large, suggesting that the inefficiencies associated with $\widetilde{Q}_{\textsc{ex}}$ are large.
%

As all the bounds in the paper are based on $Q^{\ast}$, it is useful to assess the discrepancy between $Q$ and $Q^{\ast}$.
One approach to explore this discrepancy is to examine the
marginal acceptance probability $\overline{\pi}(\varrho_{\text{{\tiny Q}}})$
under $Q$ against $\sigma=\sigma(\overline{\theta},N)$ as $N$ varies. Using
the acceptance criterion (\ref{eq:acceptanceprobabilityQ*}) of $Q^{\ast}$, we
obtain under Assumptions \ref{assumption:noiseorthogonal} and
\ref{assumption:Gaussiannoise} that $\overline{\pi}(\varrho_{\textsc{{q}}})\geq2\Phi(-\sigma/\surd2)\pi(\varrho_{\textsc{ex}})$.
If $Q$ and $Q^{\ast}$ are close in the
sense of having similar marginal acceptance probabilities, then we expect
$\overline{\pi}(\varrho_{\text{{\tiny Q}}})$ to have a similar shape as its
lower bound where $\pi(\varrho_{\textsc{ex}})$ is approximated using $\overline{\pi}(\varrho_{\text{{\tiny Q}}})$ with $N=2000$. For this model, the two functions on either side of the inequality, displayed in Fig.~\ref{fig:theory2}, are similar.

\begin{figure}[ptb]
\captionsetup{font=small}
\begin{center}
\begin{subfigure}{.32\textwidth}
\caption*{Log of average $\textsc{if}$ against $\sigma$}
\vskip-10pt
\includegraphics[height=3cm,width=1\linewidth]{./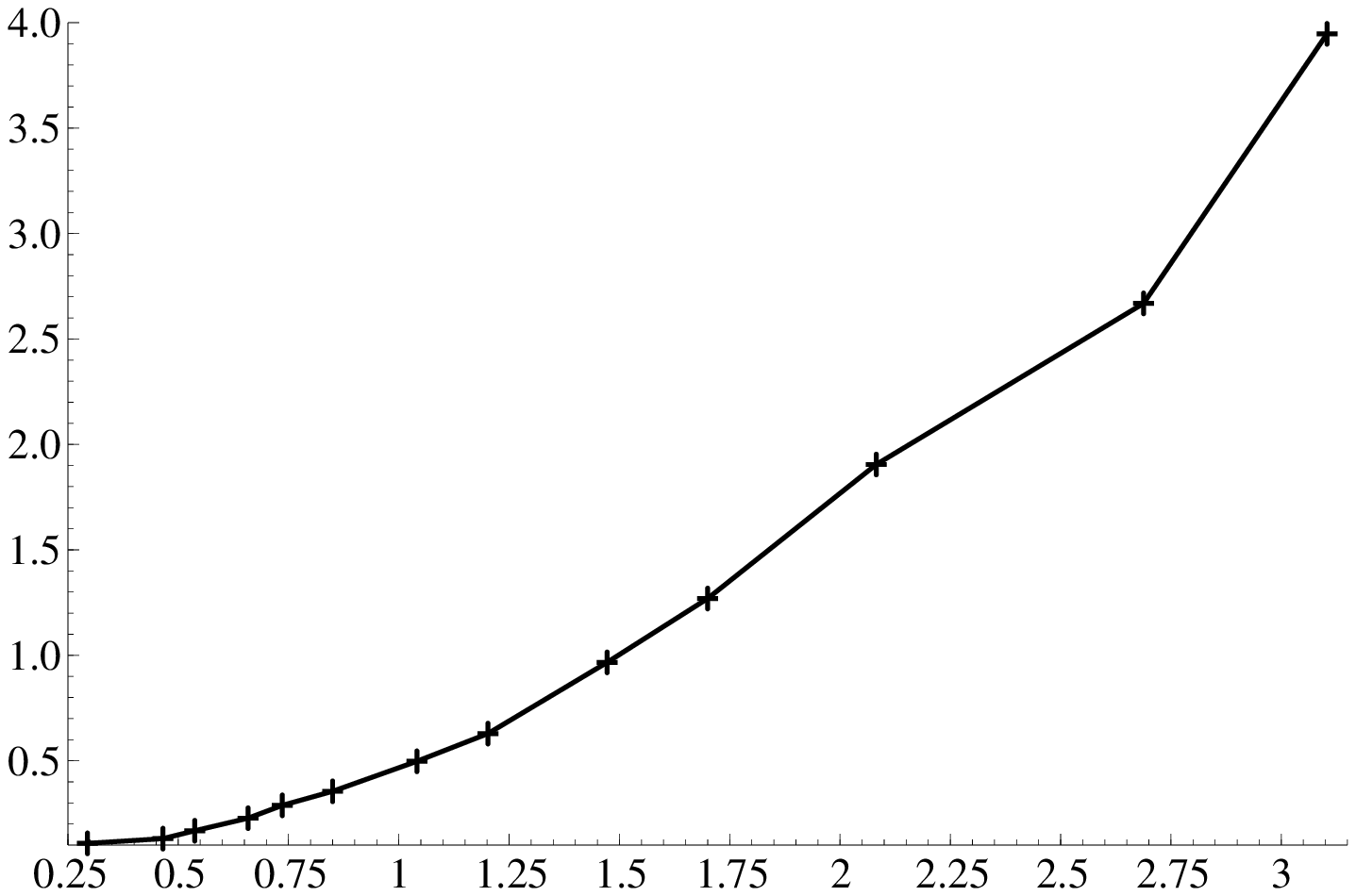}
\end{subfigure}
\begin{subfigure}{.32\textwidth}
\caption*{Average $\textsc{rct}$ against $\sigma$}
\vskip-10pt
\includegraphics[height=3cm,width=1\linewidth]{./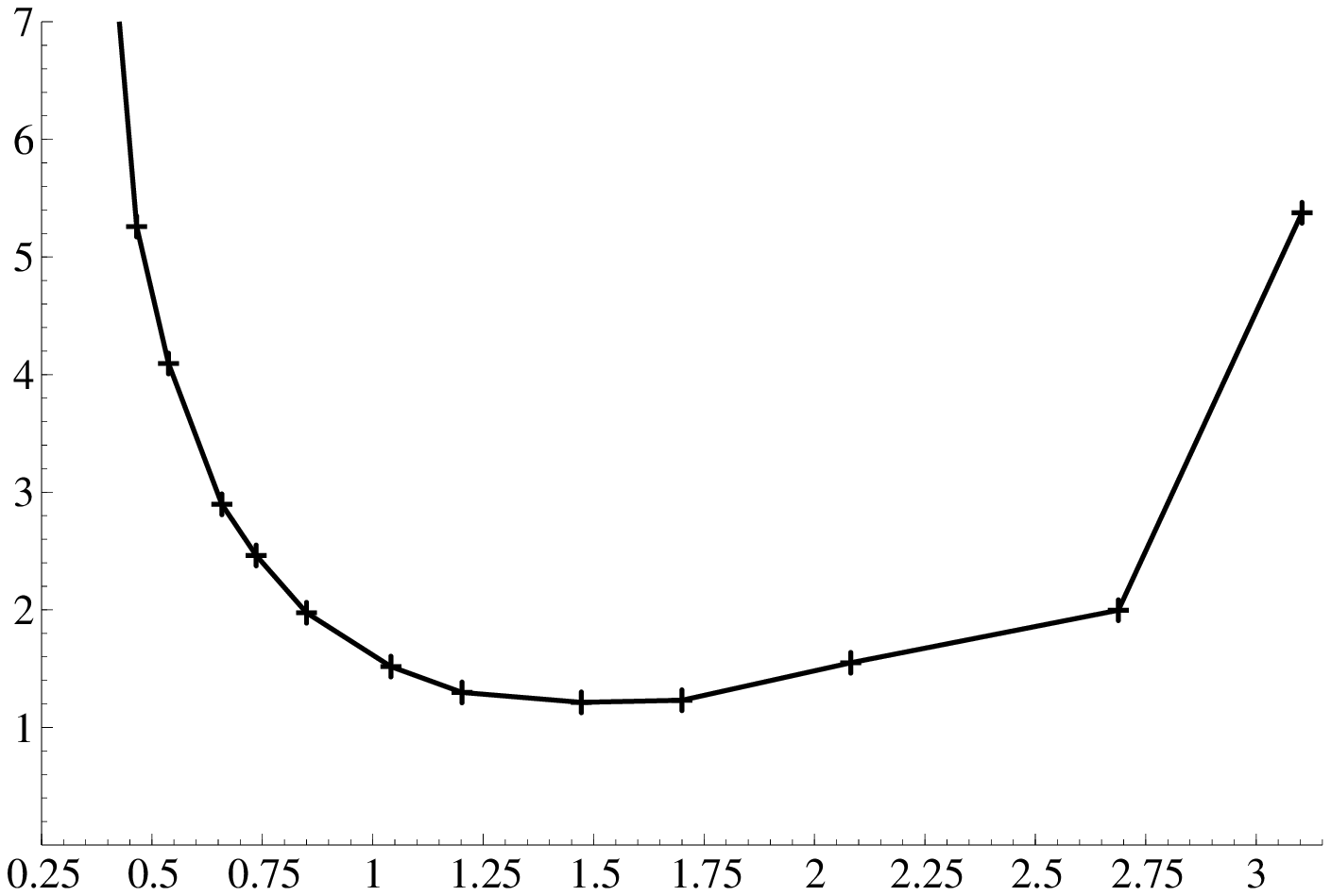}
\end{subfigure}
\begin{subfigure}{.32\textwidth} 
\caption*{Acceptance Probability against $\sigma$}
\vskip-10pt
\includegraphics[height=3cm,width=1\linewidth]{./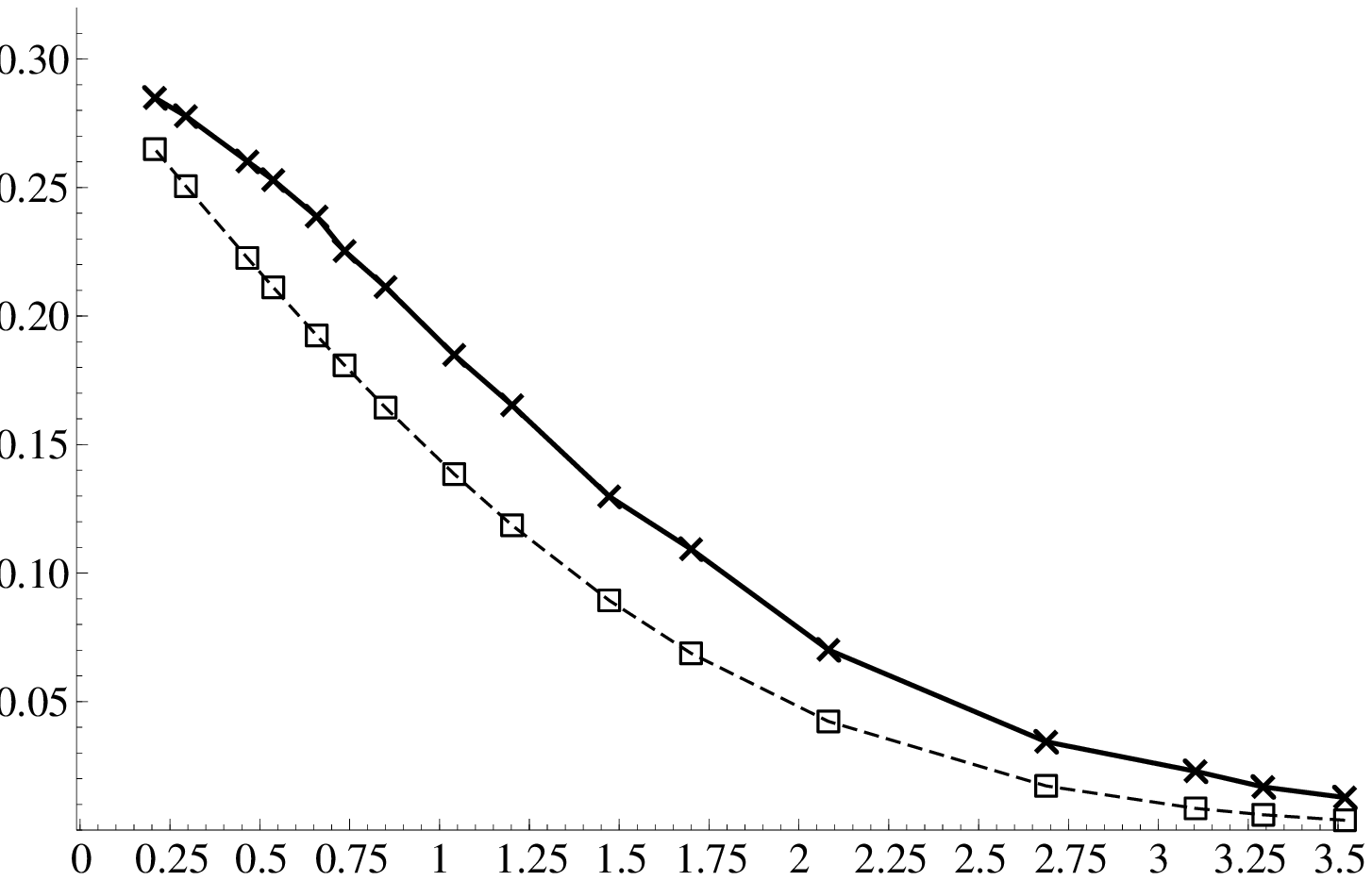}
\end{subfigure}
\end{center}
\caption{{ Huang and Tauchen two factor model for S\&P 500 data, $T=300$.
Inefficiencies (\textsc{if}) and Relative Computing Times (\textsc{rct}) against $\sigma$, where $\textsc{if}$ is computed by averaging over the 9 parameter components. Right panel: The marginal acceptance probability $\overline{\pi
}(\varrho_{\text{{\protect \tiny Q}}})$ (crosses) against $\sigma$ together with
the lower bound (squares) $2\Phi(-\sigma/\surd2)\pi(\varrho_{\textsc{{ex}}})$.}}%
\label{fig:theory2}%
\end{figure}%
\section*{Acknowledgements}

The authors would like to thank the editor, the associate editor and the reviewers for
their comments which helped to improve the paper significantly. Arnaud Doucet
was partially supported by EPSRC\ and Robert Kohn was partially supported by an ARC Discovery grant.
%
%
\appendix
\section*{Appendix 1\label{A: Proofs}}

\begin{proof}
[Proof of Lemma~\ref{Lemma:Peskun}]It is straightforward to establish that $Q^{\ast
}$ is $\overline{\pi}$-reversible. Moreover, for any $a,b\geq0$,
$\min(1,a)\min(1,b)\leq\min(1,ab)$ so $\alpha_{Q^{\ast}}\left\{  \left(
\theta,z\right)  ,\left(  \vartheta,w\right)  \right\}  \leq\alpha_{Q}\left\{
\left(  \theta,z\right)  ,\left(  \vartheta,w\right)  \right\}  $ for any
$\theta,z,\vartheta,w$. Hence, Theorem 4 in \cite{tierney1998note}, which is a
general state-space extension of \cite{peskun1973optimum}, applies and yields
the result.
\end{proof}

\begin{proof}
[Proof of Theorem \ref{Th:ineff boundedness theorem}]Without loss of generality, let
$h\in L_{0}^{2}\left(  \Theta,\pi\right)  $. By Theorem 6 of \citet{AV12}, $\textsc{\protect\small IF}\left(  h,Q_{\textsc{ex}}\right)  \leq\textsc{\protect\small IF}\left(
h,Q\right)  $ and, by Lemma \ref{Lemma:Peskun}, $\textsc{\protect\small IF}\left(
h,Q\right)  \leq\textsc{\protect\small IF}\left(  h,Q^{\ast}\right)  $, where $\textsc{\protect\small IF}%
\left(  h,Q^{\ast}\right)  <\infty$ by assumption. Hence, $\textsc{\protect\small IF}\left(  h,Q_{\textsc{ex}}\right)  <\infty$ and Proposition
\ref{prop:IACTequality} applied to $Q_{\textsc{ex}}$ yields that
$\textsc{\protect\small IF}(  h/\varrho_{\textsc{ex}},\widetilde{Q}%
_{\textsc{ex}}) ,$ $\widetilde{\pi}\left(  h^{2}/\varrho
_{\textsc{ex}}^{2}\right)  <\infty$ and
\begin{equation}
\pi\left(  h^{2}\right)  \left\{  1+\textsc{\protect\small IF}\left(  h,Q_{\textsc{ex}%
}\right)  \right\}  =\pi\left(  \varrho_{\textsc{ex}}\right)
\widetilde{\pi}\left(  h^{2}/\varrho_{\textsc{ex}}^{2}\right)  \left\{
1+\textsc{\protect\small IF}(  h/\varrho_{\textsc{ex}},\widetilde{Q}%
_{\textsc{ex}})  \right\}  . \label{eq:relationshipIACTexact}%
\end{equation}
Since the assumptions of Lemma \ref{Lemma:jumpchainQ*} are satisfied, we can
substitute (\ref{eq:relationshipIACTexact}) into (\ref{eq:equalityIACTQ*}) to
obtain
\begin{equation}
1+\textsc{\protect\small IF}\left(  h,Q^{\ast}\right)  =\pi_{\textsc{z}}\left(  1/\varrho
_{\textsc{z}}\right)  \frac{\left\{  1+\textsc{\protect\small IF}\left(
h,Q_{\textsc{ex}}\right)  \right\}  }{1+\textsc{\protect\small IF}(
h/\varrho_{\textsc{ex}},\widetilde{Q}_{\textsc{ex}})
}\left[  1+\textsc{\protect\small IF}\left\{  h/\left(  \varrho_{\textsc{ex}}%
\varrho_{\textsc{z}}\right)  ,\widetilde{Q}^{\ast}\right\}  \right]  .
\label{eq:identityoninefficiencyQ*}%
\end{equation}
We now provide a spectral representation for $\textsc{\protect\small IF}\{h/\left(
\varrho_{\textsc{ex}}\varrho_{\textsc{z}}\right)  ,\widetilde{Q}%
^{\ast}\}$. With $\widetilde{\pi}\otimes\widetilde{\pi}_{\textsc{z}}\left(
\mathrm{d}\theta,\mathrm{d}z\right)  =\widetilde{\pi}\left(  \mathrm{d}%
\theta\right)  \widetilde{\pi}_{\textsc{z}}\left(  \mathrm{d}z\right)  $,
\begin{align}
\textsc{\protect\small IF}\left\{  h/\left(  \varrho_{\textsc{ex}}\varrho
_{\textsc{z}}\right)  ,\widetilde{Q}^{\ast}\right\}   &  =1+2\sum
_{n=1}^{\infty}\frac{\left\langle \varrho_{\textsc{ex}}^{-1}%
\varrho_{\textsc{z}}^{-1}h,\left(  \widetilde{Q}^{\ast}\right)
^{n}\varrho_{\textsc{ex}}^{-1}\varrho_{\textsc{z}}^{-1}%
h\right\rangle _{\widetilde{\pi}\otimes\widetilde{\pi}_{\textsc{z}}}}{\widetilde{\pi
}\otimes\widetilde{\pi}_{\textsc{z}}\left(  \varrho_{\textsc{z}}^{-2}%
\varrho_{\textsc{ex}}^{-2}h^{2}\right)  } \label{eq:identityIACTmfinite}%
\\
&  =1+2\sum_{n=1}^{\infty}\frac{\left\langle \varrho_{\textsc{z}}%
^{-1},\left(  \widetilde{Q}_{\textsc{z}}\right)  ^{n}\varrho
_{\textsc{z}}^{-1}\right\rangle _{\widetilde{\pi}_{\textsc{z}}}\left\langle
\varrho_{\textsc{ex}}^{-1}h,\left(  \widetilde{Q}_{\textsc{ex}%
}\right)  ^{n}\varrho_{\textsc{ex}}^{-1}h\right\rangle _{\widetilde{\pi
}}}{\widetilde{\pi}_{\textsc{z}}\left(  \varrho_{\textsc{z}}^{-2}\right)
\widetilde{\pi}\left(  \varrho_{\textsc{ex}}^{-2}h^{2}\right)
}\nonumber
\end{align}
and, as $\widetilde{Q}_{\textsc{z}}$ and $\widetilde{Q}%
_{\textsc{ex}}$ are reversible, the following spectral representations,
as in (\ref{eq:spectral}), hold

%

\begin{equation}
\begin{split}
\phi_{n}(  1/\varrho
_{\textsc{z}},\widetilde{Q}_{\textsc{z}})
&=\frac{\left\langle \varrho_{\textsc{z}}^{-1},\left(  \widetilde{Q}%
_{\textsc{z}}\right)  ^{n}\varrho_{\textsc{z}}^{-1}\right\rangle
_{\widetilde{\pi}_{\textsc{z}}}-\left\{  \widetilde{\pi}_{\textsc{z}}\left(  \varrho
_{\textsc{z}}^{-1}\right)  \right\}  ^{2}}{\mathbb{V}_{\widetilde{\pi
}_{\textsc{z}}}\left(  \varrho_{\textsc{z}}^{-1}\right)  }
=
\int_{-1}^1
\lambda^{n}\widetilde{e}_{\textsc{z}}(\mathrm{d}\lambda),\\
\phi_{n}(  h/\varrho_{\textsc{ex}%
},\widetilde{Q}_{\textsc{ex}})&=
\frac{\left\langle \varrho_{\textsc{ex}}^{-1}h,\left(
\widetilde{Q}_{\textsc{ex}}\right)  ^{n}\varrho_{\textsc{ex}}%
^{-1}h\right\rangle _{\widetilde{\pi}}}{\widetilde{\pi}\left(  \varrho
_{\textsc{ex}}^{-2}h^{2}\right)  }
=
\int_{-1}^1
\omega^{n}\widetilde{e}_{\textsc{ex}}(\mathrm{d}\omega),
\end{split}
\label{eq:spectralmeasures}%
\end{equation}
where we define $\mathbb{V}_{\widetilde{\pi}_{\textsc{z}}}\left(  \varrho
_{\textsc{z}}^{-1}\right)  =\widetilde{\pi}_{\textsc{z}}\left[  \left\{
\varrho_{\textsc{z}}^{-1}-\widetilde{\pi}_{\textsc{z}}\left(  \varrho
_{\textsc{z}}^{-1}\right)  \right\}  ^{2}\right]  $, $\widetilde{e}%
_{\textsc{z}}(\mathrm{d}\lambda)=e(  \varrho_{\textsc{z}%
}^{-1},\widetilde{Q}_{\textsc{z}})  (\mathrm{d}\lambda)$ and
$\widetilde{e}_{\textsc{ex}}(\mathrm{d}\omega)=e(  \varrho
_{\textsc{ex}}^{-1}h,\widetilde{Q}_{\textsc{ex}})
(\mathrm{d}\omega)$ to simplify notation. Using $\widetilde{\pi}_{\textsc{z}}\left(
\varrho_{\textsc{z}}^{-1}\right)  =1/\pi_{\textsc{z}}\left(  \varrho
_{\textsc{z}}\right)  $, we can rewrite (\ref{eq:identityIACTmfinite})
as%
\begin{align}
\lefteqn{\textsc{\protect\small IF}\left\{  h/\left(  \varrho_{\textsc{ex}}\varrho
_{\textsc{z}}\right)  ,\widetilde{Q}^{\ast}\right\}    =1+2\sum
_{n=1}^{\infty}\frac{1}{\widetilde{\pi}_{\textsc{z}}\left(  \varrho_{\textsc{z}%
}^{-2}\right)  }\left\{  \mathbb{V}_{\widetilde{\pi}_{\textsc{z}}}\left(
\varrho_{\textsc{z}}^{-1}\right)  \int\lambda^{n}\widetilde{e}%
_{\textsc{z}}(\mathrm{d}\lambda)+\frac{1}{\pi_{\textsc{z}}\left(
\varrho_{\textsc{z}}\right)^{2}  }\right\}  \int\omega^{n}\widetilde{e}%
_{\textsc{ex}}(\mathrm{d}\omega)}\notag\\
& =1+2\left(  1-\gamma\right)  \int\frac{\omega}{1-\omega}\widetilde{e}%
_{\textsc{ex}}(\mathrm{d}\omega)+2\gamma\iint\frac{\lambda\omega
}{1-\lambda\omega}\widetilde{e}_{\textsc{z}}(\mathrm{d}\lambda
)\widetilde{e}_{\textsc{ex}}(\mathrm{d}\omega)\notag\\
 &=-1+2\left(  1-\gamma\right)  \int\left(  1+\frac{\omega}{1-\omega}\right)
\widetilde{e}_{\textsc{ex}}(\mathrm{d}\omega)+2\gamma\iint\left(
1+\frac{\omega\lambda}{1-\lambda\omega}\right)  \widetilde{e}_{\textsc{{z}%
}}(\mathrm{d}\lambda)\widetilde{e}_{\textsc{ex}}(\mathrm{d}\omega),
\label{eq:exactexpressionIFQ*}%
\end{align}
where the second expression is finite since $\int(1+\omega)\left(
1-\omega\right)  ^{-1}\widetilde{e}_{\textsc{ex}}(\mathrm{d}%
\omega)=\textsc{\protect\small IF}(  h/\varrho_{\textsc{ex}%
},\widetilde{Q}_{\textsc{ex}})  <\infty$ and
\begin{equation}
\gamma={\mathbb{V}_{\widetilde{\pi}_{\textsc{z}}}\left(  \varrho_{\textsc{{z}%
}}^{-1}\right)  }/{\widetilde{\pi}_{\textsc{z}}\left(  \varrho_{\textsc{z}}%
^{-2}\right)  }=\left\{
{\pi_{\textsc{z}}\left(  \varrho_{\textsc{z}}^{-1}\right)  -1/\pi_{\textsc{z}}\left(
\varrho_{\textsc{z}}\right)  }\right\}/{\pi_{\textsc{z}}\left(
\varrho_{\textsc{z}}^{-1}\right)  }. \label{eq:identitygamma}
\end{equation}

Rearranging (\ref{eq:exactexpressionIFQ*}), we obtain
\begin{align}
1+\textsc{\protect\small IF}\left\{  h/\left(  \varrho_{\textsc{ex}}\varrho
_{\textsc{z}}\right)  ,\widetilde{Q}^{\ast}\right\}   &  =\left\{
1+\textsc{\protect\small IF}(h/\varrho_{\textsc{ex}},\widetilde{Q}_{\textsc{ex}}) \right\}  \left(  1-\gamma\right)  +\gamma
\beta\nonumber\\
&  =\frac{1+\textsc{\protect\small IF}(h/\varrho_{\textsc{ex}},\widetilde{Q}_{\textsc{ex}})  }{\pi_{\textsc{z}}\left(  \varrho_{\textsc{z}%
}\right)  \pi_{\textsc{z}}\left(  \varrho_{\textsc{z}}^{-1}\right)  }+\left\{
\frac{\pi_{\textsc{z}}\left(  \varrho_{\textsc{z}}^{-1}\right)  -1/\pi_{\textsc{z}}\left(
\varrho_{\textsc{z}}\right)  }{\pi_{\textsc{z}}\left(  \varrho_{\textsc{z}%
}^{-1}\right)  }\right\}  \beta, \label{eq:identityB}%
\end{align}
with
\begin{equation}
\frac{\beta}{2}=\iint\frac{\widetilde{e}_{\textsc{z}}(\mathrm{d}%
\lambda)\widetilde{e}_{\textsc{ex}}(\mathrm{d}\omega)}{1-\omega\lambda
}=\sum_{n=0}^{\infty}\phi_{n}(  h/\varrho_{\textsc{ex}%
},\widetilde{Q}_{\textsc{ex}})  \phi_{n}(  1/\varrho
_{\textsc{z}},\widetilde{Q}_{\textsc{z}}). \label{eq:betadefn}
\end{equation}
By substituting (\ref{eq:identityB}) into (\ref{eq:identityoninefficiencyQ*}),
we obtain the result since
\begin{align}
\textsc{\protect\small IF}\left(  h,Q^{\ast}\right)   &  =\pi_{\textsc{z}}\left(  1/\varrho
_{\textsc{z}}\right)  \frac{\{1+\textsc{\protect\small IF}\left(  h,Q^{\textsc{{ex}%
}}\right)  \}}{1+\textsc{\protect\small IF}(h/\varrho_{\textsc{ex}},\widetilde{Q}%
_{\textsc{ex}})}\left[  1+\textsc{\protect\small IF}\left\{  h/\left(  \varrho
_{\textsc{ex}}\varrho_{\textsc{z}}\right)  ,\widetilde{Q}^{\ast
}\right\}  \right]  -1\nonumber\\
&  =\frac{\left\{  1+\textsc{\protect\small IF}(h,Q_{\textsc{ex}})\right\}
}{1+\textsc{\protect\small IF}(h/\varrho_{\textsc{ex}},\widetilde{Q}%
_{\textsc{ex}})}\left\{  \pi_{\textsc{z}}\left(  \varrho_{\textsc{z}%
}^{-1}\right)  -\frac{1}{\pi_{\textsc{z}}\left(  \varrho_{\textsc{z}}\right)
}\right\}  \beta+\frac{1+\textsc{\protect\small IF}(h,Q_{\textsc{ex}})}{\pi_{\textsc{z}}\left(
\varrho_{\textsc{z}}\right)  }-1. \label{eqn:exact_RIF_Qstar}%
\end{align}

\end{proof}

\begin{proof}
[Proof of Corollary \ref{corollary:boundsQex}]
Dividing  (\ref{eqn:exact_RIF_Qstar}) by $\textsc{\protect\small IF}(h,Q_{\textsc{ex}})$, we obtain
\begin{align}
\textsc{\protect\small RIF}\left(  h,Q^{\ast}\right)   &  =\frac{\pi_{\textsc{z}}\left(
\varrho_{\textsc{z}}^{-1}\right)  \left \{  1+\textsc{\protect\small IF}%
(h,Q_{\textsc{ex}})\right \}  }{\textsc{\protect\small IF}(h,Q_{\textsc{ex}%
})\{1+\textsc{\protect\small IF}(h/\varrho_{\textsc{ex}},\widetilde{Q}%
_{\textsc{ex}})\}}A-\frac{1}{\textsc{\protect\small IF}(h,Q_{\textsc{ex}}%
)},\label{eq:RIF_equal}
\end{align}
where $A$ is the quantity in (\ref{eq:identityB}) and can be expressed in terms of
$\gamma$, defined in (\ref{eq:identitygamma}), as
\begin{align}
A  &  =1+\textsc{\protect\small IF}(h/\varrho_{\textsc{ex}},\widetilde{Q}%
_{\textsc{ex}})-2\gamma \iint \left \{  \frac{1}{(1-\omega)}-\frac
{1}{(1-\lambda \omega)}\right \}  \widetilde{e}_{\textsc{z}}%
(\mathrm{d}\lambda)\widetilde{e}_{\textsc{ex}}(\mathrm{d}\omega
)\nonumber \\
&  =1+\textsc{\protect\small IF}(h/\varrho_{\textsc{ex}},\widetilde{Q}%
_{\textsc{ex}})-2\gamma \iint \frac{\omega \left(  1-\lambda \right)
}{\left(  1-\omega \right)  (1-\omega \lambda)}\widetilde{e}_{\textsc{z}%
}(\mathrm{d}\lambda)\widetilde{e}_{\textsc{ex}}(\mathrm{d}%
\omega).\nonumber
\end{align}
Lemma \ref{Lemma:jumpchainQ*} ensures that the
kernel $\widetilde{Q}_{\textsc{z}}$ is positive, implying that $\widetilde{e}_{\textsc{z}}\left \{  \left[
0,1\right)  \right \}  =1$. Hence,
\[
\iint \left \{  \frac{\omega \left(  1-\lambda \right)  }{\left(  1-\omega \right)
(1-\omega \lambda)}-\frac{\omega \left(  1-\lambda \right)  }{\left(
1-\omega \right)  }\right \}  \widetilde{e}_{\textsc{z}}(\mathrm{d}%
\lambda)\widetilde{e}_{\textsc{ex}}(\mathrm{d}\omega)=\iint \frac
{\omega^{2}(1-\lambda)\lambda}{\left(  1-\omega \right)  (1-\omega \lambda
)}\widetilde{e}_{\textsc{z}}(\mathrm{d}\lambda)\widetilde{e}%
_{\textsc{ex}}(\mathrm{d}\omega)\geq 0.\text{ }%
\]
We can now bound $A$ from above by
\begin{align}
A&\leq1+\textsc{\protect\small IF}(h/\varrho_{\textsc{ex}},\widetilde{Q}_{\textsc{ex}})-2\gamma \iint \frac{\omega \left(  1-\lambda \right)
}{\left(  1-\omega \right)  }\widetilde{e}_{\textsc{z}}(\mathrm{d}%
\lambda)\widetilde{e}_{\textsc{ex}}(\mathrm{d}\omega)\nonumber \\
&  =1+\textsc{\protect\small IF}(h/\varrho_{\textsc{ex}},\widetilde{Q}%
_{\textsc{ex}})-\gamma \left \{  1-\int \lambda \widetilde{e}%
_{\textsc{z}}(\mathrm{d}\lambda)\right \}  \int \frac{2\omega}{\left(
1-\omega \right)  }\widetilde{e}_{\textsc{ex}}(\mathrm{d}\omega
)\nonumber \\
&  =1+\textsc{\protect\small IF}(h/\varrho_{\textsc{ex}},\widetilde{Q}%
_{\textsc{ex}})-\gamma(1-\phi_{\textsc{z}})\left \{  \textsc{\protect\small IF}(h/\varrho_{\textsc{ex}},\widetilde{Q}_{\textsc{ex}})  -1\right \} \nonumber \\
&  =\left \{  1+\textsc{\protect\small IF}(h/\varrho_{\textsc{ex}},\widetilde{Q}_{\textsc{ex}})  \right \}  \left \{  \phi_{\textsc{z}%
}+(1-\phi_{\textsc{z}})(1-\gamma)\right \}  +2(1-\phi_{\textsc{z}%
})\gamma \label{eq:inequal_one}\\
&  \leq \left \{  1+\textsc{\protect\small IF}(h/\varrho_{\textsc{ex}},\widetilde{Q}_{\textsc{ex}})  \right \}  \left \{  \phi
_{\textsc{z}}+(1-\phi_{\textsc{z}})(1-\gamma)+2(1-\phi
_{\textsc{z}})\gamma \right \} \nonumber \\
&  =\left \{  1+\textsc{\protect\small IF}(h/\varrho_{\textsc{ex}},\widetilde{Q}_{\textsc{ex}})  \right \}  \left \{  2(1-\phi_{\textsc{z}%
}/2)-\frac{(1-\phi_{\textsc{z}})}{\pi_{\textsc{z}}\left(  1/\varrho
_{\textsc{z}}\right)  \pi_{\textsc{z}}\left(  \varrho_{\textsc{z}}\right)
}\right \},\nonumber
\end{align}
where we have used the identity $\phi_{\textsc{{z}}}=\int \lambda \widetilde{e}_{\textsc{z}}(\mathrm{d}\lambda)$. The last inequality is established by noting that $\textsc{\protect\small IF}(h/\varrho_{\textsc{ex}},\widetilde{Q}_{\textsc{ex}})$ and $\gamma$ are non-negative. Substituting the expression into (\ref{eq:RIF_equal}) establishes Part 1.
To establish the inequality of Part 2, we note that if $\textsc{\protect\small IF}(
h/\varrho_{\textsc{ex}},\widetilde{Q}_{\textsc{ex}})  \geq
1$, then (\ref{eq:inequal_one}) is bounded from above by
\begin{align*}
&  \left \{  1+\textsc{\protect\small IF}(h/\varrho_{\textsc{ex}},\widetilde{Q}_{\textsc{ex}})  \right \}  \left[  \phi_{\textsc{z}%
}+\frac{(1-\phi_{\textsc{z}})}{\pi_{\textsc{z}}\left(  1/\varrho_{\textsc{{z}%
}}\right)  \pi_{\textsc{z}}\left(  \varrho_{\textsc{z}}\right)  }+(1-\phi
_{\textsc{z}})\left \{  1-\frac{1}{\pi_{\textsc{z}}\left(  1/\varrho
_{\textsc{z}}\right)  \pi_{\textsc{z}}\left(  \varrho_{\textsc{z}}\right)
}\right \}  \right] \\
&  =  1+\textsc{\protect\small IF}(h/\varrho_{\textsc{ex}},\widetilde{Q}_{\textsc{ex}})   .
\end{align*}
\end{proof}
\vskip 1pt
\begin{proof}
[Proof of Corollary \ref{corollary:boundsQexjump}] We establish the upper
bound $\textsc{\protect\small uRIF}_{3}\left(  h\right)$ of Part 1 by first noting that (\ref{eq:inequal_one}) implies
\begin{equation*}
A\leq \left \{  1+\textsc{\protect\small IF}(h/\varrho_{\textsc{ex}},\widetilde{Q}_{\textsc{ex}})  \right \}  \left \{  \phi_{\textsc{z}%
}+(1-\phi_{\textsc{z}})(1-\gamma)\right \}  +2(1-\phi_{\textsc{z}%
})\gamma,
\end{equation*}
with $A$ is the quantity in (\ref{eq:identityB}), $\gamma$ given by (\ref{eq:identitygamma}) and $\phi_{\textsc{{z}%
}}=\int \lambda \widetilde{e}_{\textsc{z}}(\mathrm{d}\lambda)$.
Upon
substituting into (\ref{eq:RIF_equal}), we obtain
\begin{align*}
\textsc{\protect\small RIF}\left(  h,Q^{\ast}\right)  +\frac{1}{\textsc{\protect\small IF}%
(h,Q_{\textsc{ex}})} &  \leq \frac{\left \{  1+\textsc{\protect\small IF}%
(h,Q_{\textsc{ex}})\right \}  }{\textsc{\protect\small IF}(h,Q_{\textsc{ex}}%
)}\left \{  \phi_{\textsc{z}}\pi_{\textsc{z}}\left(  \varrho_{\textsc{z}%
}^{-1}\right)  +\frac{(1-\phi_{\textsc{z}})}{\pi_{\textsc{z}}\left(
\varrho_{\textsc{z}}\right)  }\right \}  \\
&\qquad  +\frac{2(1-\phi_{\textsc{z}})\left \{  1+\textsc{\protect\small IF}%
(h,Q_{\textsc{ex}})\right \}  }{\textsc{\protect\small IF}(h,Q_{\textsc{ex}%
})\left \{  1+\textsc{\protect\small IF}(h/\varrho_{\textsc{ex}},\widetilde{Q}_{\textsc{ex}})  \right \}  }\left \{  \pi_{\textsc{z}}\left(
\varrho_{\textsc{z}}^{-1}\right)  -\frac{1}{\pi_{\textsc{z}}\left(
\varrho_{\textsc{z}}\right)  }\right \},
\end{align*}
and, after further manipulations,
\begin{align*}
\textsc{\protect\small RIF}\left(  h,Q^{\ast}\right)   &  \leq \phi_{\textsc{z}%
}\left \{  \pi_{\textsc{z}}\left(  1/\varrho_{\textsc{z}}\right)  -1/\pi
_{\textsc{z}}\left(  \varrho_{\textsc{z}}\right)  \right \}  +1/\pi_{\textsc{z}}\left(
\varrho_{\textsc{z}}\right)  \\
&\qquad  +\frac{1}{\textsc{\protect\small IF}({h},Q_{\textsc{ex}})}\left[  \phi
_{\textsc{z}}\left \{  \pi_{\textsc{z}}\left(  1/\varrho_{\textsc{z}%
}\right)  -1/\pi_{\textsc{z}}\left(  \varrho_{\textsc{z}}\right)  \right \}
+\frac{1}{\pi_{\textsc{z}}\left(  \varrho_{\textsc{z}}\right)  }-1\right]  \\
&\qquad\qquad  +2\frac{\{1+1/\textsc{\protect\small IF}({h},Q_{\textsc{ex}})\}}{1+\textsc{\protect\small IF}%
(h/\varrho_{\textsc{ex}},\widetilde{Q}_{\textsc{ex}})}\left \{
\pi_{\textsc{z}}\left(  1/\varrho_{\textsc{z}}\right)  -1/\pi_{\textsc{z}}\left(
\varrho_{\textsc{z}}\right)  \right \}  (1-\phi_{\textsc{z}})\\
&  \leq \phi_{\textsc{z}}\left \{  \pi_{\textsc{z}}\left(  1/\varrho
_{\textsc{z}}\right)  -1/\pi_{\textsc{z}}\left(  \varrho_{\textsc{z}}\right)  \right \}  +1/\pi_{\textsc{z}}\left(  \varrho_{\textsc{z}}\right)\\
&\qquad  +\frac{1}{\textsc{\protect\small IF}(h/\varrho_{\textsc{ex}},\widetilde
{Q}_{\textsc{ex}})}\left[  \phi_{\textsc{z}}\left \{  \pi
_{\textsc{z}}\left(  1/\varrho_{\textsc{z}}\right)  -1/\pi_{\textsc{z}}\left(
\varrho_{\textsc{z}}\right)  \right \}  +\frac{1}{\pi_{\textsc{z}}\left(
\varrho_{\textsc{z}}\right)  }-1\right]  \\
&\qquad\qquad  +\frac{2}{\textsc{\protect\small IF}(h/\varrho_{\textsc{ex}},\widetilde
{Q}_{\textsc{ex}})}\left \{  \pi_{\textsc{z}}\left(  1/\varrho_{\textsc{{z}%
}}\right)  -1/\pi_{\textsc{z}}\left(  \varrho_{\textsc{z}}\right)  \right \}
(1-\phi_{\textsc{z}}),
\end{align*}
as $\textsc{\protect\small IF}(h/\varrho_{\textsc{ex}},\widetilde{Q}_{\textsc{ex}})\leq \textsc{\protect\small IF}({h},Q_{\textsc{ex}})$ from Proposition 2.

To establish the upper bound $\textsc{\protect\small uRIF}_{4}\left(  h\right)  $ of Part 2, we use that, in the
right hand side of the equality of (\ref{eq:RIF_equal}), the term $\beta$ defined in (\ref{eq:betadefn}) and appearing in $A$ satisfies the inequality%
\begin{equation}
\beta=\iint \frac{2}{(1-\lambda \omega)}\widetilde{e}_{\textsc{z}%
}(\mathrm{d}\lambda)\widetilde{e}_{\textsc{ex}}(\mathrm{d}\omega
)\leq \int \frac{2}{(1-\lambda)}\widetilde{e}_{\textsc{z}}(\mathrm{d}%
\lambda)=1+\textsc{\protect\small IF}(1/\varrho_{\textsc{z}},\widetilde{Q}%
_{\textsc{z}}),\label{eq:beta_inequal}%
\end{equation}
where $\textsc{\protect\small IF}(1/\varrho_{\textsc{z}},\widetilde{Q}%
_{\textsc{z}})=\int(1+\lambda)/(1-\lambda)\widetilde{e}_{\textsc{z}}(\mathrm{d}\lambda)<\infty$, by assumption. Therefore, upon substituting
into (\ref{eq:RIF_equal}), we obtain
\begin{align*}
\textsc{\protect\small RIF}\left(  h,Q^{\ast}\right)
&\leq \frac{\pi_{\textsc{z}}\left(
\varrho_{\textsc{z}}^{-1}\right)  \left \{  1+\textsc{\protect\small IF}%
(h,Q_{\textsc{ex}})\right \}  }{\textsc{\protect\small IF}(h,Q_{\textsc{ex}%
})\{1+\textsc{\protect\small IF}(h/\varrho_{\textsc{ex}},\widetilde{Q}%
_{\textsc{ex}})\}}\left \{  1-\frac{1}{\pi_{\textsc{z}}\left(  \varrho
_{\textsc{z}}^{-1}\right)  \pi_{\textsc{z}}\left(  \varrho_{\textsc{z}%
}\right)  }\right \}  \{1+\textsc{\protect\small IF}(1/\varrho_{\textsc{z}}%
,\widetilde{Q}_{\textsc{z}})\} \\
&\qquad  +\frac{\left \{
1+\textsc{\protect\small IF}(h,Q_{\textsc{ex}})\right \}  }{\textsc{\protect\small IF}%
(h,Q_{\textsc{ex}})}\frac{1}{\pi_{\textsc{z}}\left(  \varrho_{\textsc{z}}\right)  }-\frac
{1}{\textsc{\protect\small IF}(h,Q_{\textsc{ex}})}\\
&  =\frac{1}{\pi_{\textsc{z}}\left(  \varrho_{\textsc{z}}\right)  }%
+\frac{\left \{  1+1/\textsc{\protect\small IF}(h,Q_{\textsc{ex}})\right \}
}{1+\textsc{\protect\small IF}(h/\varrho_{\textsc{ex}},\widetilde{Q}%
_{\textsc{ex}})}\left \{  \pi_{\textsc{z}}\left(  \varrho_{\textsc{z}%
}^{-1}\right)  -1/\pi_{\textsc{z}}\left(  \varrho_{\textsc{z}}\right)  \right \}
\{1+\textsc{\protect\small IF}(1/\varrho_{\textsc{z}},\widetilde{Q}_{\textsc{z}%
})\} \\
&\qquad  +\frac{1}{\textsc{\protect\small IF}(h,Q_{\textsc{ex}})}\left \{  \frac{1}{\pi
_{\textsc{z}}\left(  \varrho_{\textsc{z}}\right)  }-1\right \}  \\
&  \leq \frac{1}{\pi_{\textsc{z}}\left(  \varrho_{\textsc{z}}\right)  }%
+\frac{\left \{  1+1/\textsc{\protect\small IF}(h/\varrho_{\textsc{ex}},\widetilde
{Q}_{\textsc{ex}})\right \}  }{1+\textsc{\protect\small IF}(h/\varrho
_{\textsc{ex}},\widetilde{Q}_{\textsc{ex}})}\left \{  \pi
_{\textsc{z}}\left(  \varrho_{\textsc{z}}^{-1}\right)  -1/\pi_{\textsc{z}}\left(
\varrho_{\textsc{z}}\right)  \right \}  \{1+\textsc{\protect\small IF}(1/\varrho
_{\textsc{z}},\widetilde{Q}_{\textsc{z}})\} \\
& \qquad +\frac{1}{\textsc{\protect\small IF}(h/\varrho_{\textsc{ex}},\widetilde
{Q}_{\textsc{ex}})}\left \{  \frac{1}{\pi_{\textsc{z}}\left(  \varrho
_{\textsc{z}}\right)  }-1\right \}  ,
\end{align*}
as $\textsc{\protect\small IF}(h/\varrho_{\textsc{ex}},\widetilde{Q}_{\textsc{ex}})
\leq \textsc{\protect\small IF}(h,Q_{\textsc{ex}})$.

To establish the inequality of Part 3, we combine (\ref{eq:identityB}) and (\ref{eq:RIF_equal}) to obtain%
\begin{align}
\textsc{{\small RIF}}\left(  h,Q^{\ast}\right)
&=\frac{\pi_{\textsc{z}}\left(
\varrho_{\textsc{z}}^{-1}\right)  \left\{  1+1/\textsc{{\small IF}%
}(h,Q_{\textsc{ex}})\right\}  }{1+\textsc{{\small IF}}(h/\varrho_{\textsc{ex}%
},\widetilde{Q}_{\textsc{ex}})}\gamma\beta+(1-\gamma)\pi_{\textsc{z}}\left(
\varrho_{\textsc{z}}^{-1}\right)  \left\{  1+1/\textsc{{\small IF}%
}(h,Q_{\textsc{ex}})\right\}  -\frac{1}{\textsc{{\small IF}}(h,Q_{\textsc{ex}%
})}\notag\\
&=\frac{1}{\pi_{\textsc{z}}\left(  \varrho_{\textsc{z}}\right)  }%
+\frac{\left\{  1+1/\textsc{{\small IF}}(h,Q_{\textsc{ex}})\right\}
}{1+\textsc{{\small IF}}(h/\varrho_{\textsc{ex}},\widetilde{Q}_{\textsc{ex}}%
)}\left\{  \pi_{\textsc{z}}\left(  \varrho_{\textsc{z}}^{-1}\right)
-1/\pi_{\textsc{z}}\left(  \varrho_{\textsc{z}}\right)  \right\}  \beta
+\frac{\{1/\pi_{\textsc{z}}\left(  \varrho_{\textsc{z}}\right)  -1\}}%
{\textsc{{\small IF}}(h,Q_{\textsc{ex}})}\notag\\
&\geq\frac{1}{\pi_{\textsc{z}}\left(  \varrho_{\textsc{z}}\right)  }%
+\frac{2\left\{  1+1/\textsc{{\small IF}}(h,Q_{\textsc{ex}})\right\}
}{1+\textsc{{\small IF}}(h/\varrho_{\textsc{ex}},\widetilde{Q}_{\textsc{ex}}%
)}\left\{  \pi_{\textsc{z}}\left(  \varrho_{\textsc{z}}^{-1}\right)
-1/\pi_{\textsc{z}}\left(  \varrho_{\textsc{z}}\right)  \right\}
+\frac{\{1/\pi_{\textsc{z}}\left(  \varrho_{\textsc{z}}\right)  -1\}}%
{\textsc{{\small IF}}(h,Q_{\textsc{ex}})}\label{eq:rifineqthirdline}\\
&\geq\frac{1}{\pi_{\textsc{z}}\left(  \varrho_{\textsc{z}}\right)  }+\frac
{2}{1+\textsc{{\small IF}}(h/\varrho_{\textsc{ex}},\widetilde{Q}_{\textsc{ex}%
})}\left\{  \pi_{\textsc{z}}\left(  \varrho_{\textsc{z}}^{-1}\right)
-1/\pi_{\textsc{z}}\left(  \varrho_{\textsc{z}}\right)  \right\}\notag.
\end{align}
The first inequality follows because the identity for $\beta$ given in (\ref{eq:beta_inequal}) shows that $\beta\geq2$ when $\widetilde{Q}_{\textsc{ex}}$ is positive.
The second inequality follows from $\textsc{{\small IF}}(h,Q_{\textsc{ex}})\geq0$.

From (\ref{eq:RIF_equal}), we have $\textsc{{\small RIF}}\left(  h,Q^{\ast}\right)\geq1/{\pi_{\textsc{z}}\left(  \varrho_{\textsc{z}}\right)  }$
as the second and third terms on the left hand side of the inequality~\eqref{eq:rifineqthirdline} are both
positive. This establishes the inequality of Part 4. We examine the limit of $\textsc{{\small RIF}}\left(  h,Q^{\ast
}\right)  $ as $\textsc{{\small IF}}(h/\varrho_{\textsc{ex}},\widetilde{Q}%
_{\textsc{ex}})\rightarrow\infty$, again noting that $\textsc{{\small IF}%
}(h/\varrho_{\textsc{ex}},\widetilde{Q}_{\textsc{ex}})\leq\textsc{{\small IF}%
}(h,Q_{\textsc{ex}})$. Using the inequality for $\beta$ given by
(\ref{eq:beta_inequal}) and the fact that $\textsc{{\small IF}}%
(1/\varrho_{\textsc{z}},\widetilde{Q}_{\textsc{z}})<\infty$ by Lemma~{3}, we obtain the
limiting form, as $\textsc{{\small IF}}(h/\varrho_{\textsc{ex}},\widetilde{Q}%
_{\textsc{ex}})\rightarrow\infty,$ given by (\ref{eq:lowerboundonRIFQ*}) for $\textsc{{\small RIF}}\left(  h,Q^{\ast}\right)  $.
\end{proof}
\section*{Appendix 2}
We exploit the two upper bounds ${\textsc{{\protect \small uRCT}}}_{3}(h;\sigma)$ and ${\textsc{{\protect \small uRCT}}}_{4}(h;\sigma)$, together with the lower bound ${\textsc{{\protect \small lRCT}}}_{1}(h;\sigma)$, in order to find an interval where the optimal value $\sigma_\text{opt}$ for ${\textsc{\protect\small RCT}}(h, Q^{\ast};\sigma)$ lies. We consider how this interval varies as $\textsc{\protect \small IF}(h/\varrho_\textsc{ex},\widetilde{Q}_{\textsc{ex}})$ increases. To do this, we compute the interval where ${\textsc{{\protect \small lRCT}}}_{1}(h;\sigma)$ lies below the minimum of $\inf_{\sigma}{\textsc{{\protect \small uRCT}}}_{3}(h;\sigma)$, and $\inf_{\sigma}{\textsc{{\protect \small uRCT}}}_{4}(h;\sigma)$. Table~\ref{tab:sandwich} displays this interval together with the minimum of the two upper bounds and the minimum of the lower bound. It is straightforward to see that $\sigma_\text{opt}$ is contained in this interval and ${\textsc{\protect\small RCT}}(h, Q^{\ast};\sigma_\text{opt})$ is contained in the corresponding interval in Table~\ref{tab:sandwich}. It is apparent that the intervals tighten as $\textsc{\protect \small IF}(h/\varrho_\textsc{ex},\widetilde{Q}_{\textsc{ex}})$ increases. Similarly the endpoints of the interval containing $\textsc{\protect\small RCT}(h, Q^\ast;\sigma_\text{opt})$ both decrease whilst the lower endpoint of the interval containing $\sigma_\text{opt}$ increases.
\begin{table}[!h]
\captionsetup{font=small, labelsep=period}
\caption{\textsl{{Sandwiching results based upon different values of
}$\textsc{{\protect \small IF}}(h/\varrho_{\textsc{ex}},\widetilde
{Q}_{\textsc{ex}})$. {These are based upon the upper bounds for
}$\textsc{{\protect \small RCT}}({h},Q^{\ast};\sigma)$ {given by
}${\textsc{{\protect \small uRCT}}}_{3}(h;\sigma)$ {and }%
${\textsc{{\protect \small uRCT}}}_{4}(h;\sigma)$ {and upon the lower
bound }${\textsc{{\protect \small lRCT}}}_{1}(h;\sigma)$.}}
\vskip -15pt
\begin{center}\small
\begin{tabular}{cccccc}
$\textsc{\protect \small IF}(h/\varrho_\textsc{ex},\widetilde{Q}_{\textsc{ex}})$ & 1 & 10 & 25 & 100 & 1000\\
$\textsc{\protect\small RCT}(h, Q^\ast;\sigma_\text{opt})$ & (3.201, 5.327) & (2.020, 2.256) & (1.773, 1.876) &
(1.595, 1.625) & (1.518, 1.522)\\
$\sigma_\text{opt}$  & (0.548, 1.572) & (1.018, 1.598) & (1.205, 1.658) &
(1.421, 1.730) & (1.607, 1.730)
\end{tabular}
\end{center}
\label{tab:sandwich}%
\vskip -12pt
\end{table}
{\center\huge\textbf{Supplementary Material}}
\section{Contents}
This supplement provides some technical proofs and an additional example
for the paper \textquotedblleft Efficient implementation of Markov chain Monte
Carlo when using an unbiased likelihood estimator\textquotedblright. Section
\ref{Sec:ProofofProposition2} presents the proof of
Proposition 2. Section \ref{sec:technicalresultsmainpaper} presents the
proofs of Propositions 3 and 4 and Lemmas 1 and 3. Section \ref{sec:technicalresultssupplementary} presents some auxiliary
technical results. Section \ref{section:asymptoticupperbound}
illustrates the upper bound on the inefficiency of Part 4 of Corollary 2 and
compares it to the results in \cite{Sherlock2013efficiency}. Section
\ref{SS: AR plus noise} applies the pseudo-marginal algorithm to a linear Gaussian state-space model and presents additional simulation results for the stochastic volatility model discussed in the main paper.
Section \ref{Sec:Numericalprocedures} explains how the bounds on the inefficiency introduced in Section 3.5 of the main paper are computed.

All code was implemented in the \texttt{Ox} language with
pre-compiled \texttt{C} code for computationally intensive routines.

%
%
\section{Proof of Proposition~2\label{Sec:ProofofProposition2}}
The proof of Proposition~2 relies on Lemmas 5 to 8, which are given below. Lemmas 5 to 7 establish that $h/\varrho \in L^2(\mathsf{X},\tilde{\mu})$ and $\textsc{\protect\small IF}({h/\varrho},{\tP})<\infty$ whenever $\textsc{\protect\small IF}(h,P)<\infty$. To prove this result, we define the map that sends the functional $h$ to $h/\varrho$ as a linear operator between two Hilbert spaces, $\mathcal{H}$ and $\tilde{\mathcal{H}}$ defined below. The space $\mathcal{H}$, respectively  $\tilde{\mathcal{H}}$, corresponds to the set of functions having finite inefficiencies under $P$, respectively under $\tilde{P}$. We then exploit the structure of the Metropolis--Hastings type kernel $P$ to prove that this linear operator is bounded on a dense subspace $\mathcal{H}_P\subset \mathcal{H}$, which allows us to extend the operator to $\mathcal{H}$. The proof is then completed by checking that the unique extension constructed this way is the one required.
Lemma 8 is a general result on the central limit theorem for reversible and ergodic Markov chains which are not started in their stationary regime. The proof of Proposition 2 uses these preliminary results to establish the identity of interest.

Using the notation
of Proposition~2, we write $\Vert \cdot \Vert_{\mu}$, $\langle \cdot,\cdot
\rangle_{\mu}$ for the norm and inner product of $L^{2}(\mathsf{X,}\mu)$, with a similar notation for $L^{2}(\mathsf{X,}\tilde{\mu})$.
By reversibility of $P$ and $\tilde{P}$ with respect to $\mu$ and $\tilde{\mu}$ respectively, it is easy to check that $(I-P)$ and $(I-\tilde{P})$ are positive, self-adjoint operators on $L^{2}(\mathsf{X,}\mu)$ and
 $L^{2}(\mathsf{X,}\tilde{\mu})$ respectively. By Theorem~13.11 in \cite{Rud91}, the inverses $(I-P)^{-1}$ and $(I-\tilde{P})^{-1}$ are densely defined and self-adjoint. They are also positive, since for any $f\in \mathrm{Domain}\{ (I-P)^{-1}\}$, there exists a function $g$ such that $f= (I-P)g$, and thus
$$\langle (I-P)^{-1}f, f\rangle_{\mu} = \langle (I-P)^{-1} (I-P)g, (I-P)g\rangle_{\mu} =
\langle g, (I-P)g\rangle_{\mu} \geq 0,$$
since $I-P$ is positive.
Therefore, by Theorem~13.31 in \cite{Rud91}, there exists a unique, self-adjoint, positive operator
$(I-P)^{-1/2}$ such that $(I-P)^{-1} = (I-P)^{-1/2} (I-P)^{-1/2}$.
Finally, since $(I-P)^{-1}$ is densely defined, so is $(I-P)^{-1/2}$.
 Similar considerations show the existence and uniqueness of the positive, self-adjoint operator $(I-\tilde{P})^{-1/2}$, which is densely defined on $L^{2}(\mathsf{X,}\tilde{\mu})$.

We now introduce the inner product
spaces $(\mathcal{H},\langle \cdot,\cdot \rangle_{\mathcal{H}})$ and
$(\tilde{\mathcal{H}},\langle \cdot,\cdot \rangle_{\tilde{\mathcal{H}}})$, where
\begin{align*}
\mathcal{H}  &  =\{f\in L_{0}^{2}(\mathsf{X,}\mu):\Vert f\Vert_{\mu}^{2}%
+\Vert(I-P)^{-1/2}f\Vert_{\mu}^{2}<\infty \},\\
\langle f,g\rangle_{\mathcal{H}}  &  =\langle f,g\rangle_{\mu}+\langle
(I-P)^{-1/2}f,(I-P)^{-1/2}g\rangle_{\mu},\\
\tilde{\mathcal{H}}  &  =\{f\in L_{0}^{2}(\mathsf{X,}\tilde{\mu}):\Vert f\Vert
_{\tilde{\mu}}^{2}+\Vert(I-\widetilde{P})^{-1/2}f\Vert_{\tilde{\mu}}%
^{2}<\infty \},\\
\langle f,g\rangle_{\tilde{\mathcal{H}}}  &  =\langle f,g\rangle_{\tilde{\mu}}%
+\langle(I-\widetilde{P})^{-1/2}f,(I-\widetilde{P})^{-1/2}g\rangle_{\tilde
{\mu}}.
\end{align*}
 Clearly the space $\mathcal{H}$, respectively  $\tilde{\mathcal{H}}$, corresponds to the set of functions having finite inefficiencies under $P$, respectively under $\tilde{P}$.

\begin{lemma}\label{lem:hilbert}
Let $P$ and $\widetilde{P}$ be ergodic. Then $(\mathcal{H},\langle \cdot
,\cdot \rangle_{\mathcal{H}})$ and $(\tilde{\mathcal{H}},\langle \cdot,\cdot
\rangle_{\tilde{\mathcal{H}}})$ are Hilbert spaces.
\end{lemma}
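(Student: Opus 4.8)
The plan is to identify each of $\mathcal{H}$ and $\tilde{\mathcal{H}}$ with the domain of a closed operator equipped with its graph inner product, and then to invoke the textbook fact that the domain of a closed operator, taken with the graph norm, is complete.

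First I would set $A=(I-P)^{-1/2}$, the densely defined, positive, self-adjoint operator on $L_{0}^{2}(\mathsf{X},\mu)$ constructed in the paragraph preceding the lemma; here ergodicity of $P$ together with its $\mu$-reversibility is exactly what guarantees that $I-P$ is injective on $L_{0}^{2}(\mathsf{X},\mu)$, hence that $(I-P)^{-1}$, and thus $A$, are well defined single-valued operators. Unravelling the definitions, $f\in\mathcal{H}$ if and only if $f\in\mathrm{Domain}(A)$, and in that case $\langle f,g\rangle_{\mathcal{H}}=\langle f,g\rangle_{\mu}+\langle Af,Ag\rangle_{\mu}$ is precisely the graph inner product of $A$; it is a genuine inner product since $\langle f,f\rangle_{\mathcal{H}}=0$ forces $\Vert f\Vert_{\mu}=0$, hence $f=0$. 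The same discussion applies verbatim to $\tilde{\mathcal{H}}$ with $\widetilde{A}=(I-\widetilde{P})^{-1/2}$ on $L_{0}^{2}(\mathsf{X},\tilde{\mu})$.

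Next I would use that $A$ and $\widetilde{A}$, being self-adjoint, are closed operators (a self-adjoint operator equals its adjoint, whose graph is always closed; see \cite{Rud91}), and then recall the standard completeness argument: if $(f_{n})$ is Cauchy for the graph norm of a closed operator $A$ on a Hilbert space $K$, then $(f_{n})$ converges in $K$ to some $f$ and $(Af_{n})$ converges in $K$ to some $g$; closedness gives $f\in\mathrm{Domain}(A)$ with $Af=g$, so $f_{n}\to f$ in the graph norm, whence $\mathrm{Domain}(A)$ is complete. Applying this with $K=L_{0}^{2}(\mathsf{X},\mu)$ and $A=(I-P)^{-1/2}$ shows $\mathcal{H}$ is a Hilbert space, and with $K=L_{0}^{2}(\mathsf{X},\tilde{\mu})$ and $\widetilde{A}=(I-\widetilde{P})^{-1/2}$ shows $\tilde{\mathcal{H}}$ is a Hilbert space.

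The only slightly delicate point, and the main though modest obstacle, lies in the bookkeeping of the first step: one must check that the finiteness condition $\Vert f\Vert_{\mu}^{2}+\Vert(I-P)^{-1/2}f\Vert_{\mu}^{2}<\infty$ defining $\mathcal{H}$ is exactly membership of $f$ in $\mathrm{Domain}((I-P)^{-1/2})$, so that $\mathcal{H}$ genuinely \emph{is} that domain rather than merely a subset of it, and one must confirm, as in the functional-calculus construction preceding the lemma, that the operator produced there has domain consisting of precisely all such $f$. Once this identification is secured, the conclusion is the closed-operator completeness argument recalled above, and no further estimates are required.
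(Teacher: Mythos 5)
Your proposal is correct and follows essentially the same route as the paper: both arguments use ergodicity to get injectivity of $I-P$ on $L_0^2$, identify $\mathcal{H}$ and $\tilde{\mathcal{H}}$ with the domains of the self-adjoint (hence closed) operators $(I-P)^{-1/2}$ and $(I-\widetilde{P})^{-1/2}$ equipped with the graph inner product, and conclude completeness from closedness. The only cosmetic difference is that you spell out the graph-norm completeness argument, whereas the paper cites Proposition~1.4 of \cite{Schm12} for that step.
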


\begin{proof}
Since $P$ and $\widetilde{P}$ are ergodic, the only solutions in
$L^{2}(\mathsf{X,}\mu)$ and $L^{2}(\mathsf{X,}\tilde{\mu})$, of
$h=Ph$, respectively $g=\widetilde{P}g$, are almost surely constant with
respect to $\mu$ and $\tilde{\mu}$.
If $f=Pf$ $\mu-$almost surely, then
$$0=\Vert f-Pf\Vert_{\mu}^{2}=\int_{-1}^1 (1-\lambda)^2 e(f,P)(\mathrm{d}\lambda),$$
where $e(f,P)$ is the spectral measure of $P$ with respect to the function $f$,
and therefore $e(f,P)$ must be an atom at 1, which is impossible as $P$ is ergodic; see the proof of Lemma 17 in \citet{rosenthalCLT2007} and Proposition~17.4.1 in \citet{MT09}.
Since $I-P$ and $I-\tilde{P}$ are injective in $L_{0}^{2}(\mu)$
and $L_{0}^{2}(\tilde{\mu})$ respectively,
$(I-P)^{1/2}$ and $(I-\widetilde{P})^{1/2}$ must also be injective on the corresponding spaces, because $(I-P)^{1/2}h=0$ implies $(I-P)h=0$.
In addition, as mentioned above, these operators are
self-adjoint and thus their inverses, $(I-P)^{-1/2}$ and $(I-\widetilde
{P})^{-1/2}$, are densely defined and self-adjoint by Theorem~13.11 in \cite%
{Rud91}.

By Theorem~13.9 in \cite{Rud91}, $(I-P)^{-1/2}$ and $(I-\widetilde
{P})^{-1/2}$ are \textit{closed operators} on
$L_{0}^{2}(\mathsf{X,}\mu)$ and $L_{0}^{2}(\mathsf{X,}\tilde{\mu})$
respectively because they are self-adjoint. By Section~13.1 in \citet{Rud91}, a possibly unbounded operator $T$ on a Hilbert space
$\mathcal{F}$ is said to be closed if and only if
its graph
$$\mathfrak{G}(T)=\{ (x,Tx): x\in \mathcal{F} \},$$
is a closed subset of $\mathcal{F}\times \mathcal{F}$. Equivalently $T$ is closed if
$x_{n}\rightarrow x$ and $Tx_{n}%
\rightarrow y$ implies $Tx=y$. In particular, $x$ is in the domain of $T$.
It follows that $(\mathcal{H},\langle \cdot,\cdot \rangle
_{\mathcal{H}})$ and $(\tilde{\mathcal{H}},\langle \cdot,\cdot \rangle_{\tilde{\mathcal{H}}})$
are Hilbert spaces by Proposition~1.4 in \citet{Schm12}.
\end{proof}

\begin{lemma}
The linear space
\[
\mathcal{H}_{P}=\mathrm{Range}\big \{(I-P)\big \}=\{h\in L_{0}^{2}%
(\mathsf{X},\mu):h=(I-P)g,\, \,g\in L^{2}(\mathsf{X},\mu)\}
\]
is dense in $\mathcal{H}$ in the norm induced by $\langle \cdot, \cdot \rangle_\mathcal{H}$.
\end{lemma}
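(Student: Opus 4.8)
The plan is to prove density by showing that the orthogonal complement of $\mathcal{H}_P$ inside the Hilbert space $(\mathcal{H},\langle\cdot,\cdot\rangle_{\mathcal{H}})$, which is complete by Lemma~\ref{lem:hilbert}, reduces to $\{0\}$. First I would record the inclusion $\mathcal{H}_P\subseteq\mathcal{H}$: since $I-P$ annihilates constants, every $h\in\mathcal{H}_P$ can be written $h=(I-P)g$ with $g\in L_0^2(\mathsf{X},\mu)$, so $h\in L_0^2(\mathsf{X},\mu)$ with $\Vert h\Vert_\mu\le 2\Vert g\Vert_\mu<\infty$; moreover $h=(I-P)g$ lies in $\mathrm{Range}(I-P)=\mathrm{Domain}\{(I-P)^{-1}\}\subseteq\mathrm{Domain}\{(I-P)^{-1/2}\}$, and $(I-P)^{-1/2}h=(I-P)^{-1/2}(I-P)g$ has finite $\mu$-norm because $(I-P)^{-1}h=g\in L_0^2(\mathsf{X},\mu)$.

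Next, fix $f\in\mathcal{H}$ with $\langle f,h\rangle_{\mathcal{H}}=0$ for every $h\in\mathcal{H}_P$ and write $h=(I-P)g$ with $g\in L_0^2(\mathsf{X},\mu)$. The heart of the argument is the identity
\[
\langle f,(I-P)g\rangle_{\mathcal{H}}=\langle f,(I-P)g\rangle_\mu+\langle (I-P)^{-1/2}f,(I-P)^{-1/2}(I-P)g\rangle_\mu=\langle f,(2I-P)g\rangle_\mu .
\]
The second equality uses that $(I-P)^{-1/2}$ is self-adjoint, that $f\in\mathrm{Domain}\{(I-P)^{-1/2}\}$ because $f\in\mathcal{H}$, and that $(I-P)g\in\mathrm{Domain}\{(I-P)^{-1}\}=\mathrm{Domain}\{(I-P)^{-1/2}(I-P)^{-1/2}\}$, so that $(I-P)^{-1/2}(I-P)g$ again lies in $\mathrm{Domain}\{(I-P)^{-1/2}\}$ and $(I-P)^{-1/2}(I-P)^{-1/2}(I-P)g=(I-P)^{-1}(I-P)g=g$; moving one copy of $(I-P)^{-1/2}$ across the inner product then produces $\langle f,g\rangle_\mu$. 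Hence $\langle f,(2I-P)g\rangle_\mu=0$ for all $g\in L_0^2(\mathsf{X},\mu)$.

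To conclude, I would use that $P$ is a genuine $\mu$-reversible Markov kernel, hence a self-adjoint contraction on $L^2(\mathsf{X},\mu)$ that leaves $L_0^2(\mathsf{X},\mu)$ invariant. Therefore $2I-P=2(I-P/2)$ is invertible on $L_0^2(\mathsf{X},\mu)$, since $\Vert P/2\Vert\le 1/2<1$, and in particular maps $L_0^2(\mathsf{X},\mu)$ onto itself. Consequently $\langle f,w\rangle_\mu=0$ for every $w\in L_0^2(\mathsf{X},\mu)$; taking $w=f$ gives $\Vert f\Vert_\mu=0$, so $f=0$ in $\mathcal{H}$. Thus $\mathcal{H}_P^{\perp}=\{0\}$ in $\mathcal{H}$ and $\mathcal{H}_P$ is dense.

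The step I expect to be the only real obstacle is the domain bookkeeping for the unbounded operator $(I-P)^{-1/2}$: one must check carefully that it may be moved across the $\mu$-inner product in the display above and that $(I-P)^{-1/2}(I-P)^{-1/2}(I-P)g=g$, both of which rest on self-adjointness (Theorem~13.11 in \cite{Rud91}) together with the observation that $(I-P)g$ lies in $\mathrm{Range}(I-P)$, which is exactly $\mathrm{Domain}\{(I-P)^{-1}\}$. Everything else is routine.
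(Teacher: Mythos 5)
Your proof is correct, but it takes a genuinely different route from the paper's. The paper argues constructively: given $h\in\mathcal{H}$ it defines the resolvent-type approximants $h_{\epsilon}=(I-P)\{(1+\epsilon)I-P\}^{-1}h\in\mathcal{H}_{P}$ and shows $\Vert(I-P)^{-1/2}(h_{\epsilon}-h)\Vert_{\mu}^{2}=\int_{-1}^{1}\epsilon^{2}(1+\epsilon-\lambda)^{-2}(1-\lambda)^{-1}e(h,P)(\mathrm{d}\lambda)\rightarrow0$ by dominated convergence, together with $\Vert h_{\epsilon}-h\Vert_{\mu}\rightarrow0$ since $I-P$ is bounded. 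You instead show $\mathcal{H}_{P}^{\perp}=\{0\}$ via the identity $\langle f,(I-P)g\rangle_{\mathcal{H}}=\langle f,(2I-P)g\rangle_{\mu}$ and the bounded invertibility of $2I-P=2(I-P/2)$ on $L_{0}^{2}(\mathsf{X},\mu)$; your domain bookkeeping for $(I-P)^{-1/2}$ is sound (the key point being $(I-P)^{-1/2}(I-P)g=(I-P)^{1/2}g$, after which only symmetry of the bounded operator $(I-P)^{1/2}$ is needed), and the reduction from ``orthogonal complement trivial'' to ``dense'' legitimately invokes the completeness of $\mathcal{H}$ established in Lemma~\ref{lem:hilbert}. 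What each approach buys: the paper's argument is self-contained at the level of the spectral measure, exhibits an explicit approximating family, and does not need the projection theorem; your duality argument is shorter, avoids spectral calculus on $h$ entirely, and isolates the clean algebraic fact that the $\mathcal{H}$-pairing against $\mathrm{Range}(I-P)$ is the $\mu$-pairing against $\mathrm{Range}(2I-P)=L_{0}^{2}(\mathsf{X},\mu)$, at the cost of leaning on the completeness of $\mathcal{H}$ and on the Neumann-series inversion of $I-P/2$.
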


\begin{proof}
For $h\in \mathcal{H}$, we have
\[
\|(I-P)^{-1/2} h\|_{\mu}=\int_{-1}^{1}\frac{e(h,P)(\mathrm{d}\lambda)}{1-\lambda}<\infty,
\]
where $e(h,P)$ is the spectral measure associated with
$h$ and $P$. For $\epsilon>0$, define
\[
h_{\epsilon}=(I-P)\big \{(1+\epsilon)I-P\big \}^{-1}h\in \mathcal{H}_{P}.
\]
Then,
\begin{align*}
\Vert(I-P)^{-1/2}(h_{\epsilon}-h)\Vert_{\mu}^{2}
&= \Big\|(I-P)^{-1/2}%
\Big[ (I-P)\{(1+\epsilon)I-P\}^{-1}-I\Big]h\Big\|_{\mu}^{2}\\
&  =\int_{-1}^{1}\frac{1}{1-\lambda}\Big(  \frac{1-\lambda}{1+\epsilon-\lambda}-1\Big)  ^{2} e(h,P)(\mathrm{d}\lambda)\\
&  =\int_{-1}^{1}(1-\lambda)\left(  \frac{1}{1+\epsilon-\lambda}-\frac
{1}{1-\lambda}\right)  ^{2}e(h,P)(\mathrm{d}\lambda)\\
&  =\int_{-1}^{1}\frac{\epsilon^{2} e(h,P)(\mathrm{d}\lambda)}{(1+\epsilon-\lambda)^{2}(1-\lambda)}.
\end{align*}
The integrand is bounded above by $1/(1-\lambda)$, since $|\lambda|\leq 1$ implies that $\epsilon^2/(1+\epsilon -\lambda)^2 \leq 1$,  and thus, by dominated
convergence, the integral vanishes as $\epsilon \rightarrow0$. Since $I-P$
is bounded, $\Vert h_{\epsilon}-h\Vert_{\mu}$ also vanishes. Therefore, $h_{\epsilon}\rightarrow h$ in $\mathcal{H}$. In particular,
$\mathcal{H}_{P}$ is dense in $\mathcal{H}$.
\end{proof}

\begin{lemma}\label{lem:finiteIF}
If $\textsc{\protect\small IF}(h,P)<\infty$, then $h/\varrho \in L^2(\mathsf{X},\tilde{\mu})$ and $\textsc{\protect\small IF}({h/\varrho},{\tP})<\infty$.
\end{lemma}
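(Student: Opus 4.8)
The plan is to realise $\Lambda\colon h\mapsto h/\varrho$ as a bounded linear operator from $\mathcal{H}$ into $\tilde{\mathcal{H}}$: first one shows it is bounded on the dense subspace $\mathcal{H}_{P}$, then extends it by density, and finally checks that the abstract extension is the pointwise quotient.

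First I would work on $\mathcal{H}_{P}$. Given $h\in\mathcal{H}_{P}$, write $h=(I-P)g$ with $g\in L_{0}^{2}(\mathsf{X},\mu)$; such a $g$ is unique because $I-P$ is injective on $L_{0}^{2}(\mathsf{X},\mu)$, and $g=(I-P)^{-1}h$. Since $P$ is $\psi$-irreducible, $\varrho>0$ everywhere by Lemma~1, so $\mu(\varrho)>0$; and as $0\leq\varrho\leq1$, the measure $\tilde{\mu}(\mathrm{d}x)=\mu(\mathrm{d}x)\varrho(x)/\mu(\varrho)$ has $\mu$-density at most $1/\mu(\varrho)$, whence $L^{2}(\mathsf{X},\mu)\subset L^{2}(\mathsf{X},\tilde{\mu})$ and in particular $g\in L^{2}(\mathsf{X},\tilde{\mu})$. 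The key identity, which holds for any Metropolis--Hastings type kernel, is $(I-P)g=\varrho\,(I-\widetilde{P})g$ $\mu$-almost everywhere; it follows at once from $Pg(x)=\int q(x,\mathrm{d}y)\alpha(x,y)g(y)+\{1-\varrho(x)\}g(x)$ and $\widetilde{P}g(x)=\varrho(x)^{-1}\int q(x,\mathrm{d}y)\alpha(x,y)g(y)$. Hence $h/\varrho=(I-\widetilde{P})g$, which has zero $\tilde{\mu}$-mean by $\tilde{\mu}$-invariance of $\widetilde{P}$, and which, written as $(I-\widetilde{P})^{1/2}\{(I-\widetilde{P})^{1/2}g\}$, lies in $\mathrm{Range}\{(I-\widetilde{P})^{1/2}\}=\mathrm{Domain}\{(I-\widetilde{P})^{-1/2}\}$, so that its $\tilde{\mathcal{H}}$-norm is well-defined.

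Next I would bound that norm. Using $h/\varrho=(I-\widetilde{P})g$ and the form of $\tilde{\mu}$,
\[
\langle g,(I-\widetilde{P})g\rangle_{\tilde{\mu}}=\langle g,h/\varrho\rangle_{\tilde{\mu}}=\frac{1}{\mu(\varrho)}\langle g,h\rangle_{\mu}=\frac{1}{\mu(\varrho)}\langle(I-P)^{-1}h,h\rangle_{\mu}=\frac{1}{\mu(\varrho)}\Vert(I-P)^{-1/2}h\Vert_{\mu}^{2}\leq\frac{1}{\mu(\varrho)}\Vert h\Vert_{\mathcal{H}}^{2}.
\]
Moreover $\Vert(I-\widetilde{P})^{-1/2}(h/\varrho)\Vert_{\tilde{\mu}}^{2}=\Vert(I-\widetilde{P})^{1/2}g\Vert_{\tilde{\mu}}^{2}=\langle g,(I-\widetilde{P})g\rangle_{\tilde{\mu}}$, and, as $\widetilde{P}$ is a $\tilde{\mu}$-self-adjoint contraction so that $(I-\widetilde{P})^{2}\leq2(I-\widetilde{P})$ in the operator order, $\Vert h/\varrho\Vert_{\tilde{\mu}}^{2}=\langle g,(I-\widetilde{P})^{2}g\rangle_{\tilde{\mu}}\leq2\langle g,(I-\widetilde{P})g\rangle_{\tilde{\mu}}$. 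Adding gives $\Vert h/\varrho\Vert_{\tilde{\mathcal{H}}}^{2}\leq3\langle g,(I-\widetilde{P})g\rangle_{\tilde{\mu}}\leq 3\Vert h\Vert_{\mathcal{H}}^{2}/\mu(\varrho)$, so $h/\varrho\in\tilde{\mathcal{H}}$ and $\Lambda$ is bounded on $\mathcal{H}_{P}$. Since $\mathcal{H}_{P}$ is dense in $\mathcal{H}$ by the preceding lemma and $\tilde{\mathcal{H}}$ is complete by Lemma~\ref{lem:hilbert}, $\Lambda$ extends uniquely to a bounded linear operator $\bar{\Lambda}\colon\mathcal{H}\to\tilde{\mathcal{H}}$.

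Finally I would identify $\bar{\Lambda}h$ with $h/\varrho$ for every $h\in\mathcal{H}$. Choose $h_{n}\in\mathcal{H}_{P}$ with $h_{n}\to h$ in $\mathcal{H}$; since $\Vert\cdot\Vert_{\mu}\leq\Vert\cdot\Vert_{\mathcal{H}}$ and $\Vert\cdot\Vert_{\tilde{\mu}}\leq\Vert\cdot\Vert_{\tilde{\mathcal{H}}}$, we have $h_{n}\to h$ in $L^{2}(\mathsf{X},\mu)$ and $h_{n}/\varrho=\Lambda h_{n}\to\bar{\Lambda}h$ in $L^{2}(\mathsf{X},\tilde{\mu})$. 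Passing to a subsequence, $h_{n}\to h$ $\mu$-almost everywhere; because $\varrho>0$ $\mu$-almost everywhere and $\tilde{\mu}$ is absolutely continuous with respect to $\mu$, this forces $h_{n}/\varrho\to h/\varrho$ $\tilde{\mu}$-almost everywhere; passing to a further subsequence, $\bar{\Lambda}h_{n}\to\bar{\Lambda}h$ $\tilde{\mu}$-almost everywhere, so $\bar{\Lambda}h=h/\varrho$ $\tilde{\mu}$-almost everywhere. As $\textsc{\protect\small IF}(h,P)<\infty$ together with $h\in L_{0}^{2}(\mathsf{X},\mu)$ is precisely the assertion $h\in\mathcal{H}$, we conclude $h/\varrho=\bar{\Lambda}h\in\tilde{\mathcal{H}}$, i.e.\ $h/\varrho\in L^{2}(\mathsf{X},\tilde{\mu})$ and $\textsc{\protect\small IF}(h/\varrho,\widetilde{P})<\infty$. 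I expect the main obstacle to be this last identification step, matching the abstract density extension with the concrete quotient, together with the bookkeeping of the domains of the unbounded operators $(I-P)^{-1/2}$ and $(I-\widetilde{P})^{-1/2}$ in the norm estimate.
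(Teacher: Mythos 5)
Your proposal is correct and follows essentially the same route as the paper's proof: realise $h\mapsto h/\varrho$ via the identity $(I-P)g=\varrho(I-\widetilde{P})g$ as a bounded operator on the dense subspace $\mathcal{H}_P$, extend by density using the completeness of $\tilde{\mathcal{H}}$, and identify the extension with the pointwise quotient through almost-sure subsequence limits. Your norm estimate, using $(I-\widetilde{P})^2\leq 2(I-\widetilde{P})$ to get the explicit constant $3/\mu(\varrho)$, is a slightly cleaner packaging of the same computation the paper performs with the operator norm of $(I-\widetilde{P})^{1/2}$.
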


\begin{proof}
For $h\in \mathcal{H}_{P}$, there exists $g\in L^{2}(\mathsf{X,}\mu)$ such
that
\[
h(x)=(I-P)g(x)=\varrho(x)(I-\widetilde{P})g(x).
\]
Therefore, $h(x)/\varrho(x)=(I-\widetilde{P})g(x)\in \tilde{\mathcal{H}}$, since
$\Vert g\Vert_{\tilde{\mu}}^{2}\leq \Vert g\Vert_{\mu}^{2}/\mu(\varrho)$. Thus, we can define the multiplication operator $T:\mathcal{H}_{P}\rightarrow
\tilde{\mathcal{H}}$ by $T:h\rightarrow h/\varrho$.

Let $h\left(  x\right)  =(I-P)g(x)$. Then,
\[
\Vert h\Vert_{\mathcal{H}}^{2}=\Vert h\Vert_{\mu}^{2}+\langle h,(I-P)^{-1}%
(I-P)g\rangle_{\mu}\geq \langle h,g\rangle_{\mu},
\]
because $I-P$ is self-adjoint. Similarly,
\begin{align*}
\Vert Th\Vert_{\tilde{\mathcal{H}}}^{2}  &  =\Vert h/\varrho \Vert_{\tilde{\mu}}%
^{2}+\Vert(I-\widetilde{P})^{-1/2}(h/\varrho)\Vert_{\tilde{\mu}}^{2}\\
&  =\Vert(I-\widetilde{P})g\Vert_{\tilde{\mu}}^{2}+\Vert(I-\widetilde
{P})^{1/2}g\Vert_{\tilde{\mu}}^{2}\leq K\Vert(I-\widetilde{P})^{1/2}%
g\Vert_{\tilde{\mu}}^{2},%
\end{align*}
where $K=1+\Big \Vert (  I-\widetilde{P})  ^{1/2} \Big\Vert $ with $ \Vert (  I-\widetilde{P})  ^{1/2} \Vert $ the
finite norm of the operator $(  I-\widetilde{P})  ^{1/2}$.
Recalling that $h(x)=(I-P)g(x)=\varrho(x)(I-\widetilde{P})g(x)$, we obtain
\[
\Vert(I-\widetilde{P})^{1/2}g\Vert_{\tilde{\mu}}^{2}=\int g(x)(I-\widetilde
{P})g(x)\frac{\varrho(x)\mu(\mathrm{d}x)}{\mu(\varrho)}=\frac{\langle
g,h\rangle_{\mu}}{\mu(\varrho)}.
\]
It follows that $T:\mathcal{H}_{P}\rightarrow \tilde{\mathcal{H}}$ is
bounded as
\[
\sup_{h\in \mathcal{H}_{P}}\frac{\Vert Th\Vert_{\tilde{\mathcal{H}}}^{2}}{\Vert
h\Vert_{\mathcal{H}}^{2}}\leq \frac{K\Vert(I-\widetilde{P})^{1/2}g\Vert
_{\tilde{\mu}}^{2}}{\Vert h\Vert_{\mu}^{2}+\langle h,g\rangle_{\mu}}=\frac
{K}{\mu(\varrho)}\frac{\langle g,h\rangle_{\mu}}{\Vert h\Vert_{\mu}%
^{2}+\langle g,h\rangle_{\mu}}\leq \frac{K}{\mu(\varrho)}.
\]

Since $\mathcal{H}_{P}$ is dense, given $h\in \mathcal{H}$, there is a sequence
$h_{n}\in \mathcal{H}_{P}$ such that $\Vert h_{n}-h\Vert_{\mathcal{H}%
}\rightarrow0$, as $n\rightarrow \infty$. This, in particular, implies that
$h_{n}$ is a Cauchy sequence in $\mathcal{H}$, that is
\[
\Vert h_{n}-h_{m}\Vert_{\mathcal{H}}\rightarrow0,\quad \text{as }n\geq
m\rightarrow \infty.
\]
Since $h_{n}$ and $h_{n}-h_{m}$ are in $\mathcal{H}_{P}$, $Th_{n}%
,T(h_{n}-h_{m})\in \tilde{\mathcal{H}}$ and, from the above calculation,
\[
\Vert Th_{n}-Th_{m}\Vert_{\tilde{\mathcal{H}}}\leq \frac{K}{\mu(\varrho)}\Vert
h_{n}-h_{m}\Vert_{\mathcal{H}}\rightarrow0,
\]
as $m,n\rightarrow \infty$. Therefore, $Th_{n}$ forms a Cauchy sequence in
$\tilde{\mathcal{H}}$; in particular $h_{n}$ and $(I-\widetilde{P})^{-1/2}h_{n}$ are
Cauchy in $L^{2}(\mathsf{X,}\tilde{\mu})$. Since $L^{2}(\mathsf{X,}\tilde{\mu
})$ is complete, we have $h_{n}\rightarrow g\in L^{2}(\mathsf{X,}%
\tilde{\mu})$ and $(I-\widetilde{P})^{-1/2}h_{n}\rightarrow f\in
L^{2}(\mathsf{X,}\tilde{\mu})$. Since $Q=(I-\widetilde{P})^{-1/2}$ is a closed
operator, we can conclude that
\[
g\in \mathrm{Domain}\left \{  Q\right \}  ,\quad Qg=f,
\]
and, in particular, $g\in \tilde{\mathcal{H}}$.

To complete the proof, we need to show that $g=h/\varrho$. Recall that
$h_{n}\rightarrow h$ in $\mathcal{H}$ implies that $\Vert h_{n}-h\Vert_{\mu
}\rightarrow0$. We can then choose a subsequence $n(k)$ such that
$h_{n(k)}\rightarrow h$ $\mu$-almost surely. Since $\tilde{\mu}$ is absolutely
continuous with respect to $\mu$, we also have $h_{n(k)}/\varrho \rightarrow
h/\varrho$ $\tilde{\mu}$-almost surely.

In addition, we know that $Th_{n}=h_{n}/\varrho \rightarrow g$ in $\tilde{\mathcal{H}}$
and thus in $L^{2}(\mathsf{X,}\tilde{\mu})$. Therefore, $h_{n(k)}/\varrho \rightarrow g$ in $L^{2}(\mathsf{X,}\tilde{\mu})$. We can now choose a
further subsequence $n^{\prime}(k)$ such that $h_{n^{\prime}(k)}%
/\varrho \rightarrow g$ $\tilde{\mu}$-almost surely. Since $h_{n(k)}/\varrho$
also converges to $h/\varrho$ $\tilde{\mu}$-almost surely, and $n^{\prime}(k)$
is a subsequence of $n(k)$, we conclude that $g=h/\varrho$ $\tilde{\mu}%
$-almost surely.
\end{proof}

\begin{lemma}
\label{Lemma:absolutecontinuityCLT}Assume $\Pi$ is $\mu
$-reversible and ergodic, $h\in L_{0}^{2}\left(  \mathsf{X},\mu \right)  $ and
$\textsc{\protect\small IF}(h,{\Pi})<\infty$. Let $\left(  X_{i}\right)  _{i\geq1}$ be a Markov chain
evolving according to $\Pi$. If $X_{1}\sim \nu$, where $\nu$ is absolutely
continuous with respect to $\mu$ then, as $n\rightarrow \infty$,
\[
n^{-1/2}\sum_{i=1}^{n}h(X_{i})\longrightarrow \mathcal{N}\left \{  0;\mu \left(
h^{2}\right)  \mathrm{IF}({h}, {\Pi})\right \}  .
\]
\end{lemma}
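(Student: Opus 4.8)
The plan is to combine the Kipnis--Varadhan martingale approximation with a change-of-start argument that exploits absolute continuity. Since $\Pi$ is $\mu$-reversible and $\mathrm{IF}(h,\Pi)<\infty$, for each $\epsilon>0$ the resolvent $g_\epsilon:=\{(1+\epsilon)I-\Pi\}^{-1}h$ is a well-defined element of $L_0^2(\mathsf{X},\mu)$, and, with $h_\epsilon:=(I-\Pi)g_\epsilon$, one has $h=h_\epsilon+\epsilon g_\epsilon$. Writing the chain as $(X_i)_{i\ge1}$ with $X_1\sim\nu$ and $\mathcal F_i=\sigma(X_1,\dots,X_i)$, and using the telescoping identity $\sum_{i=2}^n\{\Pi g_\epsilon(X_{i-1})-\Pi g_\epsilon(X_i)\}=\Pi g_\epsilon(X_1)-\Pi g_\epsilon(X_n)$, this yields
\[
\sum_{i=1}^n h(X_i)=M_n^\epsilon+R_n^\epsilon,\qquad M_n^\epsilon:=\sum_{i=2}^n\{g_\epsilon(X_i)-\Pi g_\epsilon(X_{i-1})\},
\]
where $(M_n^\epsilon)_{n\ge2}$ is an $(\mathcal F_i)$-martingale (the increments have zero conditional mean by the Markov property) and $R_n^\epsilon=h_\epsilon(X_1)+\Pi g_\epsilon(X_1)-\Pi g_\epsilon(X_n)+\epsilon\sum_{i=1}^n g_\epsilon(X_i)$ collects boundary and lower-order terms.

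First treat the case where $\phi:=\mathrm{d}\nu/\mathrm{d}\mu$ is bounded, say $\nu\le M\mu$. The key observation is that then $\mathrm E_\nu[F]=\int\nu(\mathrm{d}x)\,\mathrm E_x[F]\le M\int\mu(\mathrm{d}x)\,\mathrm E_x[F]=M\,\mathrm E_\mu[F]$ for every nonnegative path functional $F$; in particular every $\mathrm P_\mu$-almost sure event is $\mathrm P_\nu$-almost sure and $\|\cdot\|_{L^2(\mathrm P_\nu)}^2\le M\|\cdot\|_{L^2(\mathrm P_\mu)}^2$. A martingale central limit theorem for triangular arrays then gives $n^{-1/2}M_n^\epsilon\Rightarrow\mathcal N(0,s_\epsilon^2)$ under $\mathrm P_\nu$, where $s_\epsilon^2=\langle g_\epsilon,(I-\Pi^2)g_\epsilon\rangle_\mu$, once one checks under $\mathrm P_\nu$ the almost-sure convergence of conditional variances, $n^{-1}\sum_{i=2}^n\{\Pi(g_\epsilon^2)(X_{i-1})-(\Pi g_\epsilon)^2(X_{i-1})\}\to s_\epsilon^2$, and the conditional Lindeberg condition: both are consequences of Birkhoff's ergodic theorem for the stationary chain (which is ergodic since $\Pi$ is $\psi$-irreducible) applied to suitable $L^1(\mu)$ functions, hence hold $\mathrm P_\mu$-a.s., hence $\mathrm P_\nu$-a.s.\ by the domination above. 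Meanwhile, the standard Kipnis--Varadhan spectral estimates give $s_\epsilon^2\to\mu(h^2)\,\mathrm{IF}(h,\Pi)=:v^2$ and $\limsup_n\mathrm E_\mu[(n^{-1/2}R_n^\epsilon)^2]\to0$ as $\epsilon\to0$: the three boundary terms are $O(1)$ or bounded in $L^2(\mathrm P_\mu)$, and the remaining contribution is $\epsilon^2\,\limsup_n n^{-1}\mathrm{var}_\mu\{\sum_{i=1}^n g_\epsilon(X_i)\}=\int\epsilon^2(1+\lambda)\{(1+\epsilon-\lambda)^2(1-\lambda)\}^{-1}e(h,\Pi)(\mathrm{d}\lambda)$, which vanishes as $\epsilon\to0$ by dominated convergence, using exactly the bound $\epsilon^2/(1+\epsilon-\lambda)^2\le1$ and $\int(1-\lambda)^{-1}e(h,\Pi)(\mathrm{d}\lambda)<\infty$ already used in the density argument above. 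Hence $\limsup_n\mathrm E_\nu[(n^{-1/2}R_n^\epsilon)^2]$ is of order $M$ times $o_\epsilon(1)$. For a bounded Lipschitz-$1$ function $f$, $|\mathrm E_\nu f(n^{-1/2}S_n)-\mathcal N(0,v^2)[f]|$ is bounded by $\mathrm E_\nu|n^{-1/2}R_n^\epsilon|+|\mathrm E_\nu f(n^{-1/2}M_n^\epsilon)-\mathcal N(0,s_\epsilon^2)[f]|+|\mathcal N(0,s_\epsilon^2)[f]-\mathcal N(0,v^2)[f]|$; sending $n\to\infty$ and then $\epsilon\to0$ kills all three terms, so $n^{-1/2}\sum_{i=1}^n h(X_i)\Rightarrow\mathcal N(0,v^2)$ under $\mathrm P_\nu$.

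For general $\nu\ll\mu$, decompose $\phi=\phi\,\mathbb 1\{\phi\le M\}+\phi\,\mathbb 1\{\phi>M\}$, so that $\nu=\nu_M+\rho_M$ with $\nu_M$ of bounded density and $\rho_M(\mathsf X)=\mu(\phi\,\mathbb 1\{\phi>M\})\to0$ as $M\to\infty$; for bounded Lipschitz $f$, $|\mathrm E_\nu f(n^{-1/2}S_n)-\mathcal N(0,v^2)[f]|\le|\mathrm E_{\nu_M}f(n^{-1/2}S_n)-\nu_M(\mathsf X)\,\mathcal N(0,v^2)[f]|+2\|f\|_\infty\rho_M(\mathsf X)$, and applying the bounded-density case to $\nu_M/\nu_M(\mathsf X)$ lets $n\to\infty$, after which $M\to\infty$ finishes. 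The main obstacle is the martingale-CLT verification under the non-stationary start: the conditional-variance and Lindeberg conditions are naturally almost-sure statements for the stationary chain, and the crux is the observation that the domination $\mathrm P_\nu\le M\,\mathrm P_\mu$ transfers them verbatim (and likewise controls the remainder in $L^2$), with the only additional care needed being in the $M\to\infty$ truncation to preserve weak convergence. Everything else is classical Kipnis--Varadhan bookkeeping.
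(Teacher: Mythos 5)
Your proof is correct, but it takes a genuinely different route from the paper's. The paper does not reprove the central limit theorem from scratch: it verifies the hypotheses of Corollary 2 of Wu and Woodroofe (2004) --- their condition (4), checked via the spectral computation $n^{-1}\mathrm{E}_{\mu}\{\mathrm{E}(S_{n}\mid X_{1})^{2}\}=\int_{-1}^{1}(n^{-1}\sum_{i=0}^{n-1}\lambda^{i})(1-\lambda^{n})(1-\lambda)^{-1}e(h,\Pi)(\mathrm{d}\lambda)\rightarrow 0$, their condition (5) for the resolvent $h_{n}=\{(1+n^{-1})I-\Pi\}^{-1}h$, and the $L^{2}$-convergence of $H_{n}(x_{0},x_{1})=h_{n}(x_{1})-\Pi h_{n}(x_{0})$ taken from Kipnis and Varadhan --- and then quotes their equation (10), which already delivers the conclusion for initial laws absolutely continuous with respect to $\mu$. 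You instead give a self-contained argument: a fixed-$\epsilon$ resolvent martingale decomposition, a triangular-array martingale CLT whose hypotheses (convergence of conditional variances, conditional Lindeberg) are almost-sure statements for the stationary chain and hence transfer verbatim to $\mathrm{P}_{\nu}$ because $\mathrm{P}_{\nu}\ll\mathrm{P}_{\mu}$ on path space, an $L^{2}(\mathrm{P}_{\mu})$ bound on the remainder pushed to $L^{2}(\mathrm{P}_{\nu})$ under a bounded density, and a truncation $\nu=\nu_{M}+\rho_{M}$ to remove the boundedness. Both arguments rest on the same two ingredients, namely $\int(1-\lambda)^{-1}e(h,\Pi)(\mathrm{d}\lambda)<\infty$ and the transfer of path-space statements under absolute continuity; yours trades the external reference for a page of martingale bookkeeping and is more explicit about where absolute continuity enters, while the paper's is shorter because Wu and Woodroofe have already packaged the non-stationary start. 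The one step you should spell out is the conditional Lindeberg condition, where the truncation level $\delta n^{1/2}$ grows with $n$: fix a level $K$, apply the ergodic theorem to $x\mapsto\mathrm{E}\{D^{2}\mathbb{I}(|D|>K)\mid X_{i-1}=x\}\in L^{1}(\mu)$, and then let $K\rightarrow\infty$. With that standard refinement the argument is complete.
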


\begin{proof}
[Proof of Lemma~\ref{Lemma:absolutecontinuityCLT}]Let $e(h,\Pi)(\mathrm{d}%
\lambda)$ be the associated spectral measure and define $S_{n}=\sum_{i=1}%
^{n}h\left(  X_{i}\right)  $. Then,
\begin{align*}
\frac{1}{n}\mathrm{E}_{\mu}\left \{  \mathrm{E}\left(  \left.  S_{n}\right \vert
X_{1}\right)  ^{2}\right \}   &  =\frac{1}{n}\int \mu \left(  dx\right)  \left \{
\sum_{i=0}^{n-1}\Pi^{i}h\left(  x\right)  \right \}  ^{2}=\frac{1}{n}\int
_{-1}^{1}\left(  \sum_{i=0}^{n-1}\lambda^{i}\right)  ^{2}e(h,\Pi)(\mathrm{d}\lambda)\\
&  =\int_{-1}^{1}\left(  \frac{1}{n}\sum_{i=0}^{n-1}\lambda^{i}\right)
\frac{1-\lambda^{n}}{1-\lambda}e(h,\Pi)(\mathrm{d}\lambda)\rightarrow0
\end{align*}
as $n\rightarrow \infty$ by dominated convergence, since $\int \left(
1-\lambda \right)  ^{-1}e(h,\Pi)(\mathrm{d}\lambda)<\infty$ by assumption.
Hence, equation (4) in \cite{wuwoodroofe2004} holds with $\sigma_{n}%
^{2}=\mathrm{E}_{\mu}\left(  S_{n}^{2}\right)  \sim \sigma^{2}n$, where
$\sigma^{2}=\mu \left(  h^{2}\right)  \mathrm{IF}({h},{\Pi})$. It is
straightforward to check, with calculations similar to the above, that the
solution to the approximate Poisson equation given in the proof of Theorem 1.3
in \cite{kipnis1986central},%
\[
h_{n}\left(  x\right)  =\left \{  \left(  1+\frac{1}{n}\right)  I-\Pi \right \}
^{-1}h\left(  x\right)  ,
\]
satisfies equation (5) in \cite{wuwoodroofe2004}, while equation (1.10) in
\cite{kipnis1986central} shows that $H_{n}\left(  x_{0},x_{1}\right)
:=h_{n}\left(  x_{1}\right)  -\Pi h_{n}\left(  x_{0}\right)  $ converges in
$L^{2}\left(  \mathsf{X\times X,}\text{ }\mu \otimes \Pi \right)  $. Therefore,
the conditions of Corollary 2 in \cite{wuwoodroofe2004} are satisfied so the
statement of the lemma follows from their equation (10); see their comments
after this equation.
\end{proof}

\begin{proof}
[Proof of Proposition 2]Let $\left(  X_{i}\right)  _{i\geq1}$ be a Markov chain
evolving according to $P$ and $(\widetilde{X}_{i},\tau_{i})_{i\geq1}$ the
associated jump\ chain representation evolving according to $\overline{P}$,
as defined in Lemma 1. We denote by $\mathrm{P}_{\nu,\Pi}$ the law of a Markov
chain with initial distribution $\nu$ and transition kernel $\Pi$. By Theorem
1.3 in \cite{kipnis1986central}, we have under $\mathrm{P}_{\mu,P}$
\begin{equation}
S_{n}=\sum_{i=1}^{n}h(X_{i})=M_{n}+\xi_{n},
\label{eq:representationsumbymartingaleandremainder}%
\end{equation}
where $M_{n}$ is a square integrable martingale with respect to the natural
filtration of $\left(  X_{i}\right)  _{i\geq1}$, while we have the following
convergence in probability
\begin{equation}
n^{-1/2}\sup_{1\leq i\leq n}\text{ }\left \vert \xi_{i}\right \vert
\overset{\mathrm{P}_{\mu,P}}{\longrightarrow}0.
\label{eq:asymptoticnegligibility}%
\end{equation}
Define $T_{n}=\tau_{1}+\cdots+\tau_{n}$. The kernel $\overline{P}$ is ergodic because $\widetilde{P}$ is ergodic.
Hence, \ $\mathrm{P}_{\widetilde{\mu
},\widetilde{P}}-$almost surely,
\begin{equation}
\frac{T_{n}}{n}\rightarrow \widetilde{\mu}\left(  1/\varrho \right)  =\frac
{1}{\mu \left(  \varrho \right)  }. \label{eq:renewal}%
\end{equation}
The above limit also holds $\mathrm{P}_{\mu,\widetilde{P}}$-almost surely,
since $\mathrm{P}_{\mu,\widetilde{P}}$ is absolutely continuous with respect
to $\mathrm{P}_{\widetilde{\mu},\widetilde{P}}$. We first show that
\begin{equation}
\left \{  n/\mu \left(  \varrho \right)  \right \}  ^{-1/2}(M_{T_{n}%
}-M_{\left \lfloor n/\mu \left(  \varrho \right)  \right \rfloor })\overset
{\mathrm{P}_{\mu,P}}{\longrightarrow}0. \label{eq:convergenceinprobaforMTn}%
\end{equation}
Let $\epsilon>0$ be arbitrary and define the event%
\[
A_{n}=\big \{(1-\epsilon)\frac{n}{\mu \left(  \varrho \right)  }\leq
T_{n}<(1+\epsilon)\frac{n}{\mu \left(  \varrho \right)  }\big \}.
\]
By (\ref{eq:renewal}), we have  $\mathrm{P}_{\mu,P}(A_{n})\rightarrow1$. The
following inequality holds on the event $A_{n}$,
\begin{align*}
|M_{T_{n}}-M_{\left \lfloor n/\mu \left(  \varrho \right)  \right \rfloor }|  &
\leq|M_{T_{n}}-M_{\left \lfloor (1-\epsilon)n/\mu \left(  \varrho \right)
\right \rfloor }|+|M_{\left \lfloor n/\mu \left(  \varrho \right)  \right \rfloor
}-M_{\left \lfloor (1-\epsilon)n/\mu \left(  \varrho \right)  \right \rfloor }|\\
&  \leq2\sup_{1\leq i\leq2\left \lfloor \epsilon n/\mu \left(  \varrho \right)
\right \rfloor +1}|\tilde{M}_{i}|,
\end{align*}
where $\tilde{M}_{i}:=M_{\left \lfloor (1-\epsilon)n/\mu \left(  \varrho \right)
\right \rfloor +i}-M_{\left \lfloor (1-\epsilon)n/\mu \left(  \varrho \right)
\right \rfloor }$ is a square integrable martingale with stationary increments.
Thus, for any $\delta>0$,
\begin{align*}
\mathrm{P}_{\mu,P}\Bigg(\Big|M_{T_{n}}-M_{\left \lfloor n/\mu \left(
\varrho \right)  \right \rfloor }\Big|  &  >\delta \left \{  n/\mu \left(
\varrho \right)  \right \}  ^{1/2}\Bigg)\\
&  \leq \mathrm{P}_{\mu,P}\Bigg(\bigg \{ \big|M_{T_{n}}-M_{\left \lfloor
n/\mu \left(  \varrho \right)  \right \rfloor }|>\delta \left \{  n/\mu \left(
\varrho \right)  \right \}  ^{1/2}\bigg \} \cap A_{n}\Bigg)+\mathrm{P}_{\mu
,P}(A_{n}^{c})\\
&  \leq \mathrm{P}_{\mu,P}\Bigg(2\sup_{1\leq i\leq2\left \lfloor \epsilon
n/\mu \left(  \varrho \right)  \right \rfloor +1}|\tilde{M}_{i}|>\delta \left \{
n/\mu \left(  \varrho \right)  \right \}  ^{1/2}\Bigg)+o(1)\\
&  \leq C_{1}\frac{\mathrm{E}_{\mu,P}\left(  \tilde{M}_{2\left \lfloor \epsilon
n/\mu \left(  \varrho \right)  \right \rfloor +1}^{2}\right)  }{\delta^{2}%
n/\mu \left(  \varrho \right)  }+o(1)\\
&  \leq C_{2}\frac{\epsilon n/\mu \left(  \varrho \right)  }{\delta^{2}%
n/\mu \left(  \varrho \right)  }+o(1)\leq C_{3}\epsilon/\delta^{2}+o(1),
\end{align*}
where $C_{1},C_{2},C_{3}<\infty$. \ The third inequality follows from Doob's
maximal inequality. The last inequality follows because, for any square
integrable martingale $\left(  N_{i}\right)  _{i\geq1}$ with stationary
increments, $\mathrm{E}_{\mu,P}\left(  N{_{n}^{2}}\right)  =\mathrm{E}_{\mu
,P}\left(  N{_{1}^{2}}\right)  $ $n$ holds. This bound is uniform in $n$, and
therefore
\[
\underset{n\rightarrow \infty}{\limsup \text{ }}\mathrm{P}_{\mu,P}%
\Bigg(\Big|M_{T_{n}}-M_{\left \lfloor n/\mu \left(  \varrho \right)
\right \rfloor }\Big|>\delta \left \{  n/\mu \left(  \varrho \right)  \right \}
^{1/2}\Bigg)\leq \epsilon/\delta^{2}.
\]
As $\epsilon>0$ is arbitrary,
\[
\lim_{n\rightarrow \infty}\mathrm{P}_{\mu,P}\Bigg(\Big|M_{T_{n}}%
-M_{\left \lfloor n/\mu \left(  \varrho \right)  \right \rfloor }\Big|>\delta
\left \{  n/\mu \left(  \varrho \right)  \right \}  ^{1/2}\Bigg)=0,
\]
for any $\delta>0$, and therefore (\ref{eq:convergenceinprobaforMTn}) holds.
Now, by Proposition 1, $n^{-1/2}S_{n}\longrightarrow \mathcal{N}\left \{
0;\mu \left(  h^{2}\right)  \textsc{\protect\small IF}({h},{P})\right \}  $. By the asymptotic
negligibility (\ref{eq:asymptoticnegligibility}) of $\xi_{n}$ and
(\ref{eq:convergenceinprobaforMTn}), we have by Slutsky's theorem that
$$\left \{  n/\mu \left(  \varrho \right)  \right \}  ^{-1/2}M_{T_{n}%
}\longrightarrow \mathcal{N}\left \{  0;\mu \left(  h^{2}\right)  \textsc{\protect\small IF}
({h},{P})\right \},$$ 
equivalently $n^{-1/2}M_{T_{n}}\longrightarrow
\mathcal{N}\left \{  0;\mu \left(  h^{2}\right)  \textsc{\protect\small IF}({h},{P})/\mu \left(
\varrho \right)  \right \}  $. Finally, note that for any $\delta>0$,
\begin{align*}
\mathrm{P}_{\mu,P}(|\xi_{T_{n}}|>\delta n^{1/2})  &  \leq \mathrm{P}_{\mu
,P}(\{|\xi_{T_{n}}|>\delta n^{1/2}\} \cap A_{n})+\mathrm{P}_{\mu,P}(A_{n}%
^{c})\\
&  \leq \mathrm{P}_{\mu,P}(\sup_{1\leq i\leq \left \lfloor (1+\epsilon
)n/\mu \left(  \varrho \right)  \right \rfloor }|\xi_{i}|>\delta n^{1/2}%
)+o(1)\rightarrow0\text{ by (\ref{eq:asymptoticnegligibility}).}%
\end{align*}
Therefore, using (\ref{eq:representationsumbymartingaleandremainder}) and
Slutsky's theorem, $n^{-1/2}S_{T_{n}}\rightarrow \mathcal{N}\left \{
0;\mu \left(  h^{2}\right)  \textsc{\protect\small IF}({h},{P})/\mu \left(  \varrho \right)
\right \}  $ when $\widetilde{X}_{1}=X_{1}\sim \mu$. However, this result also
holds when $\widetilde{X}_{1}\sim \widetilde{\mu}$, as established in Lemma
\ref{Lemma:absolutecontinuityCLT}. In particular, the asymptotic variance is
the same. Moreover, $(\tilde{X}_{i},\tau_{i})_{i\geq1}$ is reversible and
ergodic, while Lemma~\ref{lem:finiteIF} guarantees that $h/\varrho \in
L_{0}^{2}\left(  \mathsf{X},\widetilde{\mu}\right)  $ and $\textsc{\protect\small IF}({h/\varrho},{\widetilde{P}})<\infty$. Hence, Proposition 1 applied to $(\tilde{X}_{i},\tau_{i})_{i\geq1}$ ensures that
 the asymptotic variance is also given by the integrated autocovariance time. Equating the
two expressions, we obtain
\begin{align*}
\mu \left(  h^{2}\right)  \textsc{\protect\small IF}({h},{P})/\mu \left(  \varrho \right)   &
=\overline{\mu}\left(  \tau^{2}h^{2}\right)  +2\sum_{n\geq1}\left \langle \tau
h,\overline{P}^{n}\tau h\right \rangle _{\overline{\mu}}\\
&  =\widetilde{\mu}\left(  \frac{2-\varrho}{\varrho^{2}}h^{2}\right)
+2\sum_{n\geq1}\left \langle \frac{h}{\varrho},\widetilde{P}^{n}\frac
{h}{\varrho}\right \rangle _{\widetilde{\mu}}\\
&  =\widetilde{\mu}\left(  h^{2}/\varrho^{2}\right)  +\widetilde{\mu}\left(
h^{2}/\varrho^{2}\right)  \textsc{\protect\small IF}({h/\varrho},{\widetilde{P}})-\mu \left(
h^{2}\right)  /\mu \left(  \varrho \right)  ,
\end{align*}
where the equality in the second line follows from the expression of
$\overline{\mu}$ and $\overline{P}$, given in Lemma 1, and the properties of
the geometric distribution. This yields the equality of Proposition 2, which
can also be written as%
\[
\frac{1+\textsc{\protect\small IF}({h/\varrho},{\widetilde{P}})}{1+\textsc{\protect\small IF}({h},{P}%
)}=\frac{\mu \left(  h^{2}\right)  }{\mu \left(  \varrho \right)  \widetilde{\mu
}\left(  h^{2}/\varrho^{2}\right)  }=\frac{\mu \left(  h^{2}\right)  }%
{\mu \left(  h^{2}/\varrho \right)  }\leq1;
\]
as $0<\varrho \leq1$, implying that $\textsc{\protect\small IF}({h/\varrho},{\widetilde{P}})\leq
\textsc{\protect\small IF}({h},{P})$.
\end{proof}

\section{Proofs of other technical results in the main
paper\label{sec:technicalresultsmainpaper}}

\begin{proof}
[Proof of Lemma~1]As $P$ is $\psi$-irreducible, it is also $\mu$-irreducible as it
is $\mu$-invariant; see, for example, \cite{Tierney94}, p. 1759. Hence, for any
$x\in \mathsf{X}$ and $A\in \mathcal{X}$ with $\mu \left(  A\right)  >0$, there
exists an $n\geq1$ such that $P^{n}\left(  x,A\right)  >0$. As $\mu$ is not
concentrated on a single point by assumption, this implies that $\varrho
\left(  x\right)  >0$ for any $x\in \mathsf{X}$. The rest of the proposition
follows directly from Lemma 1 in \cite{douc2011vanilla}.
\end{proof}

\begin{proof}
[Proof of Lemma~3]Equations
(17) and
(18) and the expressions of their
associated invariant distributions follow from a direct application of Lemma~1.
The positivity of $\widetilde{Q}_{\textsc{z}}$ follows
directly from Proposition~3, see Remark 1. We write
$\widetilde{\pi}\otimes\widetilde{\pi}_{\textsc{z}}\left(  \mathrm{d}\theta
,\mathrm{d}z\right)  =\widetilde{\pi}\left(  \mathrm{d}\theta\right)
\widetilde{\pi}_{\textsc{z}}\left(  \mathrm{d}z\right)  $. By applying
Proposition~2 to $Q^{\ast}$, we obtain for any $h\in
L_{0}^{2}\left(  \Theta,\pi\right)  $ that $h/(\varrho_{\textsc{ex}}%
\varrho_{\textsc{z}})\in L_{0}^{2}(\Theta\times\mathbb{R},\tilde{\pi}%
\otimes\tilde{\pi}_{\textsc{z}})$, $\textsc{\protect\small IF}\left\{  h/(\varrho
_{\textsc{ex}}\varrho_{\textsc{z}}),\widetilde{Q}^{\ast}\right\}  <\infty$ and%
\begin{align*}
\pi\left(  h^{2}\right)  \left\{  1+\textsc{\protect\small IF}\left(  h,Q^{\ast}\right)
\right\}   &  =\overline{\pi}\left(  \varrho_{\textsc{ex}}%
\varrho_{\textsc{z}}\right)  \text{ }\widetilde{\pi}\otimes
\widetilde{\pi}_{\textsc{z}}\left\{  h^{2}/\left(  \varrho_{\textsc{ex}}%
^{2}\varrho_{\textsc{z}}^{2}\right)  \right\}  \left[  1+\textsc{\protect\small IF}%
\left\{  h/(\varrho_{\textsc{ex}}\varrho_{\textsc{z}}),\widetilde{Q}^{\ast
}\right\}  \right]  \\
&  =\pi\left(  \varrho_{\textsc{ex}}\right)  \pi_{\textsc{z}}\left(
\varrho_{\textsc{z}}\right)  \widetilde{\pi}\left(  h^{2}/\varrho
_{\textsc{ex}}^{2}\right)  \widetilde{\pi}_{\textsc{z}}\left(  1/\varrho
_{\textsc{z}}^{2}\right)  \left[  1+\textsc{\protect\small IF}\left\{  h/(\varrho
_{\textsc{ex}}\varrho_{\textsc{z}}),\widetilde{Q}^{\ast}\right\}  \right]  .
\end{align*}
The identity follows easily as $\widetilde{\pi}_{\textsc{z}}\left(  1/\varrho
_{\textsc{z}}^{2}\right)  =\pi_{\textsc{z}}\left(  1/\varrho_{\textsc{z}%
}\right)  /\pi_{\textsc{z}}\left(  \varrho_{\textsc{z}}\right)$ and $\pi_{\text{z}}\left(  1/\varrho_{\text{z}}\right)  <\infty$ .

To prove the geometric ergodicity of $\widetilde{Q}_{\textsc{z}}$, we follow \citet[Chapter~15]{MT09}.
Notice first that
\begin{align*}
\widetilde{Q}_{\textsc{z}}\left(  z,\rd w\right)   &  =\frac{g\left(  \rd w\right)
\alpha\left(  z,w\right)  }{\varrho_{\textsc{z}}\left(  z\right)  }
  \geq g\left(  \rd w\right)  \left\{  e^{w-z}\mathbb{I}\left(  w<z\right)
+\mathbb{I}\left(  w\geq z\right)  \right\} ,
\end{align*}
and consider the set $C=\left(  -\infty,z_{0}\right]  $, where $z_{0}>0$ and $\int_0^{z_0}g(w) \rd w>0$. For any $z\in C$ and $w\geq0$,%
\begin{align*}
\widetilde{Q}_{\textsc{z}}\left(  z, \rd w\right)   &  \geq g\left(  \rd w\right)
\left\{  e^{w-z_{0}}\mathbb{I}\left(  w<z\right)  +\mathbb{I}\left(  w\geq
z\right)  \right\}  \\
&  \geq e^{-z_{0}}g\left(  \rd w\right)  =\varepsilon\text{ }\nu\left(
\rd w\right),
\end{align*}
where
\[
\varepsilon=e^{-z_{0}}\int_{0}^{z_{0}}g\left(  \rd w\right)  \leq1,
\]
and $\nu$ is the probability measure concentrated on $\left[  0,z_{0}\right]
\subset C$, given by
\[
\nu\left(  \rd w\right)  =\frac{g\left(  \rd w\right)  \mathbb{I}\left(  0\leq w\leq
z_{0}\right)  }{\int_{0}^{z_{0}}g\left(  \rd w\right)  }.
\]
Hence, $C$ is a small set.

To complete the proof of geometric ergodicity of $\widetilde{Q}_{\textsc{z}}$, we check
 that $V\left(  z\right)  =1/\varrho_{\text{z}}\left(  z\right)  $
satisfies a geometric drift condition. Note that $V\left(  z\right)  \geq1$
for any $z$, and
\begin{align}
\frac{\int\widetilde{Q}_{\textsc{z}}\left(  z, \rd w\right)  V\left(  w\right)
}{V\left(  z\right)  }  & =e^{-z}\int_{-\infty}^{z}\frac{e^{w}g\left(
\rd w\right)  }{\varrho_{\text{z}}\left(  w\right)  }+\int_{z}^{\infty}%
\frac{g\left( \rd w\right)  }{\varrho_{\text{z}}\left(  w\right)  }\nonumber\\
& =e^{-z}\int_{-\infty}^{z}\frac{\pi_{\text{z}}\left(  \rd w\right)  }%
{\varrho_{\text{z}}\left(  w\right)  }+\int_{z}^{\infty}%
\frac{g\left( \rd w\right)  }{\varrho_{\text{z}}\left(  w\right)  }.\label{eq:Lyapunovfunction}%
\end{align}
We have $\pi_{\text{z}}\left(  1/\varrho_{\text{z}}\right)  <\infty$, as established earlier, because $\textsc{\protect \small IF}\left(  h,Q^{\ast}\right)  <\infty$ by assumption. It follows that the first integral on the right hand side of
(\ref{eq:Lyapunovfunction}) is bounded. To prove that the second integral is bounded, we use the fact that $\varrho_{\text{z}}\left(
z\right)  $ is a non-increasing function. We have
\[
\varrho_{\text{z}}\left(  z\right)  = 1-G\left(  z\right)  +e^{-z}\Pi\left(
z\right),
\]
where $G$, respectively $\Pi$, is the cumulative distribution function of $g$,
respectively $\pi_{\text{z}}$, so its derivative with respect to $z$ is equal
to
\[
\varrho_{\text{z}}^{\prime}\left(  z\right)=-g(z)+ e^{-z}\pi(z)-e^{-z}\Pi\left(  z\right) =-e^{-z}\Pi\left(  z\right)\leq 0  .
\]
It follows that the second term on the right hand side of (\ref{eq:Lyapunovfunction}) is
bounded by
\begin{align*}
\int_{z}^{\infty}\frac{g\left( \rd w\right)  }{\varrho_{\text{z}}\left(  w\right) }  & \leq\int_{-\infty}^{\infty}\frac{g\left(
\rd w\right)  }{\varrho_{\text{z}}\left(  w\right)  }=\int_{-\infty}^{0}%
\frac{g\left(  \rd w\right)  }{\varrho_{\text{z}}\left(  w\right)  }+\int%
_{0}^{\infty}\frac{g\left(  \rd w\right)  }{\varrho_{\text{z}}\left(  w\right)
}\\
& \leq\frac{1}{\varrho_{\text{z}}\left(  0\right)  }\int_{-\infty}^{0}g\left(
\rd w\right)  +\int_{0}^{\infty}\frac{e^{-w}\pi_{\text{z}}\left(  \rd w\right)
}{\varrho_{\text{z}}\left(  w\right)  }<\infty.
\end{align*}
Therefore, for any $0<\lambda<1$, there exists
$z_{0}^{\prime}>0$ such that
\[
\frac{\int\widetilde{Q}_{\textsc{z}}\left(  z, \rd w\right)  V\left(  w\right)
}{V\left(  z\right)  }\leq\lambda,
\]
for all $z\geq$ $z_{0}^{\prime}$.
We now establish that
\[
\underset{z\leq z_{0}^{\prime}}{\sup}\int\widetilde{Q}_{\textsc{z}}\left(
z, \rd w\right)  V\left(  w\right)  <\infty.
\]
As $\varrho_{\text{z}}\left(z\right)  $ is a non-increasing function, it follows that for $z\leq z_{0}^{\prime}$
\begin{align*}
\int\widetilde{Q}_{\textsc{z}}\left(  z, \rd w\right)  V\left(  w\right)    &
=\int\frac{g\left(  \rd w\right)  \alpha\left(  z,w\right)  }{\varrho_{\text{z}%
}\left(  z\right)  \varrho_{\text{z}}\left(  w\right)  }\\
& \leq\frac{1}{\varrho_{\text{z}}\left(  z_{0}^{\prime}\right)  }\int%
\frac{g\left(  \rd w\right)  \alpha\left(  z,w\right)  }{\varrho_{\text{z}%
}\left(  w\right)  }\\
& =\frac{1}{\varrho_{\text{z}}\left(  z_{0}^{\prime}\right)  }\frac
{\int\widetilde{Q}_{\textsc{z}}\left(  z,\rd w\right)  V\left(  w\right)
}{V\left(  z\right)  }.
\end{align*}
We now show that
\[
\sup_z\frac{\int\widetilde{Q}_{\textsc{z}}\left(  z,\rd w\right)
V\left(  w\right)  }{V\left(  z\right)  }<\infty.
\]
The first term on the right hand side of
(\ref{eq:Lyapunovfunction}) is bounded by
\[
e^{-z}\int_{-\infty}^{z}\frac{\pi_{\text{z}}\left(  \rd w\right)  }%
{\varrho_{\text{z}}\left(  w\right)  }\leq\frac{e^{-z}}{\varrho_{\text{z}%
}\left(  z\right)  }\int_{-\infty}^{z}\pi_{\text{z}}\left(  \rd w\right)
\leq\frac{e^{-z}}{e^{-z}\Pi\left(  z\right)  }\Pi\left(  z\right)  =1,
\]
while we have already shown that the second term on  right hand side of
(\ref{eq:Lyapunovfunction}) is bounded.

Hence, we can conclude that, for any $0<\lambda<1$, there exists $z_{0}>0$ and
$b<\infty$ such that
\[
\int\widetilde{Q}_{\textsc{z}}\left(  z,\rd w\right)  V\left(  w\right)
\leq\lambda V\left(  z\right)  +b\mathbb{I}_{C}\left(  z\right),
\]
where $C=\left(  -\infty,z_{0}\right]$.

The inequality $\textsc{\protect \small IF}(1/\varrho_\textsc{z}, \widetilde{Q}_{\textsc{z}})<\infty$ now follows because $\widetilde{Q}_{\textsc{z}}$ is geometrically ergodic with drift function $V\left(  z\right)  =1/\varrho_{\text{z}}\left(  z\right)  $ and $\widetilde{\pi}_\textsc{z}(1/\varrho_\textsc{z}^2)<\infty$.
\end{proof}
\begin{proof}
[Proof of Proposition 3]If $\langle f,\widetilde{P}f\rangle_{\widetilde{\mu}}\geq0$
for any $f\in L^{2}\left(  \mathsf{X},\widetilde{\mu}\right)  $, then
$\widetilde{P}$ is positive by definition, implying the positivity of $P$ as
$L^{2}\left(  \mathsf{X},\mu \right)  \subseteq L^{2}\left(  \mathsf{X}%
,\widetilde{\mu}\right)  $ and
\[
\langle f,Pf\rangle_{\mu}=\mu \left(  \varrho \right)  \langle f,\widetilde
{P}f\rangle_{\widetilde{\mu}}+\mu \left \{  \left(  1-\varrho \right)
f^{2}\right \}  .
\]
For a proposal of the form $q\left(  x,y\right)  =\int s\left(  x,z\right)
s\left(  y,z\right)  \chi \left(  \mathrm{d}z\right)  $, Lemma 3.1 in
\cite{baxendale2005} establishes that $\langle f,\widetilde{P}f\rangle
_{\widetilde{\mu}}\geq0$ for any $f\in L^{2}\left(  \mathsf{X},\widetilde{\mu
}\right)  $. For a $\nu$-reversible proposal such that $\nu \left(  x\right)
q\left(  x,y\right)  =\int r\left(  x,z\right)  r\left(  y,z\right)
\chi \left(  \mathrm{d}z\right)  $, we have for any $f\in L^{2}\left(
\mathsf{X},\widetilde{\mu}\right) $,
\begin{align*}
&  \mu \left(  \varrho \right)  \langle f,\widetilde{P}f\rangle_{\widetilde{\mu
}}\\
&  =\iint f\left(  x\right)  f\left(  y\right)  \nu \left(  x\right)  q\left(
x,y\right)  \min \left \{  \frac{\mu \left(  x\right)  }{\nu \left(  x\right)
},\frac{\mu \left(  y\right)  }{\nu \left(  y\right)  }\right \}  \mathrm{d}%
x\mathrm{d}y\\
&  =\iint f\left(  x\right)  f\left(  y\right)  \nu \left(  x\right)  q\left(
x,y\right)  \left[  \int_{0}^{\infty}\mathbb{I}_{\left \{  0,\mu \left(
x\right)  /\nu \left(  x\right)  \right \}  }\left(  t\right)  \mathbb{I}%
_{\left \{  0,\mu \left(  y\right)  /\nu \left(  y\right)  \right \}  }\left(
t\right)  \mathrm{d}t\right]  \mathrm{d}x\mathrm{d}y\\
&  =\int_{0}^{\infty}\left[  \iiint f\left(  x\right)  r\left(  x,z\right)
f\left(  y\right)  r\left(  y,z\right)  \mathbb{I}_{\left \{  0,\mu \left(
x\right)  /\nu \left(  x\right)  \right \}  }\left(  t\right)  \mathbb{I}%
_{\left \{  0,\mu \left(  y\right)  /\nu \left(  y\right)  \right \}  }\left(
t\right)  \mathrm{d}x\mathrm{d}y\text{ }\chi \left(  \mathrm{d}z\right)
\right]  \mathrm{d}t\\
&  =\int_{0}^{\infty}\left(  \int \left[  \int f\left(  x\right)  r\left(
x,z\right)  \mathbb{I}_{\left \{  0,\mu \left(  x\right)  /\nu \left(  x\right)
\right \}  }\mathrm{d}x\right]  ^{2}\text{ }\chi \left(  \mathrm{d}z\right)
\right)  \mathrm{d}t\geq0,
\end{align*}
by a repeated application of Fubini's theorem.
\end{proof}

\begin{proof}
[Proof of Proposition~4] Theorem 2.2 in \cite{robertstweedie1996} establishes the ergodicity
of $Q_{\textsc{ex}}$. We extend their argument to prove the ergodicity
of $Q^{\ast}$. For the ball $B\left(  \theta,L\right)  $ centred at $\theta$ of radius $L$,
we define
\[
\eta \left(  \theta,L\right)  =\left \{  \underset{\vartheta \in B\left(
\theta,L\right)  }{\sup}\pi \left(  \vartheta \right)  \right \}  ^{-1}%
\underset{\vartheta \in B\left(  \theta,L\right)  }{\inf}\pi \left(
\vartheta \right),
\]
which, by assumption, is such that $0<\eta \left(  \theta,L\right)  <\infty$.
Then, we have for any $\left(  \theta,z\right)  \in \Theta \times \mathbb{R},$
$\vartheta \in B\left(  \theta,\delta \right)  $ and $w\in \mathbb{R}$,
\begin{align}
Q^{\ast}\left \{  \left(  \theta,z\right)  ,\left(  d\vartheta,\rd w\right)
\right \}   &  \geq q\left(  \theta,\vartheta \right)  \alpha_{\text{{\tiny EX}%
}}(\theta,\vartheta)g(w)\alpha_{\textsc{z}}(z,w)\mathrm{d}%
\vartheta \mathrm{d}w\nonumber \\
&  \geq \varepsilon \eta \left(  \theta,\delta \right)  \min \{g(w),e^{-z}\pi
_{\textsc{z}}(w)\} \mathrm{d}\vartheta \mathrm{d}w \label{eq:minorQ*1},%
\end{align}
which is strictly positive on $S:=\left \{  w:g(w)>0\right \}  =\left \{
w:\pi_{\textsc{z}}(w)>0\right \}  $. Hence, the $n$-step density part of $\left(
Q^{\ast}\right)  ^{n}$ is strictly positive for all $\left(  \vartheta
,z\right)  \in B\left(  \theta,n\delta \right)  \times S$. This establishes the
$\mathrm{d}\vartheta \times \pi_{\textsc{z}}\left(  \mathrm{d}z\right)  $ irreducibility of  $Q^{\ast}$, and
hence its ergodicity as it is $\overline{\pi}$-invariant.
For
$\widetilde{Q}^{\ast}$, we have for any $\left(  \theta,z\right)  \in
\Theta \times \mathbb{R},$ $\vartheta \in B\left(  \theta,\delta \right)  $ and
$w\in \mathbb{R}$,
\begin{align*}
\widetilde{Q}^{\ast}\left \{  \left(  \theta,z\right)  ,\left(  d\vartheta
,\rd w\right)  \right \}   &  =\frac{q\left(  \theta,\vartheta \right)
\alpha_{\text{{\tiny EX}}}(\theta,\vartheta)g(w)\alpha_{\textsc{z}}(z,w)}{\varrho_{\text{{\tiny EX}}}\left(  \theta \right)  \varrho
_{\textsc{z}}\left(  z\right)  }\mathrm{d}\vartheta \mathrm{d}w\\
&  \geq \varepsilon \eta \left(  \theta,\delta \right)  \min \{g(w),e^{-z}\pi
_{\textsc{z}}(w)\} \mathrm{d}\vartheta \mathrm{d}w,
\end{align*}
using calculations as in (\ref{eq:minorQ*1}) and the fact that $0<\varrho
_{\text{{\tiny EX}}}\left(  \theta \right)  \varrho_{\textsc{z}}\left(
z\right)  \leq1$ for any $\left(  \theta,z\right)  \in \Theta \times \mathbb{R}$,
as $Q^{\ast}$ is irreducible.
Finally, the
ergodicity of $\widetilde{Q}_{\textsc{ex}}$ follows, using similar
arguments, from the ergodicity of $Q_{\textsc{ex}}$.
\end{proof}

\section{Statements and proofs of auxiliary technical
results\label{sec:technicalresultssupplementary}}

\begin{proposition}
\label{Cor:CT} Define the relative computing time ${\textsc{\small uRCT}}_2(h;\sigma)$
\[
{\textsc{\small uRCT}}_{2}(h;\sigma)
=\frac{{\textsc{\small uRIF}}_{2}(h;\sigma)}%
{\sigma^{2}},
\] where ${\textsc{\small uRIF}}_{2}(h;\sigma)$ is the relative inefficiency. Using  the same assumptions as in Theorem 1,
\begin{enumerate}
\item[(i)]
If $\textsc{\protect\small IF}({h}, Q_{\textsc{ex}})=1$, then ${\textsc{\small uRCT}}_2(h; \sigma)$ is minimized at $\sigma_{\text{opt}}=0.92$ and
$\textsc{\small RIF}(h, Q;\sigma_{opt})=
\textsc{\protect\small IF}(h, Q_\pi; \sigma_{opt})
=4.54$,
$\pi_{\textsc{z}}^{\sigma_{opt}}\left(  \varrho_{\textsc{z}}^{\sigma_{opt}%
}\right)  =0.51$\textsl{.}
\item[(ii)] If $\textsc{\protect\small IF}(h/\varrho_{\textsc{ex}}, \tilde{Q}_{\textsc{ex}})\geq
1$, $\sigma_{opt}$ increases
to $\sigma_{opt}=1.02$ as $\textsc{\small IF}({h}, Q_{\textsc{ex}})\longrightarrow \infty.$
\item[(iii)]
${\textsc{\small uRIF}}_{2}(h;\sigma)$ and ${\textsc{\small uRCT}}_{2}(h;\sigma)$
are decreasing functions of $\textsc{\small IF}({h}, Q_{\textsc{ex}})$.
\end{enumerate}
\end{proposition}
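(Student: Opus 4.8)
The plan is to reduce all three parts to the explicit formula for the relative inefficiency supplied by Part~2 of Corollary~\ref{corollary:boundsQex} together with the Gaussian identities of Corollary~\ref{corr:IF_Z_gauss}, and then to analyse a single function of $\sigma$. Write $K(\sigma):=\pi_{\textsc{z}}^{\sigma}(1/\varrho_{\textsc{z}}^{\sigma})=\int\varphi(w;0,1)/\{1-\overline{\varrho}_{\textsc{z}}^{\sigma}(w)\}\,\mathrm{d}w$, $t:=1/\textsc{\small IF}(h,Q_{\textsc{ex}})\in(0,1]$ and $s:=t/(1+t)=1/\{1+\textsc{\small IF}(h,Q_{\textsc{ex}})\}\in(0,1/2]$, so that Corollary~\ref{corollary:boundsQex} gives
\[
\textsc{\small uRIF}_{2}(h;\sigma)=(1+t)K(\sigma)-t,\qquad \textsc{\small uRCT}_{2}(h;\sigma)=\frac{(1+t)K(\sigma)-t}{\sigma^{2}}=(1+t)\,\frac{K(\sigma)-s}{\sigma^{2}}.
\]
Two elementary facts drive the argument. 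First, since $\varrho_{\textsc{z}}^{\sigma}\le 1$, Jensen's inequality gives $K(\sigma)\ge 1/\pi_{\textsc{z}}^{\sigma}(\varrho_{\textsc{z}}^{\sigma})=1/\{2\Phi(-\sigma/\surd 2)\}\ge 1$, with strict inequality when $\sigma>0$. Second, by a standard Gaussian tail bound $1/\{2\Phi(-\sigma/\surd 2)\}/\sigma^{2}\to\infty$ as $\sigma\to\infty$, so $K(\sigma)/\sigma^{2}$ and $\{K(\sigma)-s\}/\sigma^{2}$ both diverge as $\sigma\to\infty$ uniformly in $s\in(0,1/2]$, while $K(\sigma)\ge 1$ makes them diverge as $\sigma\to 0^{+}$ as well, uniformly in $s$; consequently each such function attains its minimum in a fixed compact subinterval of $(0,\infty)$ independent of $s$.

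Part~(iii) is then immediate: differentiating $\textsc{\small uRIF}_{2}$ in $\textsc{\small IF}(h,Q_{\textsc{ex}})$ yields $-\{K(\sigma)-1\}/\textsc{\small IF}(h,Q_{\textsc{ex}})^{2}\le 0$ by the first fact, and dividing by $\sigma^{2}>0$ preserves the sign. For Part~(i), setting $\textsc{\small IF}(h,Q_{\textsc{ex}})=1$ gives $\textsc{\small uRIF}_{2}(h;\sigma)=2K(\sigma)-1$, which equals $\textsc{\small IF}(h,Q_{\pi};\sigma)$ by Remark~\ref{remark:exactproposal}; for the perfect proposal $q(\theta,\vartheta)=\pi(\vartheta)$ itself Remark~\ref{remark:IFZto1} shows this bound is exact, so $\textsc{\small RIF}(h,Q_{\pi};\sigma)=\textsc{\small uRIF}_{2}(h;\sigma)$ and $\textsc{\small uRCT}_{2}(h;\sigma)=\{2K(\sigma)-1\}/\sigma^{2}=\textsc{\small RCT}(h,Q_{\pi};\sigma)$. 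By the second fact this function tends to $+\infty$ at both ends of $(0,\infty)$ and so attains a minimum; a numerical evaluation of $K$ through the integral in Corollary~\ref{corr:IF_Z_gauss} shows the minimiser is unique and equals $\sigma_{\mathrm{opt}}=0.92$, at which $\textsc{\small uRIF}_{2}(h;0.92)=2K(0.92)-1=4.54$ and, by the last identity of Corollary~\ref{corr:IF_Z_gauss}, $\pi_{\textsc{z}}^{0.92}(\varrho_{\textsc{z}}^{0.92})=2\Phi(-0.92/\surd 2)=0.51$.

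For Part~(ii), the hypothesis $\textsc{\small IF}(h/\varrho_{\textsc{ex}},\widetilde{Q}_{\textsc{ex}})\ge 1$ serves only to make $\textsc{\small uRIF}_{2}$ a genuine upper bound on $\textsc{\small RIF}(h,Q)$ via Part~2 of Corollary~\ref{corollary:boundsQex}; the assertion about $\sigma_{\mathrm{opt}}$ concerns the minimiser of $\phi_{s}(\sigma):=\{K(\sigma)-s\}/\sigma^{2}$, of which $\textsc{\small uRCT}_{2}(h;\sigma)$ is a positive multiple. I would obtain the monotonicity by Topkis-type comparative statics: $\partial^{2}\phi_{s}(\sigma)/\partial\sigma\,\partial s=2\sigma^{-3}>0$, so $\phi_{s}$ has strictly increasing differences in $(\sigma,s)$ and its interior minimiser is nonincreasing in $s$, hence nondecreasing in $\textsc{\small IF}(h,Q_{\textsc{ex}})$ since $s$ is decreasing in $\textsc{\small IF}(h,Q_{\textsc{ex}})$; strictness follows because $\partial\phi_{s}(\sigma)/\partial\sigma$ is strictly monotone in $s$ at any fixed $\sigma$, which rules out a common critical point for two values of $s$. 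As $\textsc{\small IF}(h,Q_{\textsc{ex}})\to\infty$ we have $s\to 0$ and $\phi_{s}(\sigma)\to K(\sigma)/\sigma^{2}$ uniformly on sets bounded away from $0$; since all the minimisers lie in one fixed compact interval, on which a numerical evaluation shows $K(\sigma)/\sigma^{2}$ has a unique minimiser equal to $1.02$, the standard continuity-of-argmin argument gives $\sigma_{\mathrm{opt}}\to 1.02$, monotonically up from the value $0.92$ of Part~(i). The main obstacle I anticipate is that $K(\sigma)=\pi_{\textsc{z}}^{\sigma}(1/\varrho_{\textsc{z}}^{\sigma})$ has no closed form, so the strict unimodality of $\{2K(\sigma)-1\}/\sigma^{2}$ and of $K(\sigma)/\sigma^{2}$ on the relevant interval — and hence the exact minimising values $0.92$ and $1.02$ and the induced numbers $4.54$ and $0.51$ — must be obtained by numerical evaluation of the one-dimensional integral in Corollary~\ref{corr:IF_Z_gauss} rather than analytically; everything else is the elementary bookkeeping and comparative-statics argument sketched above.
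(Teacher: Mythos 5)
Your proof is correct and follows essentially the same route as the paper's: both reduce $\textsc{\small uRIF}_{2}(h;\sigma)$ to $\{1+1/\textsc{\small IF}(h,Q_{\textsc{ex}})\}\pi_{\textsc{z}}^{\sigma}(1/\varrho_{\textsc{z}}^{\sigma})-1/\textsc{\small IF}(h,Q_{\textsc{ex}})$, obtain (iii) from $\pi_{\textsc{z}}^{\sigma}(1/\varrho_{\textsc{z}}^{\sigma})\geq 1$, obtain (i) by numerically minimizing $\{2\pi_{\textsc{z}}^{\sigma}(1/\varrho_{\textsc{z}}^{\sigma})-1\}/\sigma^{2}$, and obtain (ii) from a monotone-comparative-statics argument plus the limit $\textsc{\small IF}(h,Q_{\textsc{ex}})\to\infty$, with the specific constants $0.92$, $1.02$, $4.54$, $0.51$ necessarily coming from numerical evaluation in both treatments. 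The only real difference is in (ii): the paper differentiates the first-order condition in $\textsc{\small IF}(h,Q_{\textsc{ex}})$, using convexity of $\textsc{\small uRCT}_{2}$ in $\sigma$ and the sign of the derivative of $\{2\pi_{\textsc{z}}^{\sigma}(1/\varrho_{\textsc{z}}^{\sigma})-1\}/\sigma^{2}$ beyond $0.92$, whereas your Topkis argument on $\{K(\sigma)-s\}/\sigma^{2}$ extracts the same monotonicity of $\sigma_{\mathrm{opt}}$ from the cross-partial $2/\sigma^{3}>0$, which is a marginally cleaner packaging of the same computation.
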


\begin{proof}
[Proof of Proposition~\ref{Cor:CT}]We consider minimizing $\textsc{\small uRCT}%
_{2}(h;\sigma)$ with respect to $\sigma$. Then,
\[
\textsc{\small uRCT}_{2}(h;\sigma)=\frac{\left \{  1+\textsc{\small IF}\left(  h,Q_{\textsc{ex}}\right)  \right \}  }{2\textsc{\small IF}\left(  h,Q_{\textsc{ex}}\right)  }%
\frac{\textsc{\small IF}(h,Q_{\pi};\sigma)}{\sigma^{2}}+\frac{\textsc{\small IF}\left(
h,Q_{\textsc{ex}}\right)  -1}{2\sigma^{2}\textsc{\small IF}\left(  h,Q_{\textsc{ex}}\right)  }.
\]
To obtain Part (i), we note that $\textsc{\small uRCT}_{2}(h;\sigma)=\textsc{\small IF}(h,Q_{\pi};\sigma)/\sigma^{2}$ when $\textsc{\small IF}\left(  h,Q_{\textsc{ex}}\right)  =1$. We define $H(\sigma)=\textsc{\small IF}(h,Q_{\pi}%
;\sigma)/\sigma^{2}$. Using Lemma 5 in \cite{PittSilvaGiordaniKohn(12)}, one can
verify that $H(\sigma)$ is minimized at $\sigma_{\text{opt}}=0.92$ and that
$\partial^{2}\left \{  H(\sigma)\  \right \}  /\left(  \partial \sigma \right)
^{2}>0$. The numerical values of Part (i) at $\sigma_{\text{opt}}=0.92$ can be
found in \cite{PittSilvaGiordaniKohn(12)}. To obtain Part~(ii), we note that
\begin{align}
\partial \textsc{\small uRCT}_{2}(h;\sigma)/\partial \sigma &  =\left \{  1+\textsc{\small IF}%
\left(  h,Q_{\textsc{ex}}\right)  \right \}  /\left \{  2\textsc{\small IF}\left(
h,Q_{\textsc{ex}}\right)  \right \}  \partial H(\sigma)/\partial \sigma
\label{eq:RCT_1stderiv}\\
&  \qquad-\left \{  \textsc{\small IF}\left(  h,Q_{\textsc{ex}}\right)  -1\right \}
/\left \{  \sigma^{3}\textsc{\small IF}\left(  h,Q_{\textsc{ex}}\right)  \right \}
,\nonumber \\
\partial^{2}\textsc{\small uRCT}_{2}(h;\sigma)/\left(  \partial \sigma \right)  ^{2}
&  =(1+\textsc{\small IF}\left(  h,Q_{\textsc{ex}}\right)  )/(2\textsc{\small IF}\left(
h,Q^{\text{EX}}\right)  )\partial^{2}H(\sigma)/\left(  \partial \sigma \right)
^{2}\nonumber \\
&  \qquad+3\left \{  \textsc{\small IF}\left(  h,Q_{\textsc{ex}}\right)  -1\right \}
/\left \{  \sigma^{4}\textsc{\small IF}\left(  h,Q_{\textsc{ex}}\right)  \right \}
,\nonumber
\end{align}
so that $\partial^{2}\textsc{\small uRCT}_{2}(h;\sigma)/\left(  \partial
\sigma \right)  ^{2}>0$ if $\textsc{\small IF}\left(  h,Q_{\textsc{ex}}\right)  \geq1$.
For the limiting case of Part (ii),
\[
\lim_{\textsc{\small IF}_{\textsc{ex}}\uparrow \infty}\partial \textsc{\small uRCT}%
_{2}(h;\sigma)/\partial \sigma=\{ \partial H(\sigma)/\partial \sigma
\}/2-1/\sigma^{3},
\]
which we can verify numerically is equal to $0$ at $\sigma_{\text{opt}}=1.02$.
For general values of $\textsc{\small IF}_{\textsc{ex}}$,
\begin{multline*}
\partial \left \{  \left.  \partial \textsc{\small uRCT}_{2}(h;\sigma)/\partial
\sigma \right \vert _{\sigma=\sigma_{\text{opt}}}\right \}  /\partial
\textsc{\small IF}\left(  h,Q_{\textsc{ex}}\right) \\
=-1/\left \{  \textsc{\small IF}\left(  h,Q_{\textsc{ex}}\right)  \right \}  ^{2}\left(
\left.  \partial H(\sigma)/\partial \sigma \right \vert _{\sigma=\sigma
_{\text{opt}}}/2+1/\sigma_{opt}^{3}\right)  <0,
\end{multline*}
where $\partial H(\sigma)/\partial \sigma>0$ for $\sigma>0.92$. Hence,
$\sigma_{\text{opt}}$ increases with $\textsc{\small IF}\left(  h,Q_{\textsc{ex}}\right)  $, which verifies Part~(ii). Finally, to obtain Part (iii), it is
straightforward to see that
\[
\textsc{\small uRIF}_{2}(h;\sigma)=\frac{\textsc{\small IF}(h,Q_{\pi};\sigma)+1}{2}%
+\frac{\textsc{\small IF}(h,Q_{\pi};\sigma)-1}{2\textsc{\small IF}\left(  h,Q_{\textsc{ex}}\right)  },
\]
so that $\textsc{\small uRIF}_{2}(h;\sigma)$ and $\textsc{\small uRCT}_{2}(h;\sigma
)=\textsc{\small uRIF}_{2}(h;\sigma)/\sigma^{2}$ are decreasing functions of
$\textsc{\small IF}\left(  h,Q_{\textsc{ex}}\right) $, holding $\sigma$ constant.
\end{proof}

%
%
%

\section{Asymptotic upper bound\label{section:asymptoticupperbound}}

This section illustrates, in the Gaussian noise case, the lower bound on
the inefficiency $\mathrm{\textsc{\protect\small lRIF}_{2}}(\sigma)=1/\{2\Phi(-\sigma/\surd{2})\}$ and the
exact relative inefficiency $\mathrm{\textsc{\protect\small aRIF}}(\sigma,l)=\Phi(-l/2)/\Phi\left \{
-\left(  2\sigma^{2}+l^{2}\right)  ^{1/2}/2\right \}$ obtained in
\cite{Sherlock2013efficiency} and discussed in Section 3.6 of the main paper.
Recall that $\mathrm{\textsc{\protect\small aRIF}}(\sigma,l)\rightarrow \textsc{\small lRIF}_{2}(\sigma)$ as
$l\rightarrow0$ and note that $\mathrm{\textsc{\protect\small aRIF}}(\sigma,l)\rightarrow
\Psi \left(  \sigma \right)  =\exp \left(  \sigma^{2}/4\right)  /\sigma^{2}$ as
$l\rightarrow \infty$. Figure \ref{fig:roberts_ifct} displays the
corresponding relative computing times $\mathrm{\textsc{\protect\small lRCT}_{2}}(\sigma)=\textsc{\small lRIF}_{2}(\sigma)/\sigma^{2}$ and $\mathrm{\textsc{\protect\small aRCT}}(\sigma;l)=\mathrm{\textsc{\protect\small aRIF}}(\sigma,l)/\sigma^{2}$. They are very similar in shape as a function of $\sigma$,
regardless of $l$, and are also minimized at similar values: $\textsc{\small lRCT}_{2}(\sigma)$ is minimized at $\sigma_{1}=1.68$ and $\Psi \left(
\sigma \right)  $ is minimized at $\sigma_{2}=2.00$, and $\textsc{\small lRCT}_{2}(\sigma_{1})=1.51$, $\textsc{\small lRCT}_{2}(\sigma_{2})=1.59$, $\Psi \left(\sigma_{1}\right)  =0.72$, $\Psi \left(\sigma_{2}\right)  =0.68$.
\begin{figure}[bh]
\begin{center}
\includegraphics[height=2.2in, width=\textwidth]{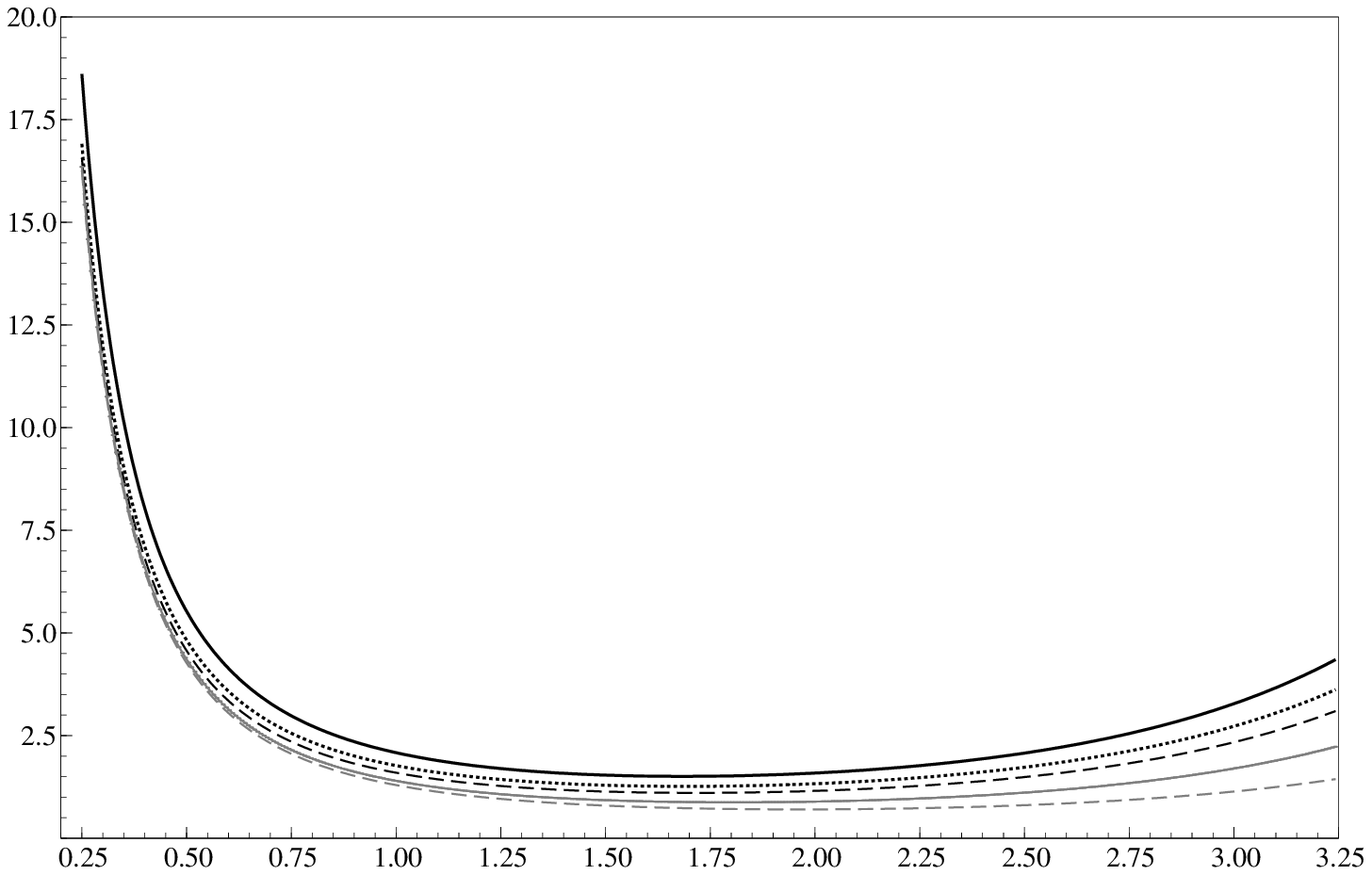}
\end{center}
\caption{Theoretical results for relative computing time.
$\textsc{\protect\small{lRCT}}_{2}(\sigma)$ (solid black) is displayed together with $\textsc{\protect\small aRCT}(\sigma,l)$  against $\sigma$.
$\textsc{\protect\small aRCT}(\sigma,l)$, the relative computing time for the limiting case of a random
walk proposal, is evaluated for $l=0.5$ (dotted black), $1$ (dashed black), $2.5$ (solid grey) and $10$ (dashed grey), where
$l$ is the scaling factor in the proposal.}
\label{fig:roberts_ifct}%
\end{figure}


\section{Simulation results\label{SS: AR plus noise}}

This section applies the pseudo-marginal algorithm to a linear Gaussian state-space model and presents additional simulation results for the stochastic volatility model discussed in the main paper. The linear Gaussian state-space model we consider is a first order autoregression AR$(1)$ observed with noise. In this case, $Y_{t}=X_{t}+\sigma _{\varepsilon }\varepsilon _{t}$, and
the state evolution is $X_{t+1}=\mu _{x}(1-\phi )+\phi X_{t}+\sigma _{\eta
}\eta _{t}$, where $\varepsilon _{t}$ and $\eta _{t}$ are standard normal
and independent. We take $\phi =0.8,$ $\mu _{x}=0.5,$ $\sigma _{\eta
}^{2}=1-\phi ^{2}$, so that the marginal variance $\sigma _{x}^{2}$ of the
state $X_{t}$ is $1$. We consider a series of length $T$, where $\sigma _{\varepsilon }^{2}=0.5$ is assumed known. The parameters of interest are therefore $\theta =(\phi ,\mu _{x},\sigma
_{x})$. The analysis is very similar to that of Section 4 of the  main
paper. However, for this state-space model, the likelihood can
be calculated by using the Kalman filter. This facilitates the
analysis of sections \ref{sec:emp_AR1} and \ref{sec:full_AR1} in two ways.
First, in the calculation of the log-likelihood error $Z=\log \widehat{p}_{N}(y\mid \theta )-\log
p(y\mid \theta )$, the true likelihood term is known rather than estimated.
Second, because the likelihood is known, we can directly examine the exact
chain $Q_{\textsc{ex}}$ and estimate the inefficiency $\textsc{\protect\small IF}(h,Q_{\textsc{ex}})$.

\subsection{Empirical results for the error of the log-likelihood estimator
\label{sec:emp_AR1}}

The analysis in this section mirrors that of Section 4.2 of the main paper.
We investigate empirically Assumptions 1 and 2 by examining the behaviour of
$Z=\log \widehat{p}_{N}(y\mid \theta )-\log p(y\mid \theta )$ for $T=40$, $%
300$ and $2700$. Corresponding values of $N$ are selected in each case to
ensure that the variance of $Z$ evaluated at the posterior mean $\overline{%
\theta }$ is approximately unity. The three plots on the left of Fig.~\ref%
{fig:AR1_hist} display the histograms corresponding to the density $%
g_{N}(z\mid \overline{\theta })$ of $Z$ for $\theta =\overline{\theta }$,
which is obtained by running $S=6000$ particle filters at this value. We
overlay on each histogram a kernel density estimate together with the
corresponding assumed density, $g_{\textsc{z}}^{\sigma }\left( z\right) $
of Assumption 2, where $\sigma ^{2}$ is the sample variance of $Z$ over the $%
S$ particle filters. For $T=40$, there is a slight discrepancy between the
assumed Gaussian densities and the true histograms representing $g_{N}(z\mid
\overline{\theta })$. In particular, whilst $g_{N}(z\mid \overline{\theta })$
is well approximated over most of its support, it is heavier tailed in the
left tail. For $T=300$ and $T=2700$, the assumed Gaussian densities are very
accurate.

\begin{figure}[h]
\captionsetup{font=small}
\begin{center}
\includegraphics[height=2.5in,width=5.5in]{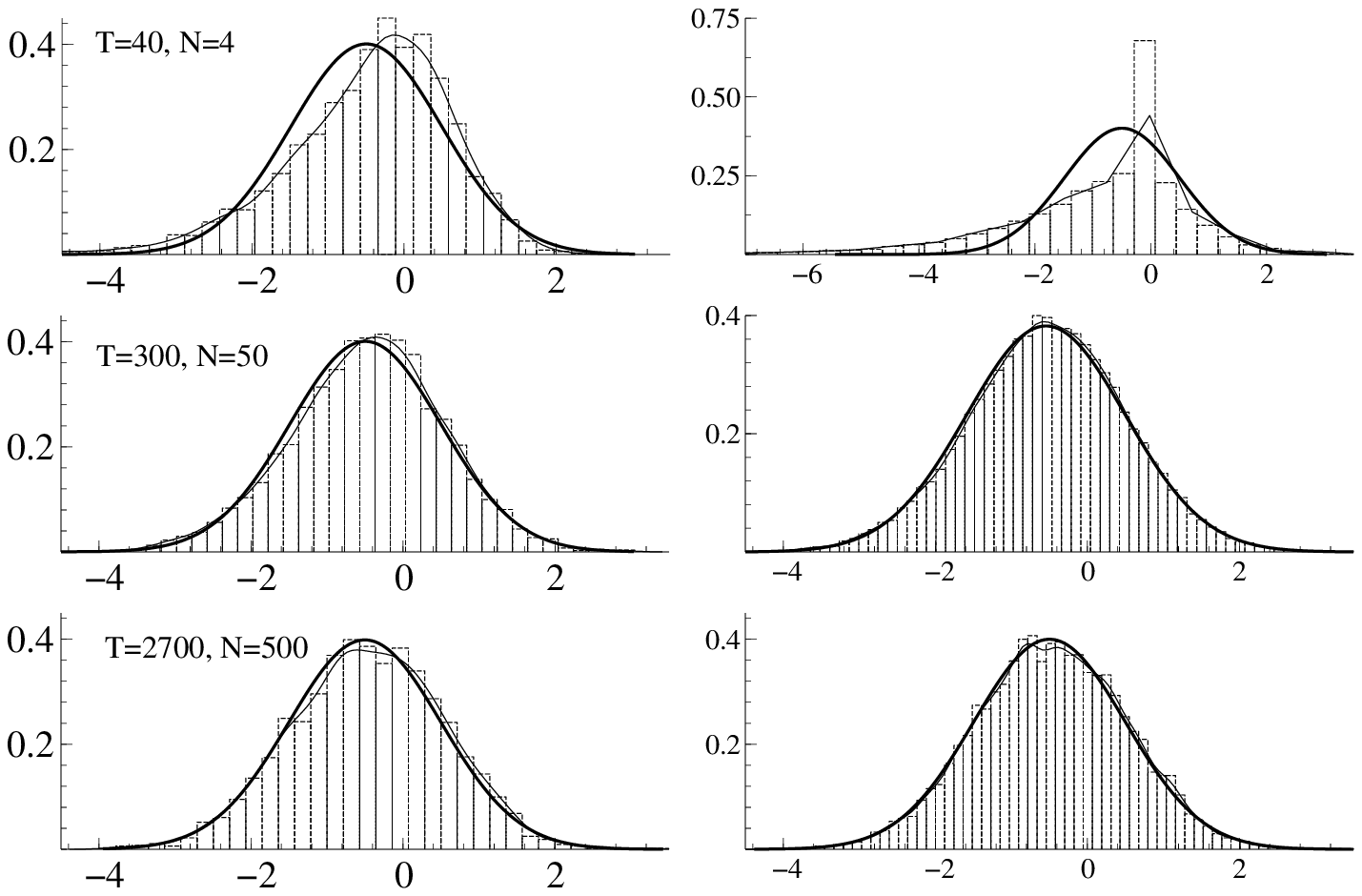}
\end{center}
\caption{AR(1) plus noise model experiment. Top to bottom: ${T=40,}$ $N=4$
(top), $T=300,$ $N=50$ (middle), $T=2700,$ $N=500$ (bottom). Left to right:
histograms and theoretical densities associated with $g_{N}(z\mid \theta)$ evaluated at the posterior mean $\overline{\theta}$
(left), over values from the posterior $\pi(\theta)$ (right). The densities $g_\textsc{z}^\sigma(z)$ are overlaid (solid lines).
}
\label{fig:AR1_hist}
\end{figure}

\begin{figure}[h]
\captionsetup{font=small}
\begin{center}
\includegraphics[height=2.5in, width=5.5in]{./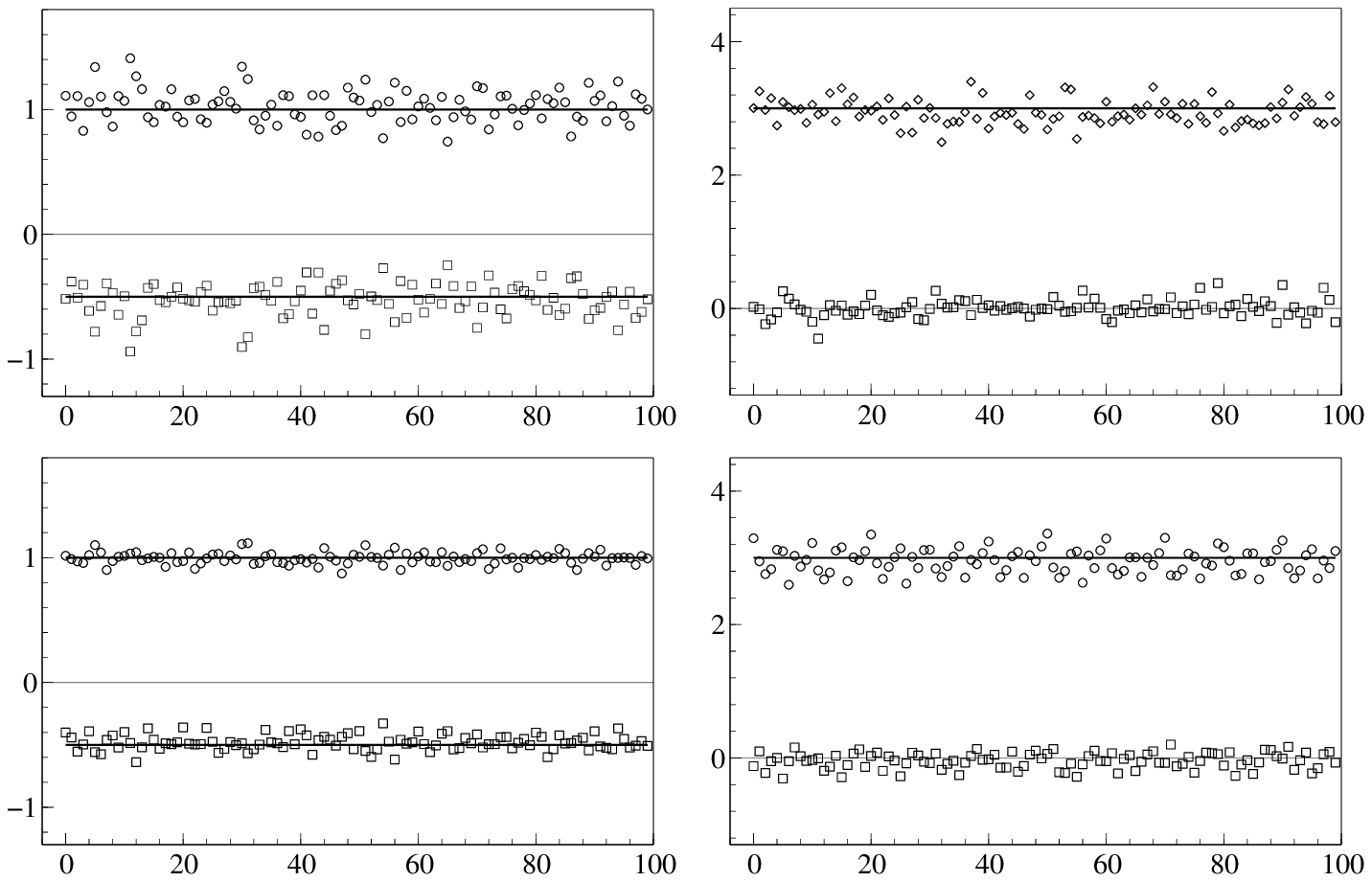}
\end{center}
\caption{AR(1) plus noise model experiment. Top to bottom: $T=300,$ $N=50$ (top), $T=2700,$ $N=500$ (bottom). Left to right:
mean (squares) and variance (circles) associated with $g_{N}(z\mid \theta)$ for $100$ different values of $\theta$ from $\pi(\theta)$ (left). The corresponding estimates of the third (squares) and fourth (circles) moments are displayed.
}
\label{fig:AR_cond_momentsZ}
\end{figure}

We also examine $Z$ when $\theta $ is distributed according to $\pi (\theta )
$. We record $100$ samples from $\pi (\theta )$, for $T=40$, $300$ and $2700$. For each of these samples, we run the particle filter $300$ times in order
to estimate the true likelihood at these values. The resulting histograms,
corresponding to the density ${\textstyle \int }\pi \left( \mathrm{d}\theta
\right) g_{N}(z\mid \theta )$ are displayed on the right panel of Fig.~\ref%
{fig:AR1_hist}. For $T=300$ and $T=2700$, the assumed densities $g_{\textsc{%
z}}^{\sigma }\left( z\right) $ are close to the corresponding histograms and
Assumptions 1 and 2 again appear to capture reasonably well the salient
features of the densities associated with $Z$.

\begin{figure}[h]
\captionsetup{font=small}
\begin{center}
\includegraphics[height=2.5in, width=5.5in]{./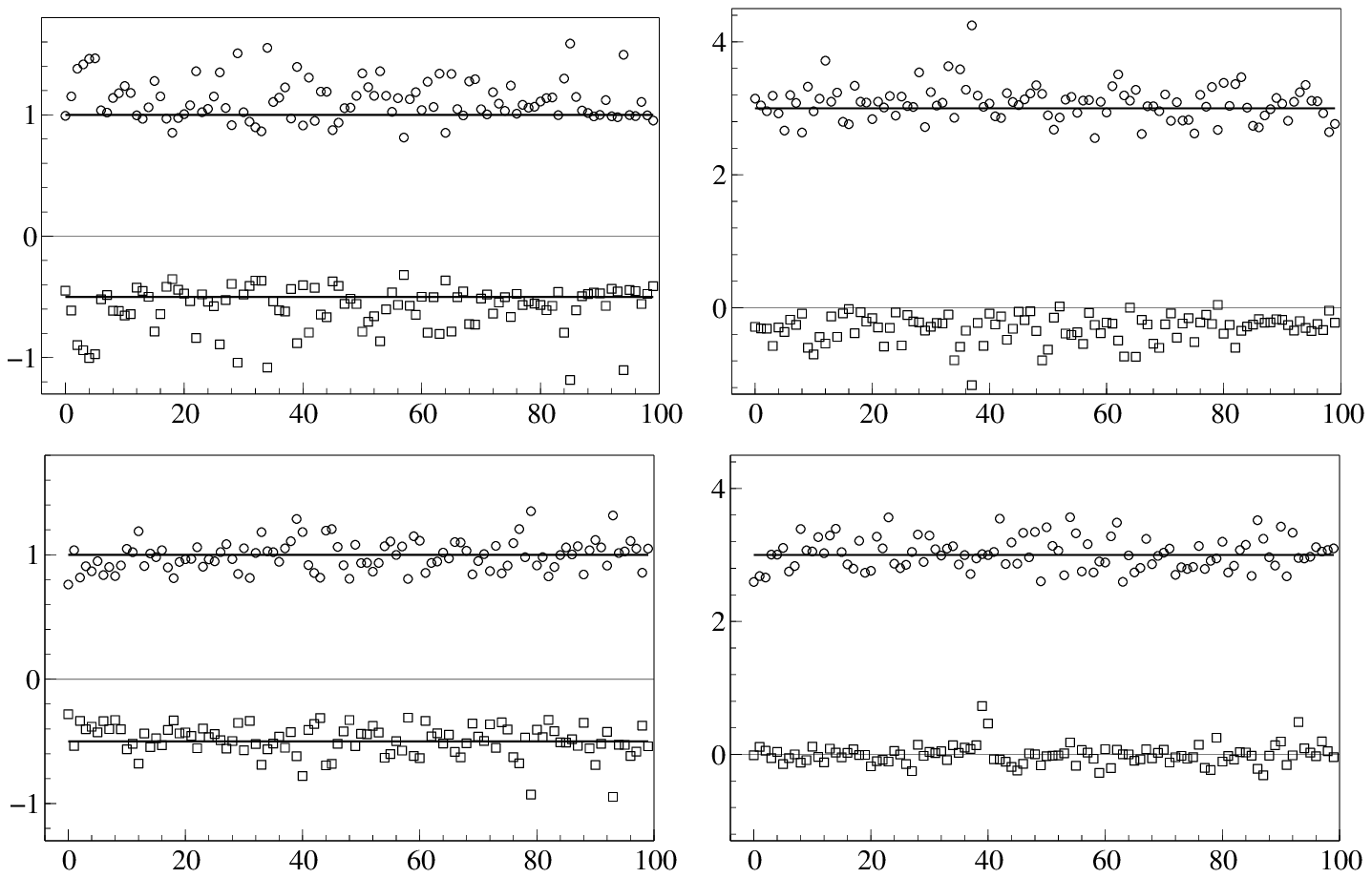}
\end{center}
\caption{Huang and Tauchen two factor model experiment for S\&P 500 data. Top to bottom: $T=300,$ $N=80$ (top), $T=2700,$ $N=700$ (bottom). Left to right:
mean (squares) and variance (circles) associated with $g_{N}(z\mid \theta)$ for $100$ different values of $\theta$ from $\pi(\theta)$ (left). The corresponding estimates of the third (squares) and fourth (circles) moments are displayed.}
\label{fig:HG_cond_momentsZ}
\end{figure}

It is important, in examining departures from Assumption 1, to consider the
heterogeneity of the conditional density $g_{N}(z\mid \theta )$ as $\theta $
varies over $\pi \left( \theta \right) $. In Fig. \ref{fig:AR_cond_momentsZ}%
, the conditional moments associated with the density $g_{N}(z\mid \theta )$
are estimated, based on running the particle filter independently $S=300$
times for each of $100$ values of $\theta $ from $\pi (\theta )$. We record
the estimates of the mean, the variance and the third and fourth central
moments at each value of $\theta $, for $T=300$ and $T=2700$. There is a
small degree of variability for $T=300$ around the values that we would
expect which are $-0.5,$ $1,0$ and $3$ corresponding to $g_{\textsc{z}%
}^{\sigma }\left( z\right) $ where $\sigma =1$. This variability reduces as $%
T$ rises to $2700$. A small degree variability is expected as these are
moments estimated from $S=300$ samples. This lack of heterogeneity explains
why the values of $Z$, marginalized over $\pi \left( \theta \right) $, on
the right hand side of Fig.~\ref{fig:AR1_hist}, are close to $g_{\textsc{z}%
}^{\sigma }\left( z\right) $ for time series of moderate and large length. Figure \ref{fig:HG_cond_momentsZ} records a similar experiment for the stochastic
volatility model and data considered in Section 4 of the main paper. There is rather
more variability as the true value of the likelihood in this case is
unknown and has to be estimated. However, the results are similar and the
variability again reduces as $T$ rises to $2700$.

\subsection{Empirical results for the pseudo-marginal algorithm\label%
{sec:full_AR1}}

The pseudo-marginal algorithm is applied to $T=300$ data. The true likelihood of the data is computed by the Kalman
filter as the model is a linear Gaussian state space model. This allows the
exact Metropolis--Hastings scheme $Q_{\textsc{ex}}$ to
be implemented so that the corresponding inefficiency $\textsc{\protect\small IF}\left(
h,Q_{\textsc{ex}}\right) $ can be easily estimated. We consider varying $N$ so that the standard
deviation $\sigma (\overline{\theta };N)$ of the log-likelihood estimator
varies. The grid of values that we consider for $N$ is $\{11$, $16$, $22$, $%
31$, $43$, $60$, $83$, $116$, $161$, $224$, $312\}$, see Table \ref%
{tab:AR1_CTIF}. The value $N=60$ results in $\sigma (\overline{\theta };N)=0.92
$.

\begin{table}[ptbh]
\captionsetup{font=small, labelsep=period}
\caption{\textsl{{AR(1) plus noise example with proposal parameter } $\rho=0,T=300,\phi=0.8,\mu=0.5,\sigma_{x}^{2}=1$ and $\sigma_{\varepsilon}^{2}=0.5$.
 Inefficiencies $(\textsc{{\protect \small IF}})$ and computing times $(\textsc{{\protect \small CT=$N$ $\times$ IF}})$ \textsl{shown for }$(\phi,\mu,\sigma_{x})$ and
marginal probabilities of acceptance. See Fig. \ref{fig:prA_AR1_allphi} and Fig.
\ref{fig:AR1_RCT}.}}%
\label{tab:AR1_CTIF}%
\begin{center}
\small
\begin{tabular}
[c]{lllllllll}%
\multicolumn{2}{l}{$Q_{\textsc{ex}}$} & IF$(\phi)$ & IF$(\mu)$ & IF$(\sigma
_{x})$ &  &  &  & pr$(Acc)$\\
&  & $2.5845$ & $2.5040$ & $2.4163$ &  &  &  & $0.7678$\\
\multicolumn{2}{l}{$Q$} &  &  &  &  &  &  & \\
$N$ & $\sigma(\overline{\theta};N)$ & IF$(\phi)$ & IF$(\mu)$ & IF$(\sigma_{x})$
& CT$(\phi)$ & CT$(\mu)$ & CT$(\sigma_{x})$ & pr$(Acc)$\\
$11$ & $2.2886$ & $136.32$ & $132.41$ & $128.66$ & $1499.5$ & $1456.5$ &
$1415.3$ & $0.11424$\\
$16$ & $1.8692$ & $61.403$ & $63.756$ & $66.609$ & $982.45$ & $1020.1$ &
$1065.7$ & $0.19036$\\
$22$ & $1.6063$ & $37.256$ & $40.486$ & $37.367$ & $819.63$ & $890.68$ &
$822.07$ & $0.25549$\\
$31$ & $1.3412$ & $15.880$ & $18.099$ & $19.135$ & $492.29$ & $561.08$ &
$593.20$ & $0.32622$\\
$43$ & $1.1096$ & $11.320$ & $9.7400$ & $10.710$ & $486.75$ & $418.82$ &
$460.54$ & $0.39347$\\
$60$ & $0.9197$ & $7.5040$ & $8.0428$ & $7.6168$ & $450.24$ & $482.57$ &
$457.01$ & $0.45933$\\
$83$ & $0.8058$ & $5.7253$ & $5.5841$ & $5.9348$ & $475.20$ & $463.48$ &
$492.59$ & $0.50885$\\
$116$ & $0.6828$ & $4.3756$ & $4.7106$ & $4.1693$ & $507.57$ & $546.43$ &
$483.63$ & $0.56621$\\
$161$ & $0.5828$ & $3.8112$ & $4.2379$ & $3.6388$ & $613.61$ & $682.30$ &
$585.84$ & $0.60160$\\
$224$ & $0.4838$ & $3.2711$ & $3.1605$ & $3.3134$ & $732.73$ & $707.94$ &
$742.19$ & $0.63562$\\
$312$ & $0.4096$ & $3.0774$ & $3.4768$ & $2.8355$ & $960.14$ & $1084.8$ &
$884.67$ & $0.65793$%
\end{tabular}
\end{center}
\end{table}

\begin{table}[ptbh]
\captionsetup{font=small, labelsep=period}
\caption{\textsl{{AR(1) plus noise example with proposal parameter }$\rho=0.9$.
Other settings identical to Table \ref{tab:AR1_CTIF}.}}%
\label{tab:AR1_CTIF_rho_pt9}%
\begin{center}
\small
\begin{tabular}
[c]{lllllllll}%
\multicolumn{2}{l}{$Q_{\textsc{ex}}$} & IF$(\phi)$ & IF$(\mu)$ & IF$(\sigma
_{x})$ &  &  &  & pr$(Acc)$\\
&  & $25.59$ & $22.21$ & $24.44$ &  &  &  & $0.87717$\\
\multicolumn{2}{l}{$Q$} &  &  &  &  &  &  & \\
$N$ & $\sigma(\overline{\theta};N)$ & IF$(\phi)$ & IF$(\mu)$ & IF$(\sigma_{x})$
& CT$(\phi)$ & CT$(\mu)$ & CT$(\sigma_{x})$ & pr$(Acc)$\\
$11$ & $2.2886$ & $594.64$ & $488.30$ & $639.04$ & $6541.1$ & $5371.3$ &
$7029.5$ & $0.12579$\\
$16$ & $1.8692$ & $157.49$ & $183.78$ & $182.07$ & $2519.9$ & $2940.4$ &
$2913.1$ & $0.20410$\\
$22$ & $1.6063$ & $126.87$ & $115.84$ & $125.37$ & $2791.2$ & $2548.6$ &
$2758.2$ & $0.27279$\\
$31$ & $1.3412$ & $69.541$ & $67.421$ & $71.982$ & $2155.9$ & $2089.9$ &
$2231.5$ & $0.35385$\\
$43$ & $1.1096$ & $53.053$ & $62.344$ & $58.002$ & $2281.1$ & $2680.9$ &
$2494.0$ & $0.42577$\\
$60$ & $0.9197$ & $49.351$ & $47.476$ & $45.194$ & $2961.1$ & $2848.6$ &
$2711.6$ & $0.49610$\\
$83$ & $0.8058$ & $37.709$ & $29.550$ & $38.266$ & $3129.8$ & $2452.7$ &
$3176.1$ & $0.55764$\\
$116$ & $0.6828$ & $29.360$ & $36.943$ & $34.892$ & $3405.8$ & $4285.4$ &
$4047.4$ & $0.61174$\\
$161$ & $0.5828$ & $28.277$ & $27.883$ & $29.864$ & $4552.6$ & $4489.2$ &
$4808.1$ & $0.65704$\\
$224$ & $0.4838$ & $27.770$ & $29.471$ & $30.533$ & $6220.5$ & $6601.5$ &
$6839.4$ & $0.69674$\\
$312$ & $0.4096$ & $29.231$ & $25.549$ & $29.967$ & $9120.2$ & $7971.4$ &
$9349.8$ & $0.73057$%
\end{tabular}
\end{center}
\end{table}

We transform each of the parameters to the real line so that $\Psi =k(\theta
)$, where both $\theta $ and $\Psi $ are three dimensional vectors, and
place a Gaussian prior on $\Psi $ centred at zero with a large variance. We
use the autoregressive Metropolis proposal $q(\Psi ,\Psi ^{\ast })$
\begin{equation*}
\Psi ^{\ast }=(1-\rho )\widehat{\Psi }+\rho \Psi +(1-\rho ^{2})^{1/2}\{(\nu
-2)/\nu \}^{1/2}\Sigma ^{1/2}t_{\nu },
\end{equation*}%
for both the pseudo-marginal algorithm and exact
likelihood schemes,
where $\widehat{\Psi }$ is the mode of the log-likelihood obtained from the
Kalman filter and the covariance $\Sigma $ is the negative inverse of the
second derivative of the log-likelihood at the mode.$\ $Here $t_{\nu }$
denotes a standard multivariate t-distributed random variable with $\nu $
degrees of freedom. We set $\nu =5$. We use this autoregressive proposal
with the scalar autoregressive parameter $\rho $ chosen as one of $\{0$, $0.4
$, $0.6$, $0.9\}$. We first apply this proposal, for the four values of $%
\rho $, using the known likelihood in the Metropolis scheme and estimate the
inefficiency for each of the parameters $\theta =(\mu _{x},\phi ,\sigma
_{\eta })$.

\begin{figure}[h]
\captionsetup{font=small}
\begin{center}
\includegraphics[height=2.5in, width=5.5in]{./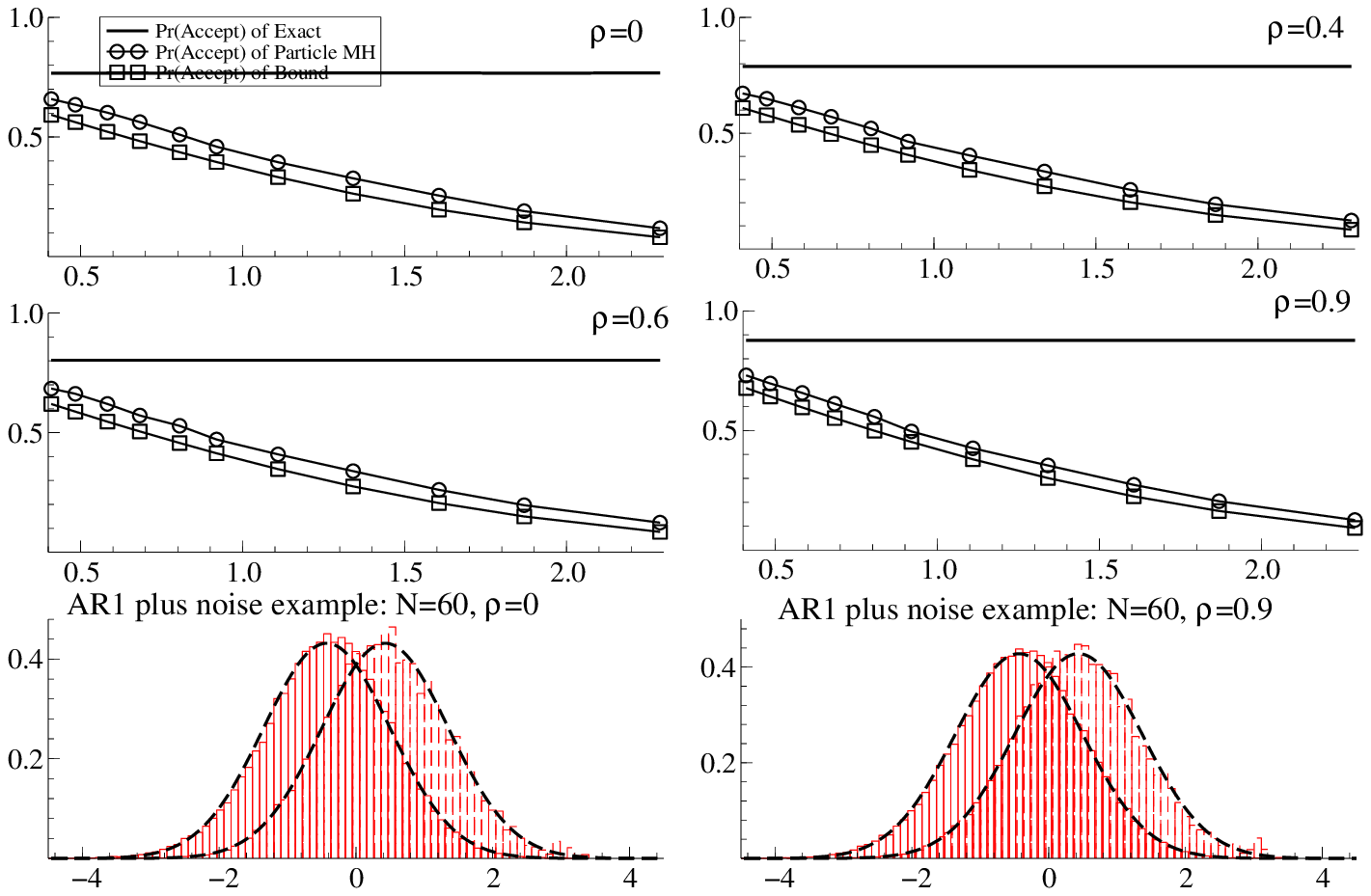}
\end{center}
\caption{
AR(1) plus noise example with $T=300,\phi=0.8,\mu=0.5,\sigma_{x}^{2}=1,\sigma_{\varepsilon}^{2}$ fixed at $0.5$. Marginal acceptance probabilities displayed against $\sigma(\overline{\theta};N)$. The estimated (constant) marginal
probabilities of acceptance for $Q_{\textsc{ex}}$ are
shown (solid line) together with the estimated probabilities (circles) from $Q$. The lower
bound (squares) is given as the probability from the exact scheme times $2\Phi
(-\sigma/\surd{2})$. The proposal autocorrelations are $\rho
=0,0.4,0.6$ and $0.9$. See Tables \ref{tab:AR1_CTIF} and \ref{tab:AR1_CTIF_rho_pt9}. Bottom: Histograms for the accepted
and proposed values of $Z$, the log-likelihood error for $\rho=0$ (left) and for $\rho=0.9$ (right). The
theoretical Gaussian densities for the proposal $\varphi (  -\sigma
^{2}/2,\sigma^{2})$  and
the accepted values $\varphi (  \sigma^{2}/2,\sigma^{2})$ are overlaid where $\sigma=\sigma(\overline{\theta};N)$.
}
\label{fig:prA_AR1_allphi}
\end{figure}

\begin{figure}[h]
\captionsetup{font=small}
\begin{center}
\includegraphics[height=2.5in,width=5.5in]{./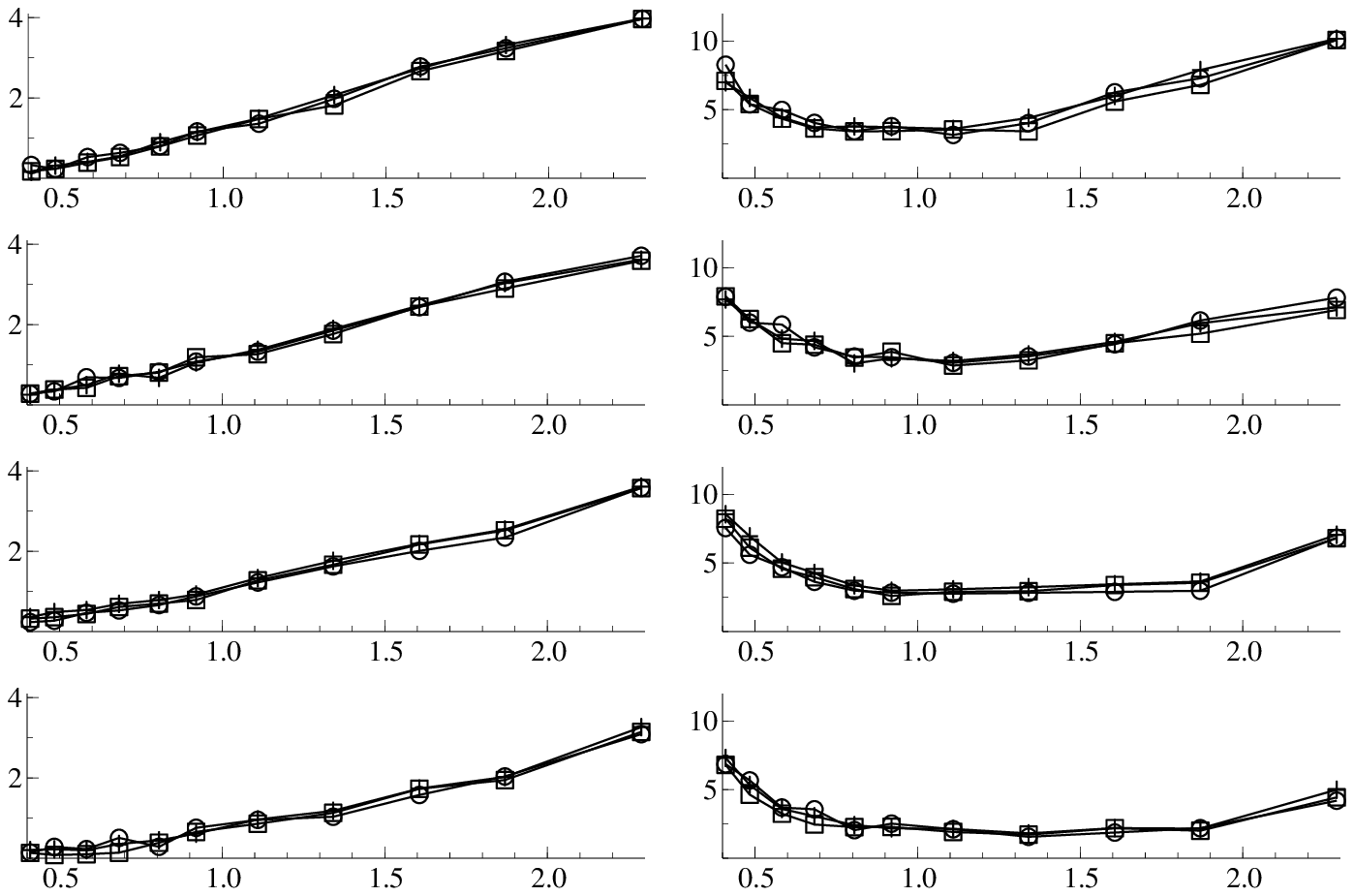}
\end{center}
\caption{AR(1) plus noise example with $T=300,$ $\protect\phi =0.8,\protect\mu %
=0.5,$ $\protect\sigma _{x}^{2}=1$ and $\protect\sigma _{\protect\varepsilon %
}^{2}$ fixed at $0.5$. Left: Logarithm $\textsc{rif}$ against $\protect%
\sigma (\overline{\protect\theta };N)$. Right: $\textsc{rct}$\textsl{\ }$=%
\textsc{rif}/\protect\sigma ^{2}(\overline{\protect\theta };N)$ against $%
\protect\sigma (\overline{\protect\theta };N)$. The three plots on all graphs
are for $\protect\phi $ (square), $\protect\mu $ (circle) and $\protect%
\sigma _{x}$ (cross). From Top to bottom: $\protect\rho =0,$ $0.4,0.6$ and $%
0.9$. Here $\protect\sigma (\overline{\protect\theta };N)$ is the standard
deviation of the log-likelihood estimator evaluated at the posterior mean $%
\overline{\protect\theta }$. See Tables \protect\ref{tab:AR1_CTIF} and
\protect\ref{tab:AR1_CTIF_rho_pt9}.}
\label{fig:AR1_RCT}
\end{figure}
Figure \ref{fig:prA_AR1_allphi} shows the acceptance probability for the
pseudo-marginal algorithm against $\sigma (\overline{\theta };N)$ for the four
values of the proposal parameter $\rho $. The lower bound for the acceptance
probabilities, as discussed at the end of Section 4.3 of the main paper, is
also displayed and there is close correspondence in all cases. The
histograms for the accepted and rejected values of $Z$, for $N=60$ when $%
\sigma (\overline{\theta };N)=0.92$, are also displayed. The approximating
asymptotic Gaussian densities, with $\sigma =0.92$, are superimposed. This
figure shows that the approximating densities correspond very closely to the
two histograms. It should be noted that these are the marginal values for $Z$
over the draws from the posterior $\pi (\theta )$ obtained by running the
pseudo-marginal scheme, rather than being based upon a fixed
value of the parameters.

Tables \ref{tab:AR1_CTIF} and \ref{tab:AR1_CTIF_rho_pt9} show the
pseudo-marginal algorithm results for $\rho =0$ and $\rho =0.9$. For the
independent Metropolis--Hastings proposal, it is clear from Table \ref%
{tab:AR1_CTIF}, that the computing time in minimised around $N=43$ or $60$,
depending on which parameter is examined, with the corresponding values of $%
\sigma (\overline{\theta };N)$ being $1.11$ and $0.92$, supporting the findings
that when an efficient proposal is used the optimal value of $\sigma $ is
close to unity. This is again supported by Fig. \ref{fig:AR1_RCT}, for
which the relative computing time ($\rho =0$ is the top right plot) is shown
against $\sigma (\overline{\theta };N)$. We note that the relative inefficiencies and computing times are straightforward to calculate as the exact chain inefficiencies for the three parameters have been calculated and are given in the top row of Table \ref{tab:AR1_CTIF}. Table \ref{tab:AR1_CTIF_rho_pt9} shows
the results for the more persistent proposal where $\rho =0.9$. In this
case, for all three parameters the optimal value of $N$ is around $31$, at
which $\sigma (\overline{\theta };N)=1.34$, the corresponding graph of the
relative computing time being given by the bottom right of Fig. \ref%
{fig:AR1_RCT}. It is clear that again the findings are consistent with the
discussion of 3.5 in the main paper. In particular, as $\rho $ increases,
then $\textsc{\protect\small IF}\left( h,Q^{\textsc{ex}}\right) $ should increase,
and, from Fig. \ref{fig:AR1_RCT}, it is clear that the optimal value of $%
\sigma (\overline{\theta };N)$ increases, the relative computing time decreases
for any given $\sigma (\overline{\theta };N)$. In addition, the relative computing time becomes more flat as
a function of $\sigma (\overline{\theta };N)$ as $\rho $ increases.

\bigskip

\section{Numerical procedures}\label{Sec:Numericalprocedures}
Under the Gaussian assumption, Corollary 3 specifies the function
$\varrho_{\textsc{z}}^{\sigma}$ and the term $\pi_{\textsc{z}}^{\sigma}\left(
1/\varrho_{\textsc{z}}^{\sigma}\right)  $ can be accurately evaluated using
numerical quadrature. This section explains how we numerically evaluate the
terms $\phi_{\textsc{z}}^{\sigma}$ and $\textsc{\protect\small IF}(1/\varrho_{\textsc{z}%
}^{\sigma},{\widetilde{Q}^{\textsc{z}}})$ which appear in the bounds of
Corollaries 1 and 2. The inefficiency  $\textsc{\protect\small IF}(1/\varrho
_{\textsc{z}}^{\sigma},{\widetilde{Q}^{\textsc{z}}})$ is finite by Lemma
3, because $\pi_{\textsc{z}}^{\sigma}\left(  1/\varrho
_{\textsc{z}}^{\sigma}\right)  $ is finite. The autocorrelations quickly descend to zero as a function of $n$, for all
$\sigma$. Hence, it is straightforward to estimate $\textsc{\protect\small IF}(1/\varrho_{\textsc{z}}^{\sigma
},{\widetilde{Q}^{\textsc{z}}})$ by the
appropriate summation of the autocorrelations, and to tabulate it against $\sigma$
for use in the bounds of Corollaries 1 and 2. The autocorrelation
$\phi_{\textsc{z}}^{\sigma}$, for $n=1$, is similarly tabulated.

From Lemma 3,
\[
\widetilde{Q}^{\textsc{z}}\left(  z,\mathrm{d}w\right)  =\frac{g\left(
w\right)  \min\{1,\exp(w-z)\}\mathrm{d}w}{\varrho_{\textsc{z}}\left(
z\right)  },\text{ \ }\widetilde{\pi}_{\textsc{z}}\left(  \mathrm{d}z\right)
=\frac{\pi_{\textsc{z}}\left(  \mathrm{d}z\right)  \varrho_{\textsc{z}}\left(
z\right)  }{\pi_{\textsc{z}}\left(  \varrho_{\textsc{z}}\right)  },
\]
so the autocorrelation at lag $n$ is%
\[
\phi_{n}\left(  \varrho_{\textsc{z}}^{-1},\widetilde{Q}^{\textsc{z}}\right)
=\frac{\left\langle \varrho_{\textsc{z}}^{-1},\left(  \widetilde{Q}%
^{\textsc{z}}\right)  ^{n}\varrho_{\textsc{z}}^{-1}\right\rangle
_{\widetilde{\pi}_{\textsc{z}}}{-}\left\{  \widetilde{\pi}_{\textsc{z}}\left(
\varrho_{\textsc{z}}^{-1}\right)  \right\}  ^{2}}{\mathbb{V}_{\widetilde{\pi
}_{Z}}\left(  \varrho_{\textsc{z}}^{-1}\right)  }%
\]
with
\begin{align}
\left\langle \varrho_{\textsc{z}}^{-1},\left(  \widetilde{Q}^{\textsc{z}%
}\right)  ^{n}\varrho_{\textsc{z}}^{-1}\right\rangle _{\widetilde{\pi
}_{\textsc{z}}} &  =%
{\textstyle\int}
\varrho_{\textsc{z}}^{-1}(z_{0})\varrho_{\textsc{z}}^{-1}(z_{n})\widetilde{\pi
}_{\textsc{z}}(\mathrm{d}z_{0})\left(  \widetilde{Q}^{\textsc{z}}\right)
^{n}\left(  z_{0},\mathrm{d}z_{n}\right)  \nonumber\\
&  =\pi_{\textsc{z}}(\varrho_{\textsc{z}})^{-1}%
{\textstyle\int}
\varrho_{\textsc{z}}^{-1}(z_{n})\widetilde{\pi}_{\textsc{z}}(\mathrm{d}%
z_{0})\left(  \widetilde{Q}^{\textsc{z}}\right)  ^{n}\left(  z_{0}%
,\mathrm{d}z_{n}\right)  .\label{eq:autocorrelationpho}%
\end{align}
The term $\pi_{\textsc{z}}(\varrho_{\textsc{z}})$ can be computed by
quadrature. The term (\ref{eq:autocorrelationpho}) can be also accurately
calculated by Monte Carlo integration, by simulating a large number $M$ of
i.i.d. samples $Z_{0}^{i}\sim\pi_{\textsc{z}}$ and then propagating each
sample through the transition kernel ${\widetilde{Q}^{\textsc{z}}}$ $n$ times
to obtain $Z_{n}^{i}\sim\pi_{\textsc{z}}(  \widetilde{Q}^{\textsc{z}%
})  ^{n}$, yielding the estimate $\frac{1}{M}\sum_{i=1}^{M}\varrho_{\textsc{z}}^{-1}(Z_{n}^{i}).$

\setlength{\bibsep}{0.3pt}


\end{document}